\newtheorem{theorem}{Theorem}
\newtheorem{pro}{Proposition}
\newtheorem{lem}{lemma}
\newtheorem{lemma}{Lemma}
\newtheorem{remark}{Remark}
\def\vu{{\bm{u}}}
\def\vv{{\bm{v}}}
\def\vx{{\bm{x}}}
\def\vy{{\bm{y}}}
\def\mA{{\bm{A}}}
\def\mX{{\bm{X}}}
\def\UrlSpecials{\do\~{\kern -.15em\lower .7ex\hbox{~}\kern .04em}} \catcode`~=13 
\newcommand{\nn}{\nonumber}
\newcommand{\calP}{\mathcal{P}}
\newcommand{\calR}{\mathcal{R}}
\newcommand{\bA}{\bm{A}}
\newcommand{\bh}{\bm{h}}
\newcommand{\bu}{\bm{u}}
\newcommand{\bv}{\bm{v}}
\newcommand{\bw}{\bm{w}}
\newcommand{\bx}{\bm{x}}
\DeclareMathAlphabet{\mathbsf}{OT1}{cmss}{bx}{n}
\DeclareMathAlphabet{\mathssf}{OT1}{cmss}{m}{sl}
\DeclareSymbolFont{bsfletters}{OT1}{cmss}{bx}{n}  
\DeclareSymbolFont{ssfletters}{OT1}{cmss}{m}{n}
\DeclareMathSymbol{\bsfGamma}{0}{bsfletters}{'000}
\DeclareMathSymbol{\ssfGamma}{0}{ssfletters}{'000}
\DeclareMathSymbol{\bsfDelta}{0}{bsfletters}{'001}
\DeclareMathSymbol{\ssfDelta}{0}{ssfletters}{'001}
\DeclareMathSymbol{\bsfTheta}{0}{bsfletters}{'002}
\DeclareMathSymbol{\ssfTheta}{0}{ssfletters}{'002}
\DeclareMathSymbol{\bsfLambda}{0}{bsfletters}{'003}
\DeclareMathSymbol{\ssfLambda}{0}{ssfletters}{'003}
\DeclareMathSymbol{\bsfXi}{0}{bsfletters}{'004}
\DeclareMathSymbol{\ssfXi}{0}{ssfletters}{'004}
\DeclareMathSymbol{\bsfPi}{0}{bsfletters}{'005}
\DeclareMathSymbol{\ssfPi}{0}{ssfletters}{'005}
\DeclareMathSymbol{\bsfSigma}{0}{bsfletters}{'006}
\DeclareMathSymbol{\ssfSigma}{0}{ssfletters}{'006}
\DeclareMathSymbol{\bsfUpsilon}{0}{bsfletters}{'007}
\DeclareMathSymbol{\ssfUpsilon}{0}{ssfletters}{'007}
\DeclareMathSymbol{\bsfPhi}{0}{bsfletters}{'010}
\DeclareMathSymbol{\ssfPhi}{0}{ssfletters}{'010}
\DeclareMathSymbol{\bsfPsi}{0}{bsfletters}{'011}
\DeclareMathSymbol{\ssfPsi}{0}{ssfletters}{'011}
\DeclareMathSymbol{\bsfOmega}{0}{bsfletters}{'012}
\DeclareMathSymbol{\ssfOmega}{0}{ssfletters}{'012}
\DeclareMathOperator{\supp}{supp}
\newcommand{\qednew}{\nobreak \ifvmode \relax \else
      \ifdim\lastskip<1.5em \hskip-\lastskip
      \hskip1.5em plus0em minus0.5em \fi \nobreak
      \vrule height0.75em width0.5em depth0.25em\fi}
\title{Solving Quadratic Systems with Full-Rank  Matrices Using  Sparse or Generative Priors\thanks{J. Chen is with the Department of Mathematics, The University of Hong Kong; 
M. K. Ng is with the Department of Mathematics, Hong Kong Baptist University; Z. Liu is with the School of Computer Science and Engineering, University of Electronic Science and Technology of China. {\it (Corresponding author: Michael K. Ng.)}}}
\author{Junren Chen,~Michael K. Ng, {\it Senior Member, IEEE},~Zhaoqiang Liu}
\date{}
\begin{document}
	\maketitle

\begin{abstract}
    The problem of recovering a signal $\bm{x}\in \mathbb{R}^n$ from a quadratic system $\{y_i=\bm{x}^\top\bm{A}_i\bm{x},\ i=1,\ldots,m\}$ with full-rank matrices $\bm{A}_i$ frequently arises in applications such as unassigned distance geometry and sub-wavelength imaging. With i.i.d. standard Gaussian matrices $\bm{A}_i$, this paper addresses the high-dimensional case where $m\ll n$ by incorporating prior knowledge of $\bm{x}$. First, we consider a $k$-sparse $\bm{x}$ and introduce the thresholded Wirtinger flow (TWF) algorithm that does not require the sparsity level $k$. TWF comprises two steps: the spectral initialization that identifies a point sufficiently close to $\bm{x}$ (up to a sign flip) when $m=O(k^2\log n)$, and the thresholded gradient descent which, when provided a good initialization, produces a sequence linearly converging to $\bm{x}$  with $m=O(k\log n)$ measurements. Second, we explore the generative prior, assuming that $\bm{x}$ lies in the range of an $L$-Lipschitz continuous generative model with $k$-dimensional inputs in an $\ell_2$-ball of radius $r$. With an estimate correlated with the signal, we develop the projected gradient descent (PGD) algorithm that also comprises two steps: the projected power method that provides an initial vector with $O\big(\sqrt{\frac{k \log L}{m}}\big)$ $\ell_2$-error given $m=O(k\log(Lnr))$ measurements, and the projected gradient descent that refines the $\ell_2$-error to $O(\delta)$ at a geometric rate when $m=O(k\log\frac{Lrn}{\delta^2})$. Experimental results corroborate our theoretical findings and show that: (i) our approach for the sparse case notably outperforms the existing provable algorithm sparse power factorization; (ii) leveraging the generative prior allows for precise image recovery in the MNIST dataset from a small number of quadratic measurements. 
\end{abstract}

\section{Introduction}
Quadratic systems form the basis for modeling a multitude of scientific problems, encompassing scenarios such as the turnpike and beltway problems \cite{Dakic:2000,Huang:Turnpike:2021}, phase retrieval \cite{Fienup:82,Millane:90,Drenth:crystallography:2007}, cryo-electron microscopy (EM) \cite{Huang:OMR_SC:2022}, sub-wavelength imaging \cite{Shechtman:11,Shechtman:12}, and power flow analysis \cite{Eminoglu:2008,Wang2017power}. In this paper, we delve into the quadratic system defined by measurements $\{y_i=\vx^\top\mA_i\vx,\ i=1,\ldots,m\}$, where $\vx\in\mathbb{R}^n$ represents the signal of interest, and $\mA_i$ is a {\it full-rank} measurement matrix. This measurement model arises in a number of applications:
\begin{itemize}
\item \textbf{The turnpike and beltway problems} focus on reconstructing relative positions of a set of points on a line or loop using unlabeled pairwise distances. By discretizing the 1D spatial domain into segments, the point locations are represented as an indicator vector $\vx\in\{0,1\}^n$, where $x_j=1$ signifies the presence of a point at the $j$-th segment. The distance distribution $\vy$ can then be expressed in a quadratic form $y_i=\vx^\top\mA_i\vx$ \cite{Dakic:2000,Huang:Turnpike:2021}.
\item \textbf{Sub-wavelength imaging} involves recovering sub-wavelength features of an object using partially-spatially-incoherent light. Due to the correlation of light field amplitudes, the intensity measurements $\vy$ can be represented in a full-rank quadratic form $y_i=\vx^\top\mA_i\vx$ with respect to the object's transmittance function $\vx$ \cite{Shechtman:11,Shechtman:12}.
\item \textbf{Cryo-EM} aims to reconstruct the 3D density map of a protein from noisy 2D projections with unknown viewing directions. The method of moments has been employed to extract moment-based features from random projections in the frequency domain \cite{KAM1980}. Specifically, the second-order autocorrelation features $\vy$ have been shown to have a quadratic formulation $y_i=\vx^\top\mA_i\vx$ in relation to the density map $\vx$ \cite{Huang:OMR_SC:2022,Mona:UVT:2020,Mona:SUVT:2019}.
\end{itemize}

Dubbed as ``phase retrieval'', the quadratic system $y_i=\bm{x}^\top(\bm{a}_i\bm{a}_i^\top)\bm{x},~i\in [m]$ with rank-1 measurement matrices $\{\bm{a}_i\bm{a}_i^\top\}_i$ has been studied extensively in the literature (e.g., \cite{candes2015phase,Candes:PhaseLift:2013,chen2017solving,chen2023phase,zhang2017nonconvex,Wang2018Trunc}). However, this formulation falls short in modelling the above applications characterized by full-rank measurement matrices $\{\mA_i\}_i$. Prior research on the full-rank quadratic system has primarily been restricted to low-dimensional cases requiring $m>n$ and does not leverage any prior information (e.g., \cite{Tu2016Procrustes,Huang:Quadratic:2020,Huang:Quad:2019,chen2022error}). In this paper, we   focus on the high-dimensional cases that allow for $m\ll n$ by incorporating a sparse or generative prior on $\vx$.\footnote{Throughout this paper, we assume $m=O(n)$.} We investigate i.i.d. standard Gaussian measurements, a widely studied scenario in phase retrieval, which serves as an important theoretical benchmark and sheds light on the understanding of more practical measurement ensembles.

\subsection{Problem Set-up}\label{sec:setup}

We study a high-dimensional quadratic system $$\{y_i=\bm{x}^\top\bm{A}_i\bm{x},~i=1,...,m\},$$ where the signal dimensionality $n$ may exceed or even be far greater than the number of measurements $m$. In the realm of high-dimensional statistical estimation and related signal recovery challenges, it is widely acknowledged that incorporating specific prior knowledge on $\bm{x}$ is essential for rendering the problem well-defined. This understanding holds true for various problems such as compressed sensing, high-dimensional linear regression, and phase retrieval (as exemplified in \cite{foucart2013invitation,chen2023high,negahban2012unified,bickel2009simultaneous,Wang:TAF:2018,raskutti2011minimax,bora2017compressed,cai2016optimal}, among others). In this paper, we explore two distinct types of priors: the classical sparse prior and the increasingly popular generative prior. The problem is rigorously formulated as follows:

\begin{itemize}
    \item \textbf{Sparse Quadratic System.}\footnote{In this term, ``sparse'' describes the property of the signal $\bm{x}$ (rather than matrix $\bm{A}_i$); the term ``generative quadratic system'' below is similar.}
    Assuming that $\bm{A}_1, \ldots, \bm{A}_m$ are i.i.d. Gaussian measurement matrices with i.i.d. standard normal entries, our objective is to recover the $k$-sparse signal $\bm{x}$ from measurements $y_i = \bm{x}^\top\bm{A}_i\bm{x}$ for $i = 1, \ldots, m$. It is worth noting that this model differs from phase retrieval, mainly due to the full-rank nature of the measurement matrices and its distinct applications. However, like phase retrieval, this problem suffers from the trivial ambiguity of a sign flip, and it is impossible to distinguish  ${\pm \bm{x}}$. Consequently, the error metric quantifying the distance between the obtained solution $\bm{\hat{x}}$ and the desired signal $\bm{x}$ is defined as $\min_{i=0,1}\{\|\bm{\hat{x}}-(-1)^i\bm{x}\|_2\}$

\item  \textbf{Generative Quadratic System.} 
A generative prior enforces that $\bm{x}$ lies in the range of an $L$-Lipschitz generative model $G(\cdot):\mathbb{B}^k(r)\to \mathbb{R}^n$, which is expressed as $\bm{x}\in G(\mathbb{B}^k(r))$. We also investigate the task of recovering such $\bm{x}$ (with a generative prior) from the quadratic system involving i.i.d. Gaussian matrices, as previously described. Following conventions that are more suitable for specific applications like image recovery \cite{liu2022generative, liu2022misspecified}, we assume that we can identify a point that is positively correlated with $\bm{x}$ to allow for the exact recovery of $\bm{x}$ (see Remark \ref{rem:trivial}).

\end{itemize}
 
\subsection{Prior Art}
We review the most relevant works that are categorized as follows.

\subsubsection{\textbf{The Rank-1 Model --- (Sparse) Phase Retrieval}}

For rank-one quadratic systems  $y_i=|\bm{a}_i^\top\bm{x}|^2,~i\in[m]$ with Gaussian $\bm{a}_i$, Cand\`{e}s et al. demonstrated that the global optimum can be found using the Wirtinger Flow (WF) algorithm \cite{candes2015phase}. Alternatively, this problem can be formulated as semidefinite programming via a convex relaxation technique known as PhaseLift \cite{Candes:PhaseLift:2013}. The original works of WF and PhaseLift   require a sample complexity of $m=O(n\log n)$, while subsequent developments have  reduced the sample complexity to $m=O(n)$ \cite{chen2017solving,candes2014solving}. However, these methods cannot handle the high-dimensional scenario where $m<n$, since they do not consider any signal structure a priori.

In the high-dimensional case,  the sparse phase retrieval problem where the target signal $\bm{x}$ is $k$-sparse ($k\ll n$) was extensively studied. For uniqueness or stable recovery, a measurement number on the order of $O\left(k\log(n/k)\right)$ is   sufficient \cite{ELDAR2014473,IWEN2017135,Huang:NPR:2020,XIA20211}. However, we note that many computationally feasible algorithms, such as PhaseLift \cite{Li:SparsePR:2013} and non-convex methods \cite{cai2016optimal,Wang:TAF:2018,Yuan:SWF:2019,jaganathan2017sparse}, typically require a sample complexity of $O(k^2\log n)$. Specifically, the aforementioned non-convex approaches often involve two steps: obtaining an good initialization and iteratively refining the solution. The initialization step frequently becomes a bottleneck due to its suboptimal sample complexity of $O(k^2\log n)$, whereas various update rules for refinement can succeed with (near-)optimal sample complexity.  It is possible to use an impractical initialization method with a sample complexity of $O(k\log n)$ \cite{Liu:CPR:2021}, but the possibility of a practical algorithm that requires only $O(k\log n)$ measurements remains an open question.

\subsubsection{\textbf{The Full-Rank Quadratic System}}

The primary focus of this work revolves around quadratic systems expressed as $y_i=\bm{x}^\top\bm{A}_i\bm{x}$ with full-rank matrices $\{\bm{A}_i\}_i$. In the low-dimensional case ($m>n$) without any signal structure a priori, several approaches that require an order-wise optimal     $O(n)$ (sub-)Gaussian measurements have been proposed in \cite{Huang:Quadratic:2020,Thaker:Quadratic:2020,Fan:RegNewton:2022}. In particular, \cite{Huang:Quadratic:2020} extended the WF algorithm \cite{candes2015phase} from phase retrieval to the full-rank case of interest. Such an extension is notably non-trivial and demands significant modifications in both algorithmic design and proof methodology, as highlighted in \cite{Huang:Quadratic:2020}.

The main objective of this paper is to tackle the high-dimensional scenario ($m< n$), where certain prior knowledge or structural assumptions about $\bm{x}$ are available. Arguably, one of the most classical signal structures is sparsity, yet even the comprehension of the sparse quadratic system (as detailed in Section \ref{sec:setup}) remains highly incomplete. To the best of our knowledge, there has not been a dedicated work specifically investigating this problem with an algorithm that can provably reconstruct the signal. Rather, the only related works are those studying compressed sensing of {\it simultaneously} (row-)sparse and low-rank matrices, which is a more general problem. To see this, through a technique known as ``lifting'', one can write $$y_i=\bm{x}^\top\bm{A}_i\bm{x}=\langle \bm{A}_i,\bm{xx}^\top\rangle$$ and thus formulate the sparse quadratic system as the linear compressed sensing of $\bm{X}=\bm{xx}^\top$, which is a rank-1 matrix with at most $k$ non-zero rows. As a result, some prior   works do offer some insights or even provable solutions to the sparse quadratic system problem. Specifically, Oymak et al. demonstrated that the most intuitive convex approach of combining two regularizers is highly sub-optimal \cite{Oymak:SLM:2015}. We will introduce a non-convex algorithm called thresholded Wirtinger flow (TWF) that can provably reconstruct the signal, whereas many existing non-convex methods fall short in  (global) convergence guarantees of this type (e.g., those in \cite{FORNASIER2021125702,Fan:PGM:2019,Fan:SRQuadratic:2018,Eisenmann2021RiemannianTM}). The most related existing algorithm (with convergence guarantee comparable to ours) is sparse power factorization (SPF) that can recover  matrix $\mX=\vu\vv^*$ with $k_1$-sparse $\bm{u}$ and $k_2$-sparse $\bm{v}$ from $y_i=\langle \bm{A}_i,\bm{X}\rangle$ using Gaussian $\bm{A}_i$ \cite{Lee:SPF:2018}. Thus, SPF addresses a more general problem and naturally provides a provable solution to the sparse quadratic system. However, compared to our TWF, SPF has certain drawbacks: (i) SPF's global recovery guarantee is applicable only to rapidly decaying $\bm{u}$ and $\bm{v}$ (satisfying $\|\bm{u}\|_\infty =\Omega(\|\bm{u}\|_2),\|\bm{v}\|_\infty =\Omega(\|\bm{v}\|_2)$), 
whereas our theoretical result for TWF applies to a general $k$-sparse $\bm{x}$; (ii) SPF updates $(\bm{u},\bm{v})$ through hard thresholding pursuit \cite{foucart2011hard} that involves iterations, whereas our TWF employs simple thresholded gradient descent with a closed-form update; (iii) SPF requires knowledge of the sparsity level, whereas TWF does not; (iv) SPF (or its subsequent developments \cite{Geppert:SPF:2019,Eisenmann2021RiemannianTM,FORNASIER2021125702}) is inherently sub-optimal for the sparse quadratic system problem since it does not exploit the information that $\bm{u}=\bm{v}=\bm{x}$. We will provide experimental results to demonstrate that TWF significantly outperforms SPF in solving sparse quadratic systems.

\subsubsection{\textbf{Generative Priors for Quadratic Systems}}
In recent years, deep generative models have shown tremendous success in a wide range of real-world applications, leading to a growing interest in replacing the commonly adopted sparsity assumption with generative modeling assumption for solving high-dimensional inverse problems. It has been widely demonstrated to be effective in reducing the number of measurements, see the seminal work~\cite{bora2017compressed} and many subsequent works (e.g., \cite{dhar2018modeling,liu2020sample,liu2022non}). Specifically, the use of generative priors in phase retrieval has been considered in~\cite{hand2018phase, hyder2019alternating, aubin2020exact, shamshad2020compressed, Liu:CPR:2021, killedar2022compressive, liu2022misspecified}, but there has not been any work that explores the application of generative modeling to a full-rank quadratic system. 

\subsection{Our Contribution}
\subsubsection{\textbf{Sparse Quadratic System}} 

To solve a sparse quadratic system, we introduce the Thresholded Wirtinger Flow (TWF) algorithm consisting of two  steps: spectral initialization (Algorithm \ref{alg1}) and thresholded gradient descent (Algorithm \ref{alg2}). TWF does not require prior knowledge of the sparsity level $k$. The spectral initialization step needs $m=O(k^2\log n)$ measurements to provide an initialization point close to $\bm{x}$ (up to a sign flip; see Lemma \ref{lem4}). 
In contrast, with a sufficiently good initialization, the thresholded gradient descent step requires only $m=O(k\log n)$ measurements for linear convergence (see Lemma \ref{lem5}). This intriguingly aligns with the requirements of various non-convex algorithms designed for sparse phase retrieval, such as those detailed in \cite{cai2016optimal,jaganathan2017sparse,Wang:TAF:2018}, where the sub-optimal $O(k^2\log n)$ is also incurred in the initialization phase (see Remark \ref{rem:ksqu}). Our theoretical analyses are considerably more intricate than the low-dimensional case explored in \cite{Huang:Quadratic:2020}. For instance, we need to introduce a ``new'' iteration sequence that is more amenable to analysis, followed by demonstrating that the TWF sequence coincides with this new sequence with high probability (see Remark \ref{rem:technical}). In our experimental evaluations, TWF delivers notable improvements over the existing SPF algorithm (see Figure \ref{fig:ptc}).

\subsubsection{\textbf{Generative Quadratic System}} 

To solve a generative quadratic system, we propose the Projected Gradient Descent (PGD) approach that analogously contains two steps: the Projected Power Method (Algorithm \ref{alg3}) for initialization and the Projected Gradient Descent (Algorithm \ref{alg4}) for refinement. Notably, our approach makes the first solution to this specific problem. While it is acknowledged that our algorithm may not be entirely practical, its two steps both achieve a measurement number that is nearly optimal and proportional to $k$, as opposed to the previously conventional $k^2$ for the sparse prior. This achievement extends the insights gained in \cite{Liu:CPR:2021} from the rank-1 case (i.e., phase retrieval) to the full-rank case (see Remark \ref{rem:genepp}). This extension, owing to the fundamentally different model, requires very different techniques.

We close this section with an outline of the paper. We first introduce the notations and preliminary concepts in Section \ref{sec:nota_pre}. Next, we delve into our primary algorithms and present our theoretical findings for both the sparse and generative cases in Sections \ref{sec:TWF} and \ref{sec:PGD}, respectively. Experimental results are detailed in Section \ref{sec:exp}, and we draw our conclusions in Section \ref{sec:conclude}. We defer the technical lemmas for the sparse case (Lemmas \ref{lem6}-\ref{lem12}) to the appendix. Due to space limit, the technical lemmas for the generative case (Lemmas 13-14), the proofs of the main results for the generative case (Theorems \ref{thm:init_gen}-\ref{thm:generefine}), and the proofs of Lemmas \ref{lem6}-14 are provided in the supplementary material.

\section{Notations and Preliminaries}\label{sec:nota_pre}

\subsection{Notations}
 
Throughout the paper, we use regular letters to denote scalars, and boldface symbols to denote vectors and matrices. We use the convention $[n]=\{1,...,n\}.$ We denote the $(j, \ell)$-th entry of $\bm{A}_i$ as $a^{(i)}_{j\ell}$. For simplicity, we define  $\bm{\widetilde{A}}_i= \frac{1}{2}\big(\bm{A}_i+\bm{A}_i^\top\big)$ and note the useful relation $y_i = \bm{x}^\top\bm{\widetilde{A}}_i\bm{x}$. For vectors $\bm{x},\bm{y}\in \mathbb{R}^n$, we work with the $\ell_2$-norm $\|\bm{x}\|_2$, the max-norm $\|\bm{x}\|_\infty=\max_i|x_i|$, the $\ell_1$ norm $\|\bm{x}\|_1$, support set $\supp(\bm{x})=\{i\in[n]:x_i\neq 0\}$, and inner product $\langle\bm{x},\bm{y}\rangle = \bm{x}^\top\bm{y}$. Further given $\mathcal{S}\subset [n]$, we let $\bm{x}_\mathcal{S}$ be the $n$-dimensional vector obtained from $\bm{x}$ by retaining entries indexed in $\mathcal{S}$ while setting all others to zero. We let $\Sigma^n_k$ be the set of all $k$-sparse signal in $\mathbb{R}^n$, $\mathbb{S}^{n-1}=\{\bm{x}\in \mathbb{R}^n:\|\bm{x}\|_2=1\}$ be the standard Euclidean sphere, and $\mathbb{B}^k(r)=\{\bm{x}\in \mathbb{R}^k:\|\bm{x}\|_2\leq r\}$. Given matrices $\bm{A},\bm{B}\in \mathbb{R}^{n\times n}$, we let $\|\bm{A}\|_F$ be the Frobenius norm, $\|\bm{A}\|_{op}$ be the operator norm (that equals the largest singular value), $\langle\bm{A},\bm{B}\rangle=\text{Tr}(\bm{A}^\top\bm{B})$ be the inner product. Further given $\mathcal{S},\mathcal{T}\subset [n]$, we set $[\bm{A}]_{\mathcal{S},\mathcal{T}}=[\hat{a}_{ij}]$, where $\hat{a}_{ij}=a_{ij}$ if $(i,j)\in \mathcal{S}\times \mathcal{T}$, and $\hat{a}_{ij}=0$ otherwise. Similarly to many important prior works in the area (e.g., \cite{candes2015phase,cai2016optimal}), our theoretical analysis does not refine multiplicative constants, and we use the generic notations $C,c,C_1,c_1,C_2,c_2,...$ to denote absolute constants whose values may vary per appearance. Given two terms $T_1$ and $T_2$, we write $T_1\lesssim T_2$ or $T_1=O(T_2)$ if $T_1\leq CT_2$ for some $C$, and conversely write $T_1\gtrsim T_2$ or $T_1=\Omega(T_2)$ if $T_1\geq cT_2$ for some $c$. We use $T_1\asymp T_2$ to state that $T_1=O(T_2)$ and $T_2=\Omega(T_2)$ simultaneously hold.  

\vspace{-1em}
\subsection{Preliminaries}
For a random variable $X$, we define the sub-Gaussian norm $$\|X\|_{\psi_2} = \inf\Big\{t>0:\mathbbm{E}\exp(\frac{X^2}{t^2})\leq 2\Big\}$$ and sub-exponential norm $$\|X\|_{\psi_1} = \inf\Big\{t>0:\mathbbm{E}\exp(\frac{|X|}{t})\leq 2\Big\}.$$ The $X$ with  finite $\|X\|_{\psi_2}$ is said to be sub-Gaussian. Note that the sub-Gaussian $X$ exhibits an exponentially decaying probability tail, i.e., for any $t>0$, $$
        \mathbbm{P}(|X|\geq t)\leq 2\exp\Big(-\frac{ct^2}{\|X\|_{\psi_2}^2}\Big).$$ Similarly, $X$ with finite $\|X\|_{\psi_1}$ is sub-exponential and has the   tail bound  $$
     \mathbbm{P}(|X|\geq t)\leq 2\exp\Big(-\frac{ct}{\|X\|_{\psi_1}}\Big) $$ for any $t>0$. 
To relate sub-Gaussian norm and sub-exponential norm, for sub-Gaussian variables $X$ and $Y$ one has (see, e.g., \cite[Lem. 2.7.7]{vershynin2018high})\begin{equation}\label{2.3}
    \|XY\|_{\psi_1}\leq \|X\|_{\psi_2}\|Y\|_{\psi_2}.
\end{equation}
For an $n$-dimensional random vector $\bm{X}$, we let $\|\bm{X}\|_{\psi_2}=\sup_{\bm{v}\in\mathbb{S}^{n-1}}\|\bm{v}^\top\bm{X}\|_{\psi_2}$. The following Bernstein's inequality for the concentration of sub-exponential random variables will be recurring in our analyses.
\begin{lem}\label{bernstein}
    {\rm(Bernstein's inequality, \cite[Thm. 2.8.1]{vershynin2018high})} Let $X_1,...,X_N$ be independent, zero-mean, sub-exponential random variables, then for every $t\geq 0$ and some absolute constant $c$, we let $A=\sum_{i=1}^N\|X_i\|^2_{\psi_2}$, $B=\max_{1\leq i\leq N}\|X_i\|_{\psi_1}$ and have $$            \mathbbm{P}\big(\big|\sum_{i=1}^N X_i\big|\geq t\big)\leq 2\exp\big(-C\min\big\{\frac{t^2}{A},\frac{t}{B}\big\}\big).$$
\end{lem}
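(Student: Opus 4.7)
The plan is to prove the bound by the classical Chernoff--Cram\'{e}r exponential--moment method, with the two regimes of the exponent $c\min\{t^2/A,\,t/B\}$ emerging from the optimization of a free parameter against the moment generating function (MGF) bound of a single centered sub-exponential variable. I will work with the sub-exponential norm $\|\cdot\|_{\psi_1}$ throughout (for sub-exponential variables it is this norm, rather than $\|\cdot\|_{\psi_2}$, that is the natural finite scale). By Markov's inequality and independence, for any $\lambda>0$ in an admissible range to be specified below,
\[
\mathbbm{P}\Bigl(\sum_{i=1}^N X_i\geq t\Bigr)\leq e^{-\lambda t}\prod_{i=1}^N \mathbbm{E}\bigl[e^{\lambda X_i}\bigr].
\]
Applying the same estimate to $-\sum_i X_i$ and taking a union over the two one-sided events would then produce the two-sided form with the prefactor $2$.

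First I would establish the single--variable MGF estimate: for a centered sub-exponential $X$ there exist universal constants $c_0,C_0>0$ such that
\[
\mathbbm{E}\bigl[e^{\lambda X}\bigr]\leq \exp\bigl(C_0\lambda^2\|X\|_{\psi_1}^2\bigr)\quad\text{whenever }|\lambda|\leq c_0/\|X\|_{\psi_1}.
\]
To prove this, I would expand $e^{\lambda X}-1-\lambda X=\sum_{p\geq 2}(\lambda X)^p/p!$, use the standard moment equivalence $(\mathbbm{E}|X|^p)^{1/p}\leq Cp\|X\|_{\psi_1}$, sum the resulting geometric-type series in $|\lambda|\|X\|_{\psi_1}$, and invoke the centering $\mathbbm{E}X=0$ to kill the linear term. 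Taking products by independence then yields $\mathbbm{E}[e^{\lambda\sum_i X_i}]\leq \exp(C_0\lambda^2 A)$ on the common range $|\lambda|\leq c_0/B$, and the Chernoff bound there reads $\mathbbm{P}(\sum_i X_i\geq t)\leq \exp(-\lambda t+C_0\lambda^2 A)$. Finally, I would optimize $\lambda$: the unconstrained minimizer is $\lambda^\star=t/(2C_0 A)$; if this lies inside the admissible set (equivalently $t/A\lesssim 1/B$) the exponent becomes $-c\,t^2/A$, and otherwise snapping $\lambda$ to the endpoint $c_0/B$ yields $-c\,t/B$. This dichotomy reproduces exactly the $\min\{t^2/A,\,t/B\}$ structure of the claimed bound.

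The main obstacle, and essentially the only delicate piece, is the single--variable MGF estimate above, since the MGF of a sub-exponential variable is only finite in a neighbourhood of the origin and a naive Taylor expansion does not extend to all $\lambda$. The care lies in choosing $c_0$ small enough that the moment series converges geometrically, and in exploiting the centering: dropping the linear term is what upgrades the bound from a linear dependence on $\lambda\|X\|_{\psi_1}$ to the quadratic $\lambda^2\|X\|_{\psi_1}^2$, which in turn is what produces the sub-Gaussian regime $e^{-ct^2/A}$ for moderate $t$. Once this lemma is in hand, the remaining steps — multiplying MGFs across $i$, substituting into Markov, optimizing $\lambda$ against the admissibility constraint, and symmetrizing by applying the same argument to $\{-X_i\}$ — are entirely routine.
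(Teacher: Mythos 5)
Your proposal follows the standard Chernoff/MGF argument, which is exactly the route taken in the cited reference, Vershynin's Theorem~2.8.1: establish a quadratic bound $\mathbbm{E}[e^{\lambda X}]\leq\exp(C\lambda^2\|X\|_{\psi_1}^2)$ for $|\lambda|$ small relative to $1/\|X\|_{\psi_1}$, multiply across independent summands, and optimize $\lambda$ against the constraint to obtain the two-regime exponent. The paper does not reproduce a proof of this lemma (it simply cites Vershynin), and you have correctly noticed and quietly repaired the typo in the statement: the quantity $A$ should be $\sum_i\|X_i\|_{\psi_1}^2$ rather than $\sum_i\|X_i\|_{\psi_2}^2$, since sub-exponential variables need not have finite $\psi_2$-norm; with that correction your argument is complete and consistent with the reference.
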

As in many prior works (e.g., \cite{chen2017solving}), the Davis-Kahan Theorem will be useful in analysing spectral initialization.
\begin{lem}
    \label{Davis-Kahan}{\rm (Davis-Kahan $\sin\bm{\Theta}$ Theorem, e.g., \cite[Thm. 20]{chi2019nonconvex})} Let $\bm{Y},\bm{Y}_{\star}\in \mathbb{R}^{n\times n}$ be two symmetric matrices and we order the eigenvalues of $\bm{Y}_{\star}$ as $\lambda_1(\bm{Y}_{\star})\geq ...\geq \lambda_{r}(\bm{Y}_{\star})>\lambda_{r+1}(\bm{Y}_{\star})\geq ...\geq \lambda_n(\bm{Y}_{\star})$. Denote by $\bm{U}_{\star}\in \mathbb{R}^{n\times r}$ (resp. $\bm{U}\in \mathbb{R}^{n\times r}$) an orthonormal matrix whose columns are the first $r$ eigenvectors of $\bm{Y}_{\star}$ (resp. $\bm{Y}$), then if $\|\bm{Y}-\bm{Y}_{\star}\|_{op}< \lambda_r(\bm{Y}_{\star})-\lambda_{r+1}(\bm{Y}_{\star})$, we have 
    \begin{equation}\nonumber
        \|\bm{UU}^\top-\bm{U}_\star\bm{U}_\star^\top\|_{op}\leq \frac{\|\bm{Y}-\bm{Y}_{\star}\|_{op}}{\lambda_r(\bm{Y}_{\star})-\lambda_{r+1}(\bm{Y}_{\star})-\|\bm{Y}-\bm{Y}_{\star}\|_{op}}.
    \end{equation}
\end{lem}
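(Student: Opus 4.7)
The plan is to reduce the projector-difference bound to a bound on the ``off-diagonal'' block $\bm{M} := (\bm{U}_\star^\perp)^\top \bm{U}$, where $\bm{U}_\star^\perp \in \mathbb{R}^{n\times(n-r)}$ is the orthonormal completion of $\bm{U}_\star$ consisting of the bottom $n-r$ eigenvectors of $\bm{Y}_\star$. A standard identity, which I would verify by writing both projectors in the orthonormal basis $[\bm{U}_\star, \bm{U}_\star^\perp]$ and inspecting the resulting block structure, is
\begin{equation*}
    \|\bm{U}\bm{U}^\top - \bm{U}_\star \bm{U}_\star^\top\|_{op} = \|(\bm{U}_\star^\perp)^\top \bm{U}\|_{op} = \|\bm{M}\|_{op},
\end{equation*}
so it suffices to bound $\|\bm{M}\|_{op}$.

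The central step is to derive a Sylvester-type identity for $\bm{M}$. Let $\bm{\Lambda} = \diag(\lambda_1(\bm{Y}),\ldots,\lambda_r(\bm{Y}))$ and $\bm{\Lambda}_\star^\perp = \diag(\lambda_{r+1}(\bm{Y}_\star),\ldots,\lambda_n(\bm{Y}_\star))$, and set $\bm{E} := \bm{Y}-\bm{Y}_\star$. Starting from $\bm{Y}\bm{U} = \bm{U}\bm{\Lambda}$, left-multiplying by $(\bm{U}_\star^\perp)^\top$, substituting $\bm{Y} = \bm{Y}_\star + \bm{E}$, and using $(\bm{U}_\star^\perp)^\top \bm{Y}_\star = \bm{\Lambda}_\star^\perp (\bm{U}_\star^\perp)^\top$ (which follows from the choice of $\bm{U}_\star^\perp$ and symmetry of $\bm{Y}_\star$) gives
\begin{equation*}
    \bm{M}\bm{\Lambda} - \bm{\Lambda}_\star^\perp \bm{M} = (\bm{U}_\star^\perp)^\top \bm{E}\bm{U} =: \bm{N},
\end{equation*}
with $\|\bm{N}\|_{op} \le \|\bm{E}\|_{op}$ since both $\bm{U}$ and $\bm{U}_\star^\perp$ have orthonormal columns.

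To extract $\|\bm{M}\|_{op}$ from this identity, I would test it against the top singular triple $(\bm{u}_1,\sigma_1,\bm{v}_1)$ of $\bm{M}$, where $\sigma_1 = \|\bm{M}\|_{op}$, $\bm{M}\bm{v}_1 = \sigma_1 \bm{u}_1$, and $\bm{u}_1^\top \bm{M} = \sigma_1 \bm{v}_1^\top$. Forming $\bm{u}_1^\top(\bm{M}\bm{\Lambda} - \bm{\Lambda}_\star^\perp \bm{M})\bm{v}_1$ in two different ways yields
\begin{equation*}
    \bm{u}_1^\top \bm{N}\bm{v}_1 = \sigma_1 \bigl(\bm{v}_1^\top \bm{\Lambda}\bm{v}_1 - \bm{u}_1^\top \bm{\Lambda}_\star^\perp \bm{u}_1\bigr).
\end{equation*}
By the Rayleigh-quotient characterization, $\bm{v}_1^\top \bm{\Lambda}\bm{v}_1 \ge \lambda_r(\bm{Y})$ and $\bm{u}_1^\top \bm{\Lambda}_\star^\perp \bm{u}_1 \le \lambda_{r+1}(\bm{Y}_\star)$, so the bracketed term is at least $\lambda_r(\bm{Y}) - \lambda_{r+1}(\bm{Y}_\star)$.

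The final step invokes Weyl's inequality, $\lambda_r(\bm{Y}) \ge \lambda_r(\bm{Y}_\star) - \|\bm{E}\|_{op}$, to rewrite the gap purely in terms of $\bm{Y}_\star$. The standing assumption $\|\bm{E}\|_{op} < \lambda_r(\bm{Y}_\star) - \lambda_{r+1}(\bm{Y}_\star)$ then makes the bracketed term strictly positive, and combining with the Cauchy--Schwarz bound $|\bm{u}_1^\top \bm{N}\bm{v}_1| \le \|\bm{N}\|_{op} \le \|\bm{E}\|_{op}$ and dividing gives exactly the claimed inequality. I do not expect a serious obstacle beyond bookkeeping; the only place requiring genuine care is the reduction identity $\|\bm{U}\bm{U}^\top - \bm{U}_\star\bm{U}_\star^\top\|_{op} = \|(\bm{U}_\star^\perp)^\top \bm{U}\|_{op}$, which someone unfamiliar with the $\sin\bm{\Theta}$ formalism might find nontrivial but which reduces to a $2\times 2$ block computation once the basis $[\bm{U}_\star, \bm{U}_\star^\perp]$ is fixed.
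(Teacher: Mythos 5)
The paper does not prove this lemma: it is imported verbatim from \cite[Thm.~20]{chi2019nonconvex}, so there is no in-paper argument to compare against. That said, your proof is correct and follows one of the standard routes. A few remarks on the places you flagged as requiring care. The reduction identity $\|\bm{U}\bm{U}^\top-\bm{U}_\star\bm{U}_\star^\top\|_{op}=\|(\bm{U}_\star^\perp)^\top\bm{U}\|_{op}$ does hold, but it relies crucially on $\bm{U}$ and $\bm{U}_\star$ having the same number of columns $r$; with $\bm{A}=\bm{U}_\star^\top\bm{U}$, $\bm{B}=(\bm{U}_\star^\perp)^\top\bm{U}$ and $\bm{A}^\top\bm{A}+\bm{B}^\top\bm{B}=\bm{I}_r$, the CS decomposition reduces the projector difference to $2\times2$ blocks with eigenvalues $\pm\sigma_i(\bm{B})$ (this includes the degenerate cases $\sigma_i\in\{0,1\}$ and $n-r<r$), so the nonzero spectrum is $\{\pm\sigma_i(\bm{B})\}$ and the operator norm is $\sigma_1(\bm{B})$. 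The Sylvester identity $\bm{M}\bm{\Lambda}-\bm{\Lambda}_\star^\perp\bm{M}=\bm{N}$, the Rayleigh-quotient bounds $\bm{v}_1^\top\bm{\Lambda}\bm{v}_1\ge\lambda_r(\bm{Y})$ and $\bm{u}_1^\top\bm{\Lambda}_\star^\perp\bm{u}_1\le\lambda_{r+1}(\bm{Y}_\star)$, Weyl's inequality $\lambda_r(\bm{Y})\ge\lambda_r(\bm{Y}_\star)-\|\bm{E}\|_{op}$, and the final division (legitimate because the gap assumption forces the bracket to be strictly positive, and the case $\sigma_1=0$ is trivial) are all correct. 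One cosmetic point: $\bm{u}_1^\top\bm{N}\bm{v}_1$ a priori has a sign, but since it equals $\sigma_1$ times a positive quantity it is automatically nonnegative, so passing to $|\bm{u}_1^\top\bm{N}\bm{v}_1|\le\|\bm{E}\|_{op}$ and dividing is harmless.
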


\section{Solving Sparse Quadratic System via Thresholded Wirtinger Flow}\label{sec:TWF}
We propose our first algorithm ``thresholded Wirtinger flow (TWF)'' for solving sparse quadratic systems.\footnote{We focus on the real case here. We expect that the extension to the complex case is straightforward and still refer to the algorithm as TWF.}  Given a threshold $\tau>0$, we let $\mathcal{T}_\tau(\cdot)$ be the thresholding operator satisfying
\begin{equation}\nonumber
    \mathcal{T}_\tau(a)=0~\text{if}~|a|\leq \tau; \textnormal{ and }~|\mathcal{T}_\tau(a)-a|\leq \tau~\text{if}~|a|>\tau. 
\end{equation}
In particular, $\mathcal{T}_\tau(\cdot)$ can be either the standard hard thresholding or the soft thresholding operator. 

\subsection{Algorithm: Initialization and Refinement}
Our TWF algorithm consists of two steps, namely spectral initialization and thresholded gradient descent. We first present spectral initialization in Algorithm \ref{alg1}, which aims to provide an initial guess that is sufficiently close to $\bm{x}$. The intuition of using the leading eigenvector of (a sub-matrix of) $\bm{\widetilde{S}}_{in}$ is based on $\mathbbm{E}[\bm{\widetilde{S}}_{in}]=\bm{xx}^\top$ (Lemma \ref{lem2}), which resembles Equation (13) in \cite{Huang:Quadratic:2020} that studied the low-dimensional case ($m\geq n$). Nonetheless, addressing the high-dimensional sparse case requires additional nontrivial effort, specifically the support estimate in (\ref{defi:hatS_0}) and the restriction of the data matrix $\bm{\widetilde{S}}_{in}$ to $[\bm{\widetilde{S}}_{in}]_{\widehat{S}_0,\widehat{S}_0}$; the technical benefit of the latter is to allow for faster operator norm concentration (see Lemma \ref{lem6}). 
 \begin{algorithm} 
\caption{Spectral Initialization for TWF}\label{alg1}
\begin{algorithmic}[1]
 \Statex \textbf{Input:} $(\bm{A}_i,y_i)_{i=1}^n$; tuning parameter $\alpha$

 \item  Compute $\phi:=\left(\frac{1}{m}\sum_{i=1}^m y_i^2\right)^{1/4}$ to approximate $\|\bm{x}\|_2$ 

\item  
Compute $
    I_l = \frac{1}{m}\sum_{i=1}^m y_ia^{(i)}_{ll}$ for $l\in [n]$ and use the following to approximate $\supp(\bm{x})$: \begin{equation}\label{defi:hatS_0}
    \widehat{S}_0=\left\{l\in [n]:I_l > \alpha \phi^2\sqrt{\frac{\log n}{m}}\right\}.
\end{equation}

\item Construct an $n\times n$ matrix $\bm{S}_{in}=\frac{1}{m}\sum_{i=1}^m y_i\bm{A}_i$ and then symmetrize it to
$\bm{\widetilde{S}}_{in} = \frac{1}{2}\big(\bm{S}_{in}+\bm{S}_{in}^\top\big)=\frac{1}{m}\sum_{i=1}^m y_i\bm{\widetilde{A}}_i$. Then compute $\bm{\hat{v}}_1$ as the normalized leading eigenvector of $[\bm{\widetilde{S}}_{in}]_{\widehat{S}_0,\widehat{S}_0}$.
\Statex\textbf{Output:} $\bm{\hat{x}}_0=\phi \bm{\hat{v}}_1$.
\end{algorithmic}
\end{algorithm}

Next, we present TWF updates that further refine $\bm{\hat{x}}_0$. Analogously to the algorithm in \cite{Huang:Quadratic:2020}, it is based on the goal of minimizing the   $\ell_2$ loss 
$
    f(\bm{z})= \frac{1}{4m}\sum_{i=1}^m \big(\bm{z}^\top\bm{A}_i\bm{z}-y_i\big)^2$ 
via gradient descent,   and the gradient is given by\begin{equation}\label{defi:gradient}
    \nabla f(\bm{z})= \frac{1}{m}\sum_{i=1}^m\big(\bm{z}^\top\bm{A}_i\bm{z}-y_i\big)\bm{\widetilde{A}}_i\bm{z}.
\end{equation}
Furthermore, as developed in \cite{cai2016optimal} for sparse phase retrieval, we employ a thresholding step after each gradient descent step to incoporate the sparse prior into the recovery. The details are given in Algorithm \ref{alg2}.
\begin{algorithm} 
\caption{Thresholded Wirtinger Flow}\label{alg2}
\begin{algorithmic}[1]
 \Statex \textbf{Input:}   $(\bm{A}_i,y_i)_{i=1}^n$; thresholding operator $\mathcal{T}$;  step size $\mu$;   tuning parameter $\beta$; number of iterations $T$.

 \item Run Algorithm \ref{alg1} to get $\bm{\hat{x}}_0$. 

 \item \textbf{for} $t=0,1,...,T-1$

 Compute thresholded level at {$\bm{\hat{x}}_{t}$} 
\begin{equation}\label{235}
    \tau(\bm{\hat{x}}_t)=\Big[{\frac{\beta}{m^2}\sum_{i=1}^m\big(\bm{\hat{x}}_{t}^\top\bm{A}_i\bm{\hat{x}}_t-y_i\big)^2}\Big]^{1/2}\cdot\|\bm{\hat{x}}_t\|_2.
\end{equation}

Compute $\nabla f(\bm{\hat{x}}_t)$ as in (\ref{defi:gradient}), then update \begin{equation}
    \label{defi:twfupdate}\bm{\hat{x}}_{t+1}=\varphi(\bm{\hat{x}}_t):=\mathcal{T}_{\frac{\mu}{\phi^2}\tau(\bm{\hat{x}}_t)}\Big(\bm{\hat{x}}_t-\frac{\mu}{\phi^2}\nabla f(\bm{\hat{x}}_t)\Big).
\end{equation}

\noindent
\textbf{end}

\Statex\textbf{Output:} $\bm{\hat{x}}= \bm{\hat{x}}_{T}$.
\end{algorithmic}
\end{algorithm}

\subsection{Theory: Linear Convergence Guarantee} 
In the proof, without loss of generality, we assume that the 
$k$-sparse signal $\bm{x}$ satisfies $\|\bm{x}\|_2=1$ and $\supp(\bm{x})\subset S=[k]$.

\subsubsection{Analysing a variant using support knowledge} Our analyses of TWF resemble the strategy developed in \cite{cai2016optimal}, that is, we first analyze an ideal impractical algorithm requiring the support knowledge $\mathrm{supp}(\bm{x})\subset S=[k]$, and then prove that our TWF algorithm provides the same iteration sequence with high probability. In particular, we will first concentrate on an algorithm (referred to as TWF with support knowledge and abbreviated as TWF-S)  that produces an iteration sequence $\{\bm{\tilde{x}}_i:i=0,1,\ldots,T\}$ as follows:

\begin{itemize}
    \item \textbf{Initialization in TWF-S.} Let $\alpha$, $\phi$, $I_l$, $\bm{S}_{in}$ and $\bm{\widetilde{S}}_{in}$ be the same as in Algorithm \ref{alg1}. Compared to $\widehat{S}_0$ in (\ref{defi:hatS_0}), we further restrict it to $S=[k]$ and use\footnote{It will be clear in the proof of Lemma \ref{lem4} that $\widetilde{S}_0$ is non-empty with high probability, specifically see (\ref{eq:non_empty_set}) and (\ref{eq:non_empty_2}) therein.}  
    \begin{equation}\label{defi:Sselect}
        \widetilde{S}_0=\widehat{S}_0\cap[k]=\Big\{l\in [k]:I_l>\alpha\phi^2\sqrt{\frac{\log n}{m}}\Big\}.
    \end{equation}
    Then, we compute $\bm{\tilde{v}}_1$ as the normalized leading eigenvector of $[\bm{\widetilde{S}}_{in}]_{\widetilde{S}_0,\widetilde{S}_0}$, and use $\bm{\tilde{x}}_0:=\phi \bm{\tilde{v}}_1$ as the initialization.
    \item \textbf{Updates in TWF-S.} Let $\mathcal{T}$, $\mu$, $\beta$ be the same as Algorithm \ref{alg2}. 
    For $t=0,1,...,T-1$, we compute the threshold level at the current iteration point $\bm{\tilde{x}}_t$ in the same way as (\ref{235})
    \begin{equation}\label{defi:gradthre}
        \tau(\bm{\tilde{x}}_t)=\Big[{\frac{\beta}{m^2}\sum_{i=1}^m \big(\bm{\tilde{x}}_t^\top\bm{A}_i\bm{\tilde{x}}_t-y_i\big)^2}\Big]^{1/2}\cdot\|\bm{\tilde{x}}_t\|_2.
    \end{equation}
    We then update $\bm{\tilde{x}}_t$ to $\bm{\tilde{x}}_{t+1}$ using the thresholded gradient descent that differs from (\ref{defi:twfupdate}) by further restricting the gradient to $S=[k]$:
    \begin{equation}
\label{defi:Supdate}\bm{\tilde{x}}_{t+1}=\eta(\bm{\tilde{x}}_t):=\mathcal{T}_{\frac{\mu}{\phi^2}\tau(\bm{\tilde{x}}_t)}\Big(\bm{\tilde{x}}_t-\frac{\mu}{\phi^2}\nabla f(\bm{\tilde{x}}_t)_S\Big).
    \end{equation}
\end{itemize}
An important observation is the independence between the iteration sequence $\{\bm{\tilde{x}}_t\}_{t=0}^T$ and the majority of $\bm{A}_i$'s entries. 
\begin{pro}\label{pro1}
    The sequence   $\{\bm{\tilde{x}}_t:t=0,1,...,T\}$ is independent of $E':=\{a^{(l)}_{ij}:l\in [m],(i,j)\notin [k]\times [k]\}$.
\end{pro}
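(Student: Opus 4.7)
The plan is to prove the proposition by strong induction on $t$, establishing the stronger claim that each iterate $\bm{\tilde{x}}_t$ (i) has support contained in $[k]$ and (ii) is a measurable function only of the ``inside'' entries $E_{\rm in}:=\{a^{(l)}_{ij}:l\in[m],(i,j)\in[k]\times[k]\}$. Once this is shown, independence follows immediately because the entries of the $\bm{A}_l$'s are jointly independent standard normals, so $E_{\rm in}$ and $E'$ are independent, and any function of $E_{\rm in}$ is independent of $E'$.

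For the base case, I would first observe that since $\supp(\bm{x})\subset[k]$, one has $y_i=\bm{x}^\top\bm{A}_i\bm{x}=\sum_{j,\ell\in[k]}a^{(i)}_{j\ell}x_jx_\ell$, hence $\{y_i\}_{i=1}^m$ is a function of $E_{\rm in}$ alone; in particular $\phi$ is as well. Next, for $l\in[k]$ the diagonal index $(l,l)$ belongs to $[k]\times[k]$, so each $I_l$ used in forming $\widetilde{S}_0=\{l\in[k]:I_l>\alpha\phi^2\sqrt{\log n/m}\}$ depends only on $E_{\rm in}$, giving $\widetilde{S}_0\subseteq[k]$ measurable in $E_{\rm in}$. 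Finally, the restricted matrix $[\bm{\widetilde{S}}_{in}]_{\widetilde{S}_0,\widetilde{S}_0}$ has entries $\frac{1}{m}\sum_i y_i\tilde{a}^{(i)}_{j\ell}$ for $(j,\ell)\in\widetilde{S}_0\times\widetilde{S}_0\subseteq[k]\times[k]$, so the entire computation of $\bm{\tilde{v}}_1$, and thus $\bm{\tilde{x}}_0=\phi\bm{\tilde{v}}_1$ (supported in $\widetilde{S}_0\subseteq[k]$), depends only on $E_{\rm in}$.

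For the inductive step, assume $\bm{\tilde{x}}_t$ is supported in $[k]$ and is a function of $E_{\rm in}$ only. Then $\bm{\tilde{x}}_t^\top\bm{A}_i\bm{\tilde{x}}_t=\sum_{j,\ell\in[k]}a^{(i)}_{j\ell}(\tilde x_t)_j(\tilde x_t)_\ell$ depends only on $E_{\rm in}$, so $\tau(\bm{\tilde{x}}_t)$ in \eqref{defi:gradthre} is an $E_{\rm in}$-measurable scalar. The crucial point is the $S$-restricted gradient $\nabla f(\bm{\tilde{x}}_t)_S$ with $S=[k]$: its $j$-th coordinate for $j\in[k]$ is $\frac{1}{m}\sum_i(\bm{\tilde{x}}_t^\top\bm{A}_i\bm{\tilde{x}}_t-y_i)\sum_{\ell\in[k]}\tilde a^{(i)}_{j\ell}(\tilde x_t)_\ell$, whose index pair $(j,\ell)$ lies in $[k]\times[k]$; and for $j\notin[k]$ the coordinate is zero by the restriction. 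Hence $\bm{\tilde{x}}_t-\frac{\mu}{\phi^2}\nabla f(\bm{\tilde{x}}_t)_S$ is supported in $[k]$ and depends only on $E_{\rm in}$. Since the thresholding operator $\mathcal{T}_\tau$ acts entrywise, it preserves both properties, so $\bm{\tilde{x}}_{t+1}$ is supported in $[k]$ and $E_{\rm in}$-measurable, closing the induction.

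There is no real obstacle here; the proof is essentially a bookkeeping argument. The only subtlety worth flagging in the write-up is that the restriction of the gradient to $S=[k]$ in \eqref{defi:Supdate} (which is precisely the modification distinguishing TWF-S from TWF) is what prevents the support of the iterates from leaking into $[n]\setminus[k]$ and what forces the index pairs appearing in the updates to remain inside $[k]\times[k]$. Without this restriction, $\nabla f(\bm{\tilde{x}}_t)$ would have rows indexed by $j\notin[k]$ that pick up off-block entries $\tilde a^{(i)}_{j\ell}$ with $j\notin[k]$, breaking $E_{\rm in}$-measurability; this is exactly why the ideal analysis proceeds via TWF-S rather than directly via TWF.
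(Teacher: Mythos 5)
Your proposal is correct and follows essentially the same route as the paper's own proof: an induction on $t$ in which one shows that the TWF-S iterate $\bm{\tilde{x}}_t$ has support contained in $S=[k]$ and is a function only of the entries $\{a^{(l)}_{ij}:(i,j)\in[k]\times[k]\}$, from which independence of $E'$ follows since the Gaussian entries are jointly independent. Your framing in terms of $E_{\rm in}$-measurability makes the independence step a touch more transparent than the paper's wording, and your closing remark about why the $S$-restriction of the gradient is indispensable correctly identifies the role of TWF-S in the argument, but the underlying bookkeeping is the same.
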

\begin{proof}
     \textbf{(Initialization)}
    Because $\mathrm{supp}(\bm{x})\subset S=[k]$, we have $y_i=\bm{x}^\top\bm{A}_i\bm{x}=\bm{x}_S^\top[\bm{A}_i]_{S,S}\bm{x}_S$, which indicates that $\{y_i:i\in [m]\}$ is independent of $E'$. Thus, the quantities $\{\phi,I_1,...,I_k\}$ calculated in Algorithm \ref{alg1}, by their definitions, are independent of $E'$, which implies that $\widetilde{S}_0$ defined in (\ref{defi:Sselect}) is independent of $E'$. Because $\widetilde{S}_0\subset S$,   we can write \begin{equation}
        \begin{aligned}
            \nonumber[\bm{\widetilde{S}}_{in}]_{\widetilde{S}_0,\widetilde{S}_0}&=\frac{1}{m}\sum_{i=1}^m\big[\big[y_i\bm{\widetilde{A}}_i\big]_{S,S}\big]_{\widetilde{S}_0,\widetilde{S}_0}\\&=\frac{1}{m}\sum_{i=1}^m\big[y_i\big([\bm{A}_i]_{S,S}^\top+[\bm{A}_i]_{S,S}\big)\big]_{\widetilde{S}_0,\widetilde{S}_0},
        \end{aligned}
    \end{equation}
    indicating that $[\bm{\widetilde{S}}_{in}]_{\widetilde{S}_0,\widetilde{S}_0}$ is independent of $E'$. Moreover, since $\bm{\tilde{x}}_0=\phi \bm{\tilde{v}}_1$ with $\bm{\tilde{v}}_1$ being the leading eigenvector of $[\bm{\widetilde{S}}_{in}]_{\widetilde{S}_0,\widetilde{S}_0}$, we know $\bm{\tilde{x}}_0$ is independent of $E'$.

    \noindent{\textbf{(Updates)}}
    It remains to show $\{\bm{\tilde{x}}_1,...,\bm{\tilde{x}}_T\}$ is independent of $E'$, for which we use induction. Let $t\in\{0,1,...,T-1\}$, we assume $\bm{\tilde{x}}_t$ satisfying $\supp(\bm{\tilde{x}}_t)\subset S =[k]$ is independent of $E'$ (note that this is already satisfied by $\bm{\tilde{x}}_0$), and we aim to show $\bm{\tilde{x}}_{t+1}$ satisfying $\mathrm{supp}(\bm{\tilde{x}}_{t+1})\subset S$ is also independent of $E'$. Firstly, note that $\mathrm{supp}(\bm{\tilde{x}}_t)\subset S$,  so $\tau(\bm{\tilde{x}}_t)$ in (\ref{defi:gradthre}) is independent of $E'$ since it can be written as 
    $$
        \tau(\bm{\tilde{x}}_t)=\sqrt{\frac{\beta}{m^2}\sum_{i=1}^m(\bm{\tilde{x}}_t^\top[\bm{A}_i]_{S,S}\bm{\tilde{x}}_t-y_i)^2}\cdot\|\bm{\tilde{x}}_t\|_2.$$
    Secondly, by (\ref{defi:gradient}) and $\supp(\bm{\tilde{x}}_t)\subset S=[k]$, we have \begin{equation}
        \begin{aligned}\nonumber 
           \nabla f(\bm{\tilde{x}}_t)_S&=\frac{1}{m}\sum_{i=1}^m\big(\bm{\tilde{x}}_t^\top\bm{A}_i\bm{\tilde{x}}_t-y_i\big)[\bm{\widetilde{A}}_i\bm{\tilde{x}}_t]_S\\&=\frac{1}{m}\sum_{i=1}^m\big(\bm{\tilde{x}}_t^\top[\bm{A}_i]_{S,S}\bm{\tilde{x}}_t-y_i\big)[\bm{\widetilde{A}}_i]_{S,S}\bm{\tilde{x}}_t.
        \end{aligned}
    \end{equation}
    Thus, $\nabla f(\bm{\tilde{x}}_t)_S$ is independent of $E'$. Combined with (\ref{defi:Supdate}), we know that $\bm{\tilde{x}}_{t+1}$ is independent of $E'$, and evidently $\mathrm{supp}(\bm{\tilde{x}}_{t+1})\subset S$, which completes the proof. 
\end{proof}

\begin{remark}\label{rem:technical}
    {\rm (Proof strategy and technical differences with \cite{cai2016optimal,Huang:Quadratic:2020})} The thresholded Wirtinger flow algorithm for phase retrieval was developed in \cite{cai2016optimal}, but since we are considering the full-rank case, the random terms arising in our analyses are totally different from those in \cite{cai2016optimal} and should be bounded by different techniques. Compared with the Wirtinger flow for a low-dimensional full-rank quadratic system in \cite{Huang:Quadratic:2020}, our analyses are much more involved. Most notably, unlike \cite{Huang:Quadratic:2020} that directly analyzes the Wirtinger flow sequence, we need to first analyze the two steps of TWF-S in Lemmas \ref{lem4}-\ref{lem5} (note that the independence described in Proposition \ref{pro1} makes this essentially easier than directly analyzing the TWF sequence), and then show both steps of TWF-S and TWF coincide with high probability (Lemmas \ref{lem8}-\ref{lem7}). In addition, compared to \cite{Huang:Quadratic:2020}, our TWF needs extra steps to incorporate the sparse prior, which in general leads to more random terms of various sorts (e.g., comparing our Lemma \ref{lem4} and \cite[Lem. 1]{Huang:Quadratic:2020}). 
\end{remark}

\subsubsection*{\textbf{Spectral Initialization in TWF-S}}

Fix any $\delta_0>0$, we will show that $\min_{i=0,1}\|\bm{\tilde{x}}_0-(-1)^i\bm{x}\|_2\leq \delta_0$ holds with high probability. We get started with several lemmas. 
\begin{lem}\label{lem2}
     $\mathbbm{E}[\bm{\widetilde{S}}_{in}]=\mathbbm{E}[\bm{S}_{in}]=\bm{xx}^\top$.
\end{lem}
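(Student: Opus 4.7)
The plan is to compute $\mathbb{E}[\bm{S}_{in}]$ directly by working entrywise, and then deduce the statement for $\mathbb{E}[\tilde{\bm{S}}_{in}]$ by symmetry of $\bm{xx}^\top$. By linearity of expectation and the assumption that $\bm{A}_1,\ldots,\bm{A}_m$ are i.i.d., it suffices to show $\mathbb{E}[y_1 \bm{A}_1] = \bm{xx}^\top$, since $\bm{S}_{in}$ is simply the sample average of $m$ copies of $y_i \bm{A}_i$.

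To compute the $(j,\ell)$-entry of $\mathbb{E}[y_1 \bm{A}_1]$, I would expand $y_1 = \bm{x}^\top \bm{A}_1 \bm{x} = \sum_{p,q} x_p x_q a^{(1)}_{pq}$ and write
\begin{equation}\nonumber
\mathbb{E}\big[y_1\, a^{(1)}_{j\ell}\big] = \sum_{p,q=1}^{n} x_p x_q \,\mathbb{E}\big[a^{(1)}_{pq} a^{(1)}_{j\ell}\big].
\end{equation}
Since the entries $a^{(1)}_{pq}$ are i.i.d.\ standard normal, the cross-expectation $\mathbb{E}[a^{(1)}_{pq} a^{(1)}_{j\ell}]$ equals $1$ when $(p,q)=(j,\ell)$ and $0$ otherwise. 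This collapses the double sum to $x_j x_\ell$, which is precisely the $(j,\ell)$-entry of $\bm{xx}^\top$. Hence $\mathbb{E}[\bm{S}_{in}] = \bm{xx}^\top$.

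For the symmetrized version, I would invoke linearity of expectation once more:
\begin{equation}\nonumber
\mathbb{E}[\tilde{\bm{S}}_{in}] = \tfrac{1}{2}\big(\mathbb{E}[\bm{S}_{in}] + \mathbb{E}[\bm{S}_{in}^\top]\big) = \tfrac{1}{2}\big(\bm{xx}^\top + (\bm{xx}^\top)^\top\big) = \bm{xx}^\top,
\end{equation}
using that $\bm{xx}^\top$ is already symmetric. There is no technical obstacle here: the argument rests solely on the independence and unit variance of the Gaussian entries $a^{(i)}_{pq}$. The main value of this lemma is that it motivates using the leading eigenvector of (a principal submatrix of) $\tilde{\bm{S}}_{in}$ as an estimate of $\bm{x}/\|\bm{x}\|_2$, a role the subsequent concentration arguments will make rigorous.
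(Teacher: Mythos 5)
Your proof is correct and matches the paper's argument: both compute $\mathbb{E}[y_i \bm{A}_i]$ entrywise by expanding $y_i = \sum_{p,q} x_p x_q a^{(i)}_{pq}$ and using that $\mathbb{E}[a^{(i)}_{pq} a^{(i)}_{j\ell}]$ is $1$ exactly when the index pairs coincide. You also make the (trivial) symmetrization step explicit, which the paper omits but which follows immediately from $\bm{xx}^\top$ being symmetric.
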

\begin{proof}
    Evidently, $\mathbbm{E}[\bm{\widetilde{S}}_{in}]=\mathbbm{E}[\bm{S}_{in}]=\mathbbm{E}[y_i\bm{A}_i]$. Furthermore, the $(p,q)$-th entry of $S$ can be calculated as 
    $$
        \mathbbm{E}\big[\sum_{j,l\in [n]}x_jx_la^{(i)}_{jl}a^{(i)}_{pq}\big]=x_px_q,$$
    thus yielding the claim.
\end{proof}
Next, we establish two element-wise concentration results by Bernstein's inequality (Lemma \ref{bernstein}). The first one states that $\phi= (\frac{1}{m}\sum_{i=1}^my_i^2)^{1/4}$ provides a faithful estimate of the signal norm $\|\bm{x}\|_2$. 
\begin{lem}
    \label{lem1}
    For any $\delta\in (0,1)$, we have 
    $
        \mathbbm{P}( 1-\delta\leq \phi \leq 1+\delta^{1/4})\leq 2\exp(-cm\delta^2).$
    In particular, taking $\delta = 10^{-4}$ yields that   $0.9\leq \phi\leq 1.1$ holds with probability at least $1-2\exp(-\Omega(m))$.
\end{lem}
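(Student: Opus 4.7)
The plan is to reduce the claim to a one-dimensional concentration inequality for $\phi^{4}=\frac{1}{m}\sum_{i=1}^{m}y_{i}^{2}$ and then convert the resulting two-sided bound on $\phi^{4}$ to the desired two-sided bound on $\phi$. Throughout, I use the normalization $\|\bm{x}\|_{2}=1$.

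First I would identify the distribution of $y_{i}$. Since $y_{i}=\bm{x}^{\top}\bm{A}_{i}\bm{x}=\sum_{j,l}x_{j}x_{l}a^{(i)}_{jl}$ is a linear combination of the i.i.d.\ standard Gaussians $\{a^{(i)}_{jl}\}$, it is itself Gaussian with mean $0$ and variance
\[
\mathbbm{E}[y_{i}^{2}]=\sum_{j,l,p,q}x_{j}x_{l}x_{p}x_{q}\,\mathbbm{E}[a^{(i)}_{jl}a^{(i)}_{pq}]=\sum_{j,l}x_{j}^{2}x_{l}^{2}=\|\bm{x}\|_{2}^{4}=1,
\]
so $y_{i}\sim\mathcal{N}(0,1)$. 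Consequently $y_{i}^{2}-1$ is a centered sub-exponential random variable with $\|y_{i}^{2}-1\|_{\psi_{1}}\leq C$ for an absolute constant $C$ (by (\ref{2.3}) applied to $\|y_{i}\|_{\psi_{2}}\lesssim 1$), and the $y_{i}$'s are independent across $i$.

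Next I would apply Bernstein's inequality (Lemma \ref{bernstein}) to the sum $\sum_{i=1}^{m}(y_{i}^{2}-1)$. This yields, for every $t\geq 0$,
\[
\mathbbm{P}\!\left(\bigl|\phi^{4}-1\bigr|\geq t\right)=\mathbbm{P}\!\left(\Big|\tfrac{1}{m}\sum_{i=1}^{m}(y_{i}^{2}-1)\Big|\geq t\right)\leq 2\exp\!\bigl(-c\,m\min\{t^{2},t\}\bigr).
\]
Specializing to $t=\delta\in(0,1)$ collapses $\min\{t^{2},t\}$ to $\delta^{2}$, so that with probability at least $1-2\exp(-cm\delta^{2})$ we have $1-\delta\leq\phi^{4}\leq 1+\delta$.

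The remaining step is the deterministic conversion to bounds on $\phi$ itself. On the upper side I would use the elementary inequality $(1+\delta)^{1/4}\leq 1+\delta^{1/4}$, which follows by raising both sides to the fourth power and applying Bernoulli's inequality (since $\delta\in(0,1)$). On the lower side I would use $(1-\delta)^{1/4}\geq 1-\delta$, which is equivalent to $(1-\delta)\geq(1-\delta)^{4}$, in turn equivalent to $(1-\delta)^{3}\leq 1$; this holds trivially for $\delta\in(0,1)$. Combining gives $1-\delta\leq\phi\leq 1+\delta^{1/4}$ on the event of high probability, which is the claim. The ``In particular'' statement with $\delta=10^{-4}$ then follows by noting $10^{-4/4}=0.1$ so that $1+\delta^{1/4}=1.1$ and $1-\delta=0.9999\geq 0.9$.

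I do not anticipate any serious obstacle: the key ingredients (Gaussianity of $y_{i}$ and Bernstein for $y_{i}^{2}$) are standard, and the only mildly subtle point is the asymmetric appearance of $\delta$ versus $\delta^{1/4}$ in the upper and lower bounds, which is precisely what the two elementary inequalities above are tailored to provide.
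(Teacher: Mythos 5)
Your proposal is correct and follows essentially the same route as the paper: concentrate $\phi^4 = \frac{1}{m}\sum_i y_i^2$ around $1$ via Bernstein (using $y_i\sim\mathcal{N}(0,1)$, hence $y_i^2$ sub-exponential), then convert to a bound on $\phi$. You merely supply the explicit variance computation and the two elementary inequalities $(1+\delta)^{1/4}\leq 1+\delta^{1/4}$ and $(1-\delta)^{1/4}\geq 1-\delta$ that the paper asserts without justification.
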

\begin{proof}
    By $\phi = \big(\frac{1}{m}\sum_{i=1}^my_i^2\big)^{1/4}$, we know $|\phi^4-1|=\big|\frac{1}{m}\sum_{i=1}^m y_i^2-1\big|$. Because $$y_i=\bm{x}^\top\bm{A}_i\bm{x}=\langle\bm{A}_i,\bm{x}\bm{x}^\top\rangle\sim \mathcal{N}(0,\|\bm{x}\bm{x}^\top\|_F^2)$$ and $$\|\bm{x}\bm{x}^\top\|_F^2=\textrm{Tr}(\bm{x}\bm{x}^\top \bm{x}\bm{x}^\top)=\|\bm{x}\|_2^2 \textrm{Tr}(\bm{x}\bm{x}^\top)=\|\bm{x}\|_2^4=1,$$ we know $y_i\sim\mathcal{N}(0,1)$ and hence $\|y_i^2\|_{\psi_1}\leq \|y_i\|_{\psi_2}^2=O(1)$. Thus, by using Lemma \ref{bernstein}, we obtain for any $t\geq 0$ that $$\mathbbm{P}(|\phi^4-1|\geq t)\leq 2\exp(-cm\min\{t,t^2\}).$$
    We set $t=\delta\in(0,1)$ and note that $|\phi^4-1|\leq \delta$ implies $1-\delta\leq \phi\leq 1+\delta^{1/4}$, thus proving the first claim.   Taking $\delta=10^{-4}$ yields the statement after "in particular", thus completing the proof.
\end{proof}
Our second element-wise concentration result guarantees that $I_l$'s are (uniformly) close to $x_l^2$'s, which justifies that we estimate $\supp(\bm{x})$ by taking the coordinates corresponding to large $I_l$s in (\ref{defi:hatS_0}) and (\ref{defi:Sselect}). 
\begin{lem}\label{lem3}
    When $m=\Omega(\log n)$, there exists some sufficiently large $C>0$, such that with probability at least $1-2n^{-9}$ we have 
    \begin{equation}\label{3.11}
        |I_l-x_l^2|\leq C\sqrt{\frac{\log n}{m}},~l=1,2,...,n.
    \end{equation}
\end{lem}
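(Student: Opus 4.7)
The plan is to realize $I_l$ as the empirical mean of the i.i.d.\ random variables $Z_i^{(l)}:=y_i a^{(i)}_{ll}$, verify that each $Z_i^{(l)}$ has mean $x_l^2$ and a bounded sub-exponential norm, apply Bernstein's inequality (Lemma \ref{bernstein}) to obtain a per-coordinate tail bound, and close with a union bound over $l\in[n]$.

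First, I would compute the mean. Expanding $y_i=\sum_{j,k\in[n]}x_jx_k a^{(i)}_{jk}$ and using the independence of the entries of $\bm{A}_i$ together with $\mathbb{E}[a^{(i)}_{jk}a^{(i)}_{ll}]=\mathbbm{1}\{(j,k)=(l,l)\}$, only the diagonal $(l,l)$ term survives, yielding $\mathbb{E}[Z_i^{(l)}]=x_l^2\,\mathbb{E}[(a^{(i)}_{ll})^2]=x_l^2$. Next, I would bound $\|Z_i^{(l)}\|_{\psi_1}$. Since $\|\bm{x}\|_2=1$ we have $y_i=\langle\bm{A}_i,\bm{xx}^\top\rangle\sim\mathcal{N}(0,\|\bm{xx}^\top\|_F^2)=\mathcal{N}(0,1)$, so $\|y_i\|_{\psi_2}=O(1)$, and trivially $\|a^{(i)}_{ll}\|_{\psi_2}=O(1)$. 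Applying (\ref{2.3}) (which holds without any independence assumption, so the fact that $a^{(i)}_{ll}$ also appears inside $y_i$ is harmless) gives $\|Z_i^{(l)}\|_{\psi_1}\leq\|y_i\|_{\psi_2}\|a^{(i)}_{ll}\|_{\psi_2}=O(1)$, and hence $\|Z_i^{(l)}-x_l^2\|_{\psi_1}=O(1)$ as well.

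With these ingredients, Lemma \ref{bernstein} applied to $\sum_{i=1}^m(Z_i^{(l)}-x_l^2)$ yields, for any $t\geq 0$,
\begin{equation}\nonumber
\mathbb{P}\Big(|I_l-x_l^2|\geq t\Big)=\mathbb{P}\Big(\Big|\sum_{i=1}^m(Z_i^{(l)}-x_l^2)\Big|\geq mt\Big)\leq 2\exp\big(-c\min\{mt^2,mt\}\big).
\end{equation}
Choosing $t=c_1\sqrt{(\log n)/m}$ with $c_1$ sufficiently large and using the hypothesis $m=\Omega(\log n)$ ensures $mt^2\leq mt$, so the minimum is attained by the sub-Gaussian term $mt^2=c_1^2\log n$. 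Selecting $c_1$ large enough to make $cc_1^2\geq 10$ gives failure probability at most $2n^{-10}$ for each fixed $l$, and a union bound over $l\in[n]$ produces the claimed total failure probability $2n^{-9}$.

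No step is genuinely hard; the only subtlety worth flagging is the dependence between $y_i$ and $a^{(i)}_{ll}$, but since (\ref{2.3}) is a deterministic product bound on Orlicz norms (valid for any two sub-Gaussian random variables regardless of joint law), this dependence does not obstruct the sub-exponential estimate.
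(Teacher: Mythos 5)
Your proposal is correct and follows essentially the same route as the paper's proof: identify $I_l$ as the empirical mean of $y_i a^{(i)}_{ll}$ with mean $x_l^2$, bound its sub-exponential norm via $\|y_i a^{(i)}_{ll}\|_{\psi_1}\leq\|y_i\|_{\psi_2}\|a^{(i)}_{ll}\|_{\psi_2}$, apply Bernstein, and union bound over $l\in[n]$. If anything, you are slightly more careful than the paper: you explicitly center the summands before invoking Bernstein and you note that the Orlicz-norm product bound (\ref{2.3}) needs no independence, both of which the paper leaves implicit.
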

\begin{proof}
Observe that $I_l$ is simply the $(l,l)$-th entry of $\bm{S}_{in}$, from Lemma \ref{lem2} we know $\mathbbm{E}[I_l]=x_l^2$. Note that $I_l=\frac{1}{m}\sum_{i=1}^m y_ia_{ll}^{(i)}$, and $\|y_ia_{ll}^{(i)}\|_{\psi_1}\leq \|y_i\|_{\psi_2}\|a_{ll}^{(i)}\|_{\psi_2}=O(1)$, so Lemma \ref{bernstein} yields for any $t
\geq 0$ that $$
    \mathbbm{P}\big(|I_l-\mathbbm{E}[I_l]|\geq t\big)\leq 2\exp\big(-{c_1}m\min\{t,t^2\}\big).$$
Furthermore, we take a union bound over $l\in[n]$ to obtain that for any $t\geq 0$, \begin{align}
    \mathbbm{P}\Big(\sup_{l\in [n]}|I_l-x_l^2|\geq t\Big)\leq 2\exp\big(\log n-{c_1}m\min\{t,t^2\}\big).\label{eq:bernstein11}
\end{align}
Then we set \begin{align}\label{eq:choose_t}
    t= \sqrt{\frac{10 \log n}{c_1m}},
\end{align} and note that  under $m=\Omega(\log n)$ we can assume $t^2<t$. Substituting (\ref{eq:choose_t}) into (\ref{eq:bernstein11}) yields that 
\begin{align}
    \sup_{l\in[n]}|I_l-\mathbbm{E}[I_l]|<\sqrt{\frac{10\log n}{c_1m}}
\end{align}
holds with probability at least $1-2n^{-9}$, as claimed. 
\end{proof}

We are now ready to show that the spectral initialization provides $\min_{i=0,1}\|\bm{\tilde{x}}_0-(-1)^i\bm{x}\|_2\leq \delta_0$ with high probability.
\begin{lem}\label{lem4}
    Fix any $\delta_0\in (0,\frac{1}{2})$, there exists some $c_1$   only depending on $\delta_0$, when $m\geq c_1 k^2\log n$,    with probability at least $1-2\exp(-c_2m)-2\exp(-k)-2n^{-9}$ we have $$\min_{i=0,1}\|\bm{\tilde{x}}_0-(-1)^i\bm{x}\|_2\leq \delta_0.$$    
\end{lem}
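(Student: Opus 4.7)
The plan is to combine the element-wise concentration results of Lemmas \ref{lem1} and \ref{lem3} with the Davis--Kahan theorem (Lemma \ref{Davis-Kahan}) and a forthcoming operator-norm concentration bound (Lemma \ref{lem6}) for the data matrix restricted to the signal support. Without loss of generality we take $\supp(\bm{x}) \subset [k]$, so that $\bm{x}\bm{x}^\top$ is supported on the top-left $k \times k$ block.

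First, condition on the high-probability events $\phi \in [0.9, 1.1]$ (Lemma \ref{lem1}) and $|I_l - x_l^2| \leq c_0 \sqrt{\log n/m}$ uniformly in $l \in [n]$ (Lemma \ref{lem3}). Second, analyze the threshold step: with $\alpha$ chosen a suitably large absolute constant (exceeding, e.g., $c_0/0.81$), every coordinate $l \in [k] \setminus \widetilde{S}_0$ satisfies $x_l^2 \leq C_0 \sqrt{\log n/m}$, hence
\[
\|\bm{x}_{[k]\setminus \widetilde{S}_0}\|_2^2 \leq k C_0 \sqrt{\log n/m}.
\]
Under $m \geq c_1 k^2 \log n$ with $c_1 = c_1(\delta_0)$ large, this is $O(\delta_0^2)$, and consequently $\|\bm{x}_{\widetilde{S}_0}\|_2^2 \geq 1 - O(\delta_0^2)$; this truncation error is what drives the $k^2 \log n$ scaling in the sample complexity.

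Next, set $\bm{u} = \bm{x}_{\widetilde{S}_0}/\|\bm{x}_{\widetilde{S}_0}\|_2$, so that $\bm{x}_{\widetilde{S}_0}\bm{x}_{\widetilde{S}_0}^\top$ has leading eigenvector $\bm{u}$, top eigenvalue $\|\bm{x}_{\widetilde{S}_0}\|_2^2 \geq 1 - O(\delta_0^2)$, and a spectral gap of the same size. Using $\widetilde{S}_0 \subset [k]$ together with Cauchy interlacing for symmetric principal submatrices,
\[
\bigl\|[\bm{\widetilde{S}}_{in}]_{\widetilde{S}_0,\widetilde{S}_0} - \bm{x}_{\widetilde{S}_0}\bm{x}_{\widetilde{S}_0}^\top\bigr\|_{op} \leq \bigl\|[\bm{\widetilde{S}}_{in} - \bm{x}\bm{x}^\top]_{[k],[k]}\bigr\|_{op}.
\]
I would then invoke Lemma \ref{lem6} to bound the right-hand side by $O(\sqrt{k/m})$, which is $O(\delta_0)$ under our sample condition. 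With the eigen-gap close to one and the perturbation this small, Lemma \ref{Davis-Kahan} yields $\|\bm{\tilde{v}}_1\bm{\tilde{v}}_1^\top - \bm{u}\bm{u}^\top\|_{op} = O(\sqrt{k/m})$. The standard rank-one identity $\|\bm{M}\bm{M}^\top - \bm{N}\bm{N}^\top\|_{op} = \sqrt{1-\langle \bm{M}, \bm{N}\rangle^2}$ for unit vectors, together with choosing the sign so that $\langle \bm{\tilde{v}}_1, \bm{u}\rangle \geq 0$, gives $\min_{i=0,1}\|\bm{\tilde{v}}_1 - (-1)^i\bm{u}\|_2 \leq \sqrt{2}\,\|\bm{\tilde{v}}_1\bm{\tilde{v}}_1^\top - \bm{u}\bm{u}^\top\|_{op} = O(\sqrt{k/m})$.

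Finally, a triangle inequality pools the sources of error:
\[
\min_i \|\phi\bm{\tilde{v}}_1 - (-1)^i \bm{x}\|_2 \leq |\phi - 1| + \min_i \|\bm{\tilde{v}}_1 - (-1)^i\bm{u}\|_2 + \bigl|1 - \|\bm{x}_{\widetilde{S}_0}\|_2\bigr| + \|\bm{x}_{[k]\setminus \widetilde{S}_0}\|_2,
\]
each of which is $O(\delta_0)$, so choosing $c_1$ large enough caps the sum by $\delta_0$; a union bound over the failure events of Lemmas \ref{lem1}, \ref{lem3}, and \ref{lem6} delivers the advertised probability. The main obstacle is the operator-norm bound (Lemma \ref{lem6}): the summands $y_i [\widetilde{\bm{A}}_i]_{[k],[k]} - \bm{x}\bm{x}^\top$ have entries that are products of two dependent Gaussians, hence only sub-exponential. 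I anticipate a $\tfrac{1}{4}$-net on $\mathbb{S}^{k-1}$ of cardinality at most $9^k$ combined with Bernstein's inequality (Lemma \ref{bernstein}) applied to each quadratic form $\bm{v}^\top([\bm{\widetilde{S}}_{in}]_{[k],[k]} - \bm{x}\bm{x}^\top)\bm{v}$; the $9^k$ net factor is absorbed into the $\exp(-k)$ failure probability. Controlling the $\psi_1$-norm of $y_i \bm{v}^\top[\widetilde{\bm{A}}_i]_{[k],[k]}\bm{v}$ uniformly in $\bm{v}$ follows from the sub-Gaussianity of Gaussian quadratic forms together with \eqref{2.3}.
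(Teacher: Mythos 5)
Your proof is correct and takes essentially the same route as the paper: condition on Lemmas \ref{lem1} and \ref{lem3}, bound the truncation error $\|\bm{x}_{[k]\setminus\widetilde S_0}\|_2^2\lesssim k\sqrt{\log n/m}$ to get the $k^2\log n$ scaling, bound the operator-norm deviation of $[\bm{\widetilde S}_{in}]_{\widetilde S_0,\widetilde S_0}$ via restriction to $[k]$ and Lemma \ref{lem6}, then apply Davis--Kahan and a triangle inequality. The only differences are cosmetic: the paper introduces an intermediate set $\widetilde S_0'=\{l\in[k]:x_l^2\geq 3c\sqrt{\log n/m}\}$ to bound the truncation error, whereas you bound $x_l^2$ directly for $l\in[k]\setminus\widetilde S_0$ via $I_l\leq\alpha\phi^2\sqrt{\log n/m}$ plus Lemma \ref{lem3} (a slight streamlining); and you split the normalization error into a separate term in the triangle inequality rather than folding it into the middle term as the paper does.
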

\begin{proof}

Upon rescaling $\delta_0$, we only need to show $$\min_{i=0,1}\|\bm{\tilde{x}}_0-(-1)^i\bm{x}\|_2=O(\delta_0).$$ With probability at least $1-2n^{-9}-2\exp(-k)$, we proceed with our analysis under (\ref{3.11})  and (\ref{4.1}). In this proof,  we reserve $c$ for the absolute constant such that (\ref{3.11}) holds.
    By Lemma \ref{lem1}, we know that with high probability, $\phi\in [0.9,1.1]$. Thus, with a properly chosen $\alpha$ we assume $\alpha \phi^2\in (c,2c)$. We consider the index set \begin{align}
        \widetilde{S}_0'=\Big\{l\in[k]:x_l^2\geq 3c\sqrt{\frac{\log n}{m}}\Big\},\label{eq:non_empty_set}
    \end{align}
        {which is non-empty when $m\ge c_1k^2\log n$ with sufficiently large $c_1$ (otherwise, we have $\|\bm{x}\|_2^2\le 3ck\sqrt{\frac{\log n}{m}}<1$).}
    Note that for $l\in\widetilde{S}_0'$, combined with (\ref{3.11}), it holds that $$I_l\geq x_l^2- |I_l-x_l^2|\geq 2c\sqrt{\frac{\log n}{m}}>\alpha\phi^2 \sqrt{\frac{\log n}{m}},$$ thereby producing $l\in \widetilde{S}_0$. We then have \begin{align}\label{eq:non_empty_2}
        {\widetilde{S}_0'\subset \widetilde{S}_0\subset S.}
    \end{align} To bound $\min_{i=0,1}\|\bm{\tilde{x}}_0-(-1)^i\bm{x}\|_2$, we begin with \begin{equation}
        \begin{aligned}\label{3.15}
            &\min_{i=0,1}\|\bm{\tilde{x}}_0-(-1)^i\bm{x}\|_2\leq \big\|\bm{\tilde{x}}_0-\bm{\tilde{v}}_1\big\|_2\\&\quad +\min_{i=0,1}\big\|\bm{\tilde{v}}_1-(-1)^i\bm{x}_{\widetilde{S}_0}\big\|_2+\|\bm{x}_{\widetilde{S}_0}-\bm{x}\|_2.
        \end{aligned}
    \end{equation}
    Next, Let us bound the above three terms separately.

    \noindent{\textbf{(Bounding $\|\bm{\tilde{x}}_0-\bm{\tilde{v}}_1\|_2$)}}
    Firstly,  by the first statement in Lemma \ref{lem1}, with probability at least $1-2\exp(-\Omega(\delta_0^8m))$ we have $|\phi-1|\leq \delta_0$, which implies $$\|\bm{\tilde{x}}_0-\bm{\tilde{v}}_1\|_2=|\phi-1|\|\bm{\tilde{v}}_1\|_2\leq \delta_0.$$

    \noindent{\textbf{(Bounding $\|\bm{x}_{\widetilde{S}_0}-\bm{x}\|_2$)}}
        We now switch to the third term in (\ref{3.15}). By using $\widetilde{S}_0'\subset \widetilde{S}_0\subset S$, we have \begin{equation}
            \begin{aligned}\label{eq:inisub}
                &\|\bm{x}_{\widetilde{S}_0}-\bm{x}\|^2_2\leq \|\bm{x}_{\widetilde{S}_0'}-\bm{x}\|^2_2\\&= \sum_{i\in S}x_i^2\mathbbm{1}\Big(x_i^2<3c \sqrt{\frac{\log n}{m}}\Big)\leq 3ck\sqrt{\frac{\log n}{m}}.
            \end{aligned}
        \end{equation}
        Hence, when $m\gtrsim k^2\log n$, with a sufficiently large hidden constant, it holds that $\|\bm{x}_{\widetilde{S}_0}-\bm{x}\|_2\leq \delta_0$. This also implies $$\|\bm{x}_{\widetilde{S}_0}\|_2\geq \|\bm{x}\|_2-\|\bm{x}-\bm{x}_{\widetilde{S}_0}\|_2 \geq 1-\delta_0.$$

    \noindent{\textbf{(Bounding $\min_{i=0,1}\|\bm{\tilde{v}}_1-(-1)^i \bm{x}_{\widetilde{S}_0}\|_2$)}}
    We begin by bounding this term as $$\min_{i=0,1}\Big\|\bm{\tilde{v}}_1-(-1)^i\frac{\bm{x}_{\widetilde{S}_0}}{{\|\bm{x}_{\widetilde{S}_0}}\|_2}\Big\|_2+\Big\|\frac{\bm{x}_{\widetilde{S}_0}}{{\|\bm{x}_{\widetilde{S}_0}}\|_2}-\bm{x}_{\widetilde{S}_0}\Big\|_2,$$ and the second term here equals  $|1-\|\bm{x}_{\widetilde{S}_0}\|_2| $ and is hence bounded by $\delta_0$, as implied by the last part. Since $\|\bm{\tilde{v}}_1\|_2=1$, we can obtain that \begin{equation}
        \begin{aligned}\nonumber
            \min_{i=0,1}\Big\|\bm{\tilde{v}}_1-(-1)^i \frac{\bm{x}_{\widetilde{S}_0}}{\|\bm{x}_{\widetilde{S}_0}\|_2}\Big\|_2&= \sqrt{2-2\Big|\frac{\bm{\tilde{v}}_1^\top\bm{x}_{\widetilde{S}_0}}{\|\bm{x}_{\widetilde{S}_0}\|_2}\Big|}\\
            \leq \sqrt{2-2\Big|\frac{\bm{\tilde{v}}_1^\top\bm{x}_{\widetilde{S}_0}}{\|\bm{x}_{\widetilde{S}_0}\|_2}\Big|^2}&=\Big\|\bm{\tilde{v}}_1\bm{\tilde{v}}_1^\top - \frac{\bm{x}_{\widetilde{S}_0}\bm{x}_{\widetilde{S}_0}^\top}{\|\bm{x}_{\widetilde{S}_0}\|_2^2}\Big\|_F\\
           &\leq \sqrt{2}\Big\|\bm{\tilde{v}}_1\bm{\tilde{v}}_1^\top - \frac{\bm{x}_{\widetilde{S}_0}\bm{x}_{\widetilde{S}_0}^\top}{\|\bm{x}_{\widetilde{S}_0}\|_2^2}\Big\|_{op}.
        \end{aligned}
    \end{equation} Recall that $\bm{\tilde{v}}_1$ is the normalized leading eigenvector of $[\bm{\widetilde{S}}_{in}]_{\widetilde{S}_0,\widetilde{S}_0}$, while evidently, $\bm{x}_{\widetilde{S}_0}/\|\bm{x}_{\widetilde{S}_0}\|_2$ is the normalized leading eigenvector of $\mathbbm{E}[\bm{\widetilde{S}}_{in}]_{\widetilde{S}_0,\widetilde{S}_0}=\bm{x}_{\widetilde{S}_0}\bm{x}_{\widetilde{S}_0}^\top$, whose unique non-zero eigenvalue is $\|\bm{x}_{\widetilde{S}_0}\|_2^2$. Moreover, because $\widetilde{S}_0\subset S$, the  concentration error of $[\bm{\widetilde{S}}_{in}]_{\widetilde{S}_0,\widetilde{S}_0}$ is controlled by Lemma \ref{lem6} that provides $$\|[\bm{\widetilde{S}}_{in}]_{\widetilde{S}_0,\widetilde{S}_0}-\bm{x}_{\widetilde{S}_0}\bm{x}^\top_{\widetilde{S}_0}\|_{op}\leq 
        \big\|[\bm{\widetilde{S}}_{in}]_{S,S}-\bm{xx}^\top\big\|_{op}\leq c_1 \sqrt{\frac{k}{m}}$$ with the promised probability.
        These allow us to invoke the Davis-Kahan Theorem (Lemma \ref{Davis-Kahan}) and obtain \begin{equation}\nonumber\begin{aligned}
          \Big\|\bm{\tilde{v}}_1\bm{\tilde{v}}_1^\top - \frac{\bm{x}_{\widetilde{S}_0}\bm{x}_{\widetilde{S}_0}^\top}{\|\bm{x}_{\widetilde{S}_0}\|_2^2}\Big\|_{op}&\leq \frac{c_1\sqrt{k/m}}{\|\bm{x}_{\widetilde{S}_0}\|_2^2-c_1\sqrt{k/m}}\\&\stackrel{(i)}{\leq} \frac{c_1\sqrt{k/m}}{(1-\delta_0)-c_1\sqrt{k/m}} \stackrel{(ii)}{\leq} \delta_0
          \end{aligned}
        \end{equation}
where we recall $\|\bm{x}_{\widetilde{S}_0}\|_2\geq 1-\delta_0$ in $(i)$ and use $m\gtrsim k$ for some hidden constant depending on $\delta_0$ in $(ii)$. 
         By substituting these bounds into  (\ref{3.15}), we have shown $\min_{i=0,1}\|\bm{\tilde{x}}_0-(-1)^i\bm{x}\|_2=O(\delta_0)$, as desired.    
\end{proof}

\begin{remark}
    \label{rem:ksqu}
    {\rm (Initialization incurs the sub-optimal $O(k^2\log n)$)}  Our spectral initialization requires a sample size of $m = O(k^2\log n)$ that is significantly larger than the information-theoretic optimal $O(k\log n)$ (e.g., \cite{oymak2015near}). In contrast, as shown in Lemma \ref{lem5} below, the thresholded gradient descent succeeds with the near-optimal $m=O(k\log n)$. We note that: (i) for sparse phase retrieval, the initialization steps of most non-convex algorithms (e.g., \cite{cai2016optimal,jaganathan2017sparse,Wang:TAF:2018}) encounter the same gap, and it remains a significant open question whether the gap can be closed by a tractable algorithm; (ii) for our full-rank case, the SPF algorithm in \cite{Lee:SPF:2018}  succeeds with a near-optimal $m=O(k\log\frac{n}{k})$ but only for a very restricted set of $k$-sparse $\bm{x}$ satisfying $\|\bm{x}\|_\infty = \Omega(\|\bm{x}\|_2)$ \cite[Thm. 1]{Lee:SPF:2018}. Therefore, the counterpart of a significant open question in the rank-1 sparse phase retrieval problem was found in our full-rank sparse quadratic system. The possibility of refining 
 $m=O(k^2\log n)$ for a {\it general} $k$-sparse $\bm{x}$ is left for future research. 
\end{remark}
\subsubsection*{\textbf{Thresholded Gradient Descent in TWF-S}}

\begin{lem}\label{lem5}
Consider a fixed $\bm{x}\in \mathbb{S}^{n-1}$ satisfying $\supp(\bm{x})\subset S = [k]$ as the desired signal. If $m\gtrsim k\log n$, then with probability at least $1-C_1\exp(-\Omega(k))$ we have the following:  For all $\bm{x}'\in\mathbb{R}^n$ with $\mathrm{supp}(\bm{x}')\subset S$ satisfying $\min_{i=0,1}\|\bm{x}'-(-1)^i\bm{x}\|_2\leq \delta_0$ for some sufficiently small constant $\delta_0$,  suppose that  {for     $\beta\asymp \log n$} we calculate $$\tau(\bm{x}')=\big[{\frac{\beta}{m^2}\sum_{i=1}^m\big((\bm{x}')^\top\bm{A}_i\bm{x}'-y_i\big)^2}\big]^{1/2}\cdot\|\bm{x}'\|_2$$ as in (\ref{defi:gradthre}), 
   and then update $\bm{x}'$ to $\bm{x}''$ by $$
       \bm{x}''=\mathcal{T}_{\frac{\mu}{\phi^2}\tau(\bm{x}')}\big(\bm{x}'-\frac{\mu}{\phi^2}\nabla f(\bm{x}')_S\big)$$ as in (\ref{defi:Supdate}), 
   where $\nabla f(\bm{x}')$ is calculated as in (\ref{defi:gradient}) and {$0<\mu<c_0$ for some sufficiently small absolute constant $c_0$}, then it holds that 
   \begin{equation}\begin{aligned}\nonumber
      &\min_{i=0,1} \|\bm{x}''-(-1)^i\bm{x}\|_2 \leq \Big(1-\frac{\mu}{16}\Big)\min_{i=0,1} \|\bm{x}'-(-1)^i\bm{x}\|_2 .\end{aligned}
   \end{equation}
\end{lem}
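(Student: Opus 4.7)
By the sign symmetry of the target error, I may assume the minimum is attained at $i=0$ and set $\bm{h}=\bm{x}'-\bm{x}$ (so $\|\bm{h}\|_2\le\delta_0$), $\bm{u}=\bm{x}'-\tfrac{\mu}{\phi^2}\nabla f(\bm{x}')_S$, and $\gamma=\tfrac{\mu}{\phi^2}\tau(\bm{x}')$, so that $\bm{x}''=\mathcal{T}_\gamma(\bm{u})$. Because $\supp(\bm{x}')\subset S$ and the gradient has been projected onto $S$, both $\bm{u}$ and $\bm{x}''$ are supported on $S$, and the triangle inequality $\|\bm{x}''-\bm{x}\|_2\le\|\bm{u}-\bm{x}\|_2+\|\mathcal{T}_\gamma(\bm{u})-\bm{u}\|_2$ reduces the task to bounding a gradient piece (which produces the contraction factor $1-\Theta(\mu)$) and a thresholding piece (a negligible additive perturbation of order $\mu\|\bm{h}\|_2$).

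For the gradient piece, I compute the population gradient via the identity $\mathbb{E}[\langle\widetilde{\bm{A}},\bm{M}\rangle\widetilde{\bm{A}}]=\bm{M}$ for symmetric $\bm{M}$, which follows from $\widetilde{\bm{A}}$ being symmetric with i.i.d.\ diagonal $\mathcal{N}(0,1)$ and off-diagonal $\mathcal{N}(0,1/2)$ entries. This gives $\mathbb{E}[\nabla f(\bm{z})]=(\bm{z}\bm{z}^\top-\bm{x}\bm{x}^\top)\bm{z}$, which for $\bm{z}=\bm{x}+\bm{h}$ with $\|\bm{x}\|_2=1$ expands as $(\bm{I}+\bm{x}\bm{x}^\top)\bm{h}+\bm{r}(\bm{h})$ with $\|\bm{r}(\bm{h})\|_2\le 4\|\bm{h}\|_2^2$. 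Setting the ``ideal'' iterate $\bar{\bm{u}}=\bm{x}'-\tfrac{\mu}{\phi^2}\mathbb{E}[\nabla f(\bm{x}')]_S$, I obtain $\bar{\bm{u}}-\bm{x}=\bigl(\bm{I}-\tfrac{\mu}{\phi^2}(\bm{I}+\bm{x}\bm{x}^\top)\bigr)\bm{h}-\tfrac{\mu}{\phi^2}\bm{r}(\bm{h})_S$. Since $\bm{I}+\bm{x}\bm{x}^\top$ has spectrum $\{1,2\}$ and $\phi\in[0.9,1.1]$ by Lemma~\ref{lem1}, the linear factor has operator norm $1-\tfrac{\mu}{\phi^2}\le 1-\tfrac{\mu}{2}$, and once $\delta_0$ is small enough to absorb the quadratic remainder, $\|\bar{\bm{u}}-\bm{x}\|_2\le(1-\tfrac{\mu}{4})\|\bm{h}\|_2$. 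The random residual $\bm{u}-\bar{\bm{u}}=-\tfrac{\mu}{\phi^2}(\nabla f(\bm{x}')-\mathbb{E}[\nabla f(\bm{x}')])_S$ is then controlled via a uniform concentration bound $\|(\nabla f(\bm{x}')-\mathbb{E}[\nabla f(\bm{x}')])_S\|_2\le\varepsilon\|\bm{h}\|_2$ holding for all admissible $\bm{x}'$, established via Bernstein's inequality and a covering argument on the $k$-dimensional admissible ball; this is precisely what the appendix Lemmas~\ref{lem6}-\ref{lem12} provide. Hence $\|\bm{u}-\bm{x}\|_2\le(1-\tfrac{\mu}{4}+c\varepsilon\mu)\|\bm{h}\|_2$.

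For the thresholding piece, the defining property $|\mathcal{T}_\gamma(a)-a|\le\gamma$ combined with $|\supp(\bm{u})|\le k$ gives $\|\mathcal{T}_\gamma(\bm{u})-\bm{u}\|_2\le\gamma\sqrt{k}$. Since $\mathbb{E}[(\bm{x}'^\top\bm{A}_i\bm{x}'-y_i)^2]=\|\bm{x}'\bm{x}'^\top-\bm{x}\bm{x}^\top\|_F^2\asymp\|\bm{h}\|_2^2$ and its empirical average concentrates uniformly at the same order (again from the appendix lemmas), $\tau(\bm{x}')\lesssim\sqrt{\beta/m}\|\bm{h}\|_2$, so $\gamma\sqrt{k}\lesssim\mu\sqrt{k\beta/m}\|\bm{h}\|_2\le\varepsilon'\mu\|\bm{h}\|_2$ under $\beta\asymp\log n$ and $m\gtrsim k\log n$ with a sufficiently large hidden constant. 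Combining these estimates,
\[
\|\bm{x}''-\bm{x}\|_2\le\bigl(1-\tfrac{\mu}{4}+(c\varepsilon+\varepsilon')\mu\bigr)\|\bm{h}\|_2\le\bigl(1-\tfrac{\mu}{16}\bigr)\|\bm{h}\|_2,
\]
once $\varepsilon,\varepsilon'$ are chosen sufficiently small (arranged by taking the hidden constant in $m\gtrsim k\log n$ large).

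The main obstacle is the \emph{uniform multiplicative} gradient fluctuation bound. Two features make it delicate. First, $\nabla f(\bm{x}')$ is cubic in the Gaussian entries $a^{(i)}_{jk}$ but vanishes at $\bm{x}'=\pm\bm{x}$, so one must decompose $\nabla f(\bm{x}')-\mathbb{E}[\nabla f(\bm{x}')]$ into several sums whose sub-exponential norms scale linearly with $\|\bm{h}\|_2$ (bounded using~\eqref{2.3}); only a bound of this multiplicative form yields contraction rather than convergence to a residual ball. Second, the pointwise Bernstein bound from Lemma~\ref{bernstein} must be upgraded to a uniform statement over the $k$-dimensional admissible ball $\{\bm{x}':\supp(\bm{x}')\subset S,\ \|\bm{x}'-\bm{x}\|_2\le\delta_0\}$ using a $(C/\delta)^k$-net combined with a deterministic Lipschitz estimate on $\bm{x}'\mapsto\nabla f(\bm{x}')_S$; this is what ultimately forces the $\log n$ factor in $m\gtrsim k\log n$, which enters through the $\beta\asymp\log n$ calibration of the threshold $\tau$. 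These technical steps are precisely where the analysis departs from the low-dimensional treatment that does not need sparse concentration.
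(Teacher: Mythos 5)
Your proof is correct and establishes the same contraction, but the decomposition is genuinely different from the paper's.

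The paper works directly with $I_1^2 = \|\bm{x}'-\tfrac{\mu}{\phi^2}\nabla f(\bm{x}')_S -\bm{x}\|_2^2 = \|\bm{h}\|_2^2 - \tfrac{2\mu}{\phi^2}\langle \bm{h}, \nabla f(\bm{x}')_S\rangle + \tfrac{\mu^2}{\phi^4}\|\nabla f(\bm{x}')_S\|_2^2$, then uses the uniform concentration of products $\tfrac{1}{m}\sum_i(\bm{p}^\top\widetilde{\bm{A}}_i\bm{q})(\bm{u}^\top\widetilde{\bm{A}}_i\bm{v})$ over $\mathcal{X}^4$ (Lemma~\ref{lem12}, bound~\eqref{4.11}) to lower-bound the cross term by $\tfrac14\|\bm{h}\|_2^2$ and upper-bound the squared norm by $36\|\bm{h}\|_2^2$, never explicitly isolating the population gradient. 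You instead first identify $\mathbb{E}[\nabla f(\bm{z})]=(\bm{z}\bm{z}^\top-\bm{x}\bm{x}^\top)\bm{z}=(\bm{I}+\bm{x}\bm{x}^\top)\bm{h}+\bm{r}(\bm{h})$, obtain the contraction deterministically from the spectrum $\{1,2\}$ of $\bm{I}+\bm{x}\bm{x}^\top$, and then bound the remaining empirical fluctuation $\|(\nabla f-\mathbb{E}\nabla f)_S\|_2\lesssim\sqrt{k/m}\,\|\bm{h}\|_2$, which is again deducible from the same Lemma~\ref{lem12} after expanding $\nabla f(\bm{x}')_S$ into bilinear summands with coefficients of the appropriate powers of $\|\bm{h}\|_2$. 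The thresholding piece $I_2$ is handled in the same way in both arguments, and the role of $\beta\asymp\log n$ and $m\gtrsim k\log n$ is correctly located there. Your ``population + multiplicative fluctuation'' route is more transparent about the origin of the contraction rate $1-\mu/\phi^2$ and why only a multiplicative (rather than additive) gradient deviation bound suffices for exact linear convergence; the paper's expansion of $I_1^2$ is more direct, avoids introducing $\mathbb{E}[\nabla f]$ as an intermediate object, and only invokes the concentration lemma on bilinear forms rather than on the full gradient vector. Both arguments ultimately rest on the same uniform concentration estimate~\eqref{4.11}.

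Two minor calibrations you should pin down when writing a full version: the operator-norm bound $\|\bm{I}-\tfrac{\mu}{\phi^2}(\bm{I}+\bm{x}\bm{x}^\top)\|_{op}=1-\tfrac{\mu}{\phi^2}$ requires $\tfrac{2\mu}{\phi^2}\le 1$, which is comfortably satisfied under $\mu<\tfrac{1}{256}$ and $\phi\in[0.9,1.1]$; and the constant in the quadratic remainder bound $\|\bm{r}(\bm{h})\|_2\le 4\|\bm{h}\|_2^2$ is absorbed by choosing $\delta_0$ small, consistent with the lemma's ``sufficiently small constant $\delta_0$.''
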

\begin{proof}
{To preserve the presentation flow of the main body, we defer
the detailed proof of the above Lemma \ref{lem5} to Appendix \ref{app:subsec:proof_lemma_7}.}    
\end{proof}

\subsubsection{Linear Convergence Guarantee for Algorithm \ref{alg2}}
The linear convergence of the TWF-S sequence $\{\bm{\tilde{x}}_t\}$ follows from the aggregation of Lemmas \ref{lem4} and \ref{lem5}.

\begin{lem}\label{lem8}
    {\rm (Linear convergence of the   TWF-S sequence)\textbf{.}}   Assume that the fixed signal $\bm{x}\in \mathbb{S}^{n-1}$ satisfies $\mathrm{supp}(\bm{x})\subset S$. If $m\geq C_1k^2\log n$ for some large enough $C_1$, $\alpha$  in (\ref{defi:Sselect}) and $\beta$ in (\ref{defi:gradthre}) are some suitably chosen constants, {$\mu$ in (\ref{defi:Supdate}) satisfies $0<\mu<c_0$ for some absolute constant $c_0$,} then with probability at least $1-C_2\exp(-\Omega(k))-2n^{-9}$ we have for any positive integer $t$ that
    \begin{equation}\nonumber
        \min_{i=0,1} \|\bm{\tilde{x}}_t-(-1)^i\bm{x}\|_2 \leq \Big(1-\frac{\mu}{16}\Big)^t .
    \end{equation}
\end{lem}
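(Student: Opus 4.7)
The plan is to combine the two ingredients already established: the spectral initialization guarantee of Lemma \ref{lem4} and the per-step contraction of Lemma \ref{lem5}. Concretely, I would fix a small constant $\delta_0 \in (0,1/2)$ on which Lemma \ref{lem5} applies, invoke Lemma \ref{lem4} with this $\delta_0$ so that the initialization $\bm{\tilde{x}}_0$ lies within $\delta_0$ of $\pm\bm{x}$, and then iterate Lemma \ref{lem5} on the intersection of the two good events. Under $m \geq C_1 k^2 \log n$ with $C_1$ absorbing the $\delta_0$-dependent constant from Lemma \ref{lem4}, the initialization event $\mathcal{E}_0 := \{\min_{i=0,1}\|\bm{\tilde{x}}_0 - (-1)^i \bm{x}\|_2 \leq \delta_0\}$ has probability at least $1 - 2\exp(-c m) - 2\exp(-k) - 2 n^{-9}$, and the uniform-contraction event $\mathcal{E}_1$ from Lemma \ref{lem5} has probability at least $1 - C_1 \exp(-\Omega(k))$. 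A union bound, together with $m \gtrsim k$ to absorb $\exp(-cm)$ into $\exp(-\Omega(k))$, yields the claimed probability $1 - C_2 \exp(-\Omega(k)) - 2 n^{-9}$; all remaining work is deterministic on $\mathcal{E}_0 \cap \mathcal{E}_1$.

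Next I would prove by induction on $t \geq 0$ the two-part statement
\begin{equation*}
\mathrm{supp}(\bm{\tilde{x}}_t) \subset S \qquad \text{and} \qquad \min_{i=0,1}\|\bm{\tilde{x}}_t - (-1)^i \bm{x}\|_2 \leq \big(1 - \tfrac{\mu}{16}\big)^t.
\end{equation*}
The base case $t=0$ is the content of Lemma \ref{lem4}, together with the observation that $\bm{\tilde{x}}_0 = \phi \bm{\tilde{v}}_1$ has $\bm{\tilde{v}}_1$ supported in $\widetilde{S}_0 \subset S$ by construction. For the inductive step, the TWF-S update rule (\ref{defi:Supdate}) restricts the gradient to $S$ before applying the entrywise thresholding operator, so $\mathrm{supp}(\bm{\tilde{x}}_{t+1}) \subset S$ is automatic. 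The inductive hypothesis moreover gives $\min_i \|\bm{\tilde{x}}_t - (-1)^i \bm{x}\|_2 \leq (1-\mu/16)^t \delta_0 \leq \delta_0$, so $\bm{\tilde{x}}_t$ falls within the admissible set of Lemma \ref{lem5}, and one application of that lemma on $\mathcal{E}_1$ multiplies the error by a factor $1 - \mu/16$, closing the induction.

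There is essentially no hard part beyond bookkeeping, since the technical heavy lifting (operator-norm concentration for $[\bm{\widetilde{S}}_{in}]_{S,S}$, Davis--Kahan, the gradient expansion (\ref{hhh}), and the uniform-over-$\bm{x}'$ nature of Lemma \ref{lem5}) is already done in the preceding lemmas. The one subtle point worth stressing is that Lemma \ref{lem5}'s conclusion is stated with $\min_{i=0,1}$ on both sides, so the sign ambiguity propagates cleanly through the induction and one does not need to track which sign attains the minimum at each step. A minor caveat is that Lemma \ref{lem5} is formulated for a \emph{fixed} desired signal $\bm{x}$, but since the iterates $\bm{\tilde{x}}_t$ themselves remain random, one must use the \emph{uniform} form of Lemma \ref{lem5} — namely, the high-probability event on which the contraction holds simultaneously for all admissible $\bm{x}'$ — which is precisely how that lemma is already stated.
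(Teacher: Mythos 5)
Your proposal is correct and takes exactly the route the paper intends: the paper gives Lemma~\ref{lem8} no explicit proof, stating only that it "follows from the aggregation of Lemmas~\ref{lem4} and~\ref{lem5}," and your write-up simply makes that aggregation precise (union bound on the two good events, then induction using the base case from Lemma~\ref{lem4} and the uniform per-step contraction of Lemma~\ref{lem5}, with support preservation read off from the update rule~(\ref{defi:Supdate})). Your caveat that Lemma~\ref{lem5} must be used in its uniform-over-$\bm{x}'$ form because the iterates are random is exactly the right thing to flag, and it is indeed how that lemma is stated.
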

However, the ideal sequence $\{\bm{\tilde{x}}_t\}$ is impractical in that it requires the knowledge of $\mathrm{supp}(\bm{x})$ in both spectral initialization and the subsequent refinement. To establish the linear convergence guarantee for the actual sequence $\{\bm{\hat{x}}_t\}$ produced by the proposed TWF, our strategy is to show $\bm{\tilde{x}}_t=\bm{\hat{x}}_t$ holds for $t=0,1,...,T$ with high probability (w.h.p.). To this end, we again analyze the initialization and iteration separately.
\begin{lem}\label{lem7}{\rm (The initialization coincides w.h.p.)}
Assume that $\alpha$ is some suitably chosen constant, then with probability at least $1-2n^{-9}$ we have $\bm{\hat{x}}_0=\bm{\tilde{x}}_0$.
\end{lem}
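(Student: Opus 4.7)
The plan is to observe that Algorithms \ref{alg1} and the TWF-S initialization differ only in the support estimate they use: $\widehat{S}_0\subset[n]$ versus $\widetilde{S}_0 = \widehat{S}_0 \cap [k]$. The tuning constant $\alpha$, the scalar $\phi$, and the matrix $\bm{\widetilde{S}}_{in}$ are computed identically from the same data in both procedures. Therefore, if I can establish that $\widehat{S}_0 \subset [k]$ with probability at least $1 - 2n^{-9}$, then $\widehat{S}_0 = \widetilde{S}_0$, whence $[\bm{\widetilde{S}}_{in}]_{\widehat{S}_0,\widehat{S}_0} = [\bm{\widetilde{S}}_{in}]_{\widetilde{S}_0,\widetilde{S}_0}$, their normalized leading eigenvectors agree, and the two initial points $\bm{\hat{x}}_0 = \phi\bm{\hat{v}}_1$ and $\bm{\tilde{x}}_0 = \phi\bm{\tilde{v}}_1$ coincide.

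To show $\widehat{S}_0 \subset [k]$, I would argue coordinate-wise for indices $l \notin [k]$. Since $\supp(\bm{x})\subset S=[k]$, we have $x_l = 0$ for every such $l$, so Lemma \ref{lem3} guarantees that, on an event of probability at least $1-2n^{-9}$, $|I_l|\leq c\sqrt{\log n/m}$ uniformly over all $l$. On the intersection with the event from Lemma \ref{lem1} (which occurs with probability $1-2\exp(-\Omega(m))$ and is absorbed into the $1-2n^{-9}$ bound since $m = \Omega(\log n)$), we have $\phi \geq 0.9$ and hence $\alpha \phi^2 \sqrt{\log n/m} \geq 0.81\alpha \sqrt{\log n/m}$. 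Choosing $\alpha$ to satisfy $0.81\alpha > c$, which is consistent with the requirement $\alpha\phi^2 \in (c, 2c)$ used in the proof of Lemma \ref{lem4}, we conclude $I_l \leq c\sqrt{\log n/m} < \alpha\phi^2\sqrt{\log n/m}$ for each $l \notin [k]$, i.e., $l \notin \widehat{S}_0$.

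Combining the two events and taking a union bound gives $\widehat{S}_0 \subset [k]$ with probability at least $1-2n^{-9}$, from which $\widehat{S}_0 = \widetilde{S}_0$ and hence $\bm{\hat{x}}_0 = \bm{\tilde{x}}_0$ follows directly. The only delicate point is the calibration of $\alpha$: it must lie in the window $(c/\phi^2,\,2c/\phi^2)$, simultaneously excluding the zero coordinates (needed here) and retaining the coordinates in $\widetilde{S}_0'$ used in the proof of Lemma \ref{lem4}; the interval $\alpha\phi^2 \in (c,2c)$ already assumed in that earlier proof makes both requirements compatible, so no additional obstruction arises.
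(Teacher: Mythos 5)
Your proof is correct and follows essentially the same route as the paper: establish $\widehat{S}_0\subset[k]$ (hence $\widehat{S}_0=\widetilde{S}_0$) by combining Lemma \ref{lem3}'s uniform bound $|I_l|\le c\sqrt{\log n/m}$ for $l\notin\supp(\bm{x})$ with a choice of $\alpha$ ensuring $\alpha\phi^2>c$. You are slightly more explicit than the paper about intersecting with the Lemma \ref{lem1} event for $\phi$ and about checking that the required window for $\alpha$ is nonempty, but the underlying argument is identical.
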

\begin{proof}
     It is not hard to see that $\widehat{S}_0=\widetilde{S}_0$ implies $\bm{\hat{x}}_0=\bm{\tilde{x}}_0$, where $\widehat{S}_0$ and $\widetilde{S}_0$ are defined in (\ref{defi:hatS_0}) and (\ref{defi:Sselect}), respectively. Obviously, $\widetilde{S}_0\subset \widehat{S}_0$, so it remains to show that $\widehat{S}_0\subset \widetilde{S}_0$, which amounts to showing
     $$|I_l|\leq \alpha\phi^2\sqrt{\frac{\log n}{m}},~l=k+1,...,n.$$
     From Lemma \ref{lem3}, with probability at least $1-2n^{-9}$ it holds for all $l\in [n]$ that $|I_l-x_l^2|\leq c'\sqrt{\frac{\log n}{m}}$, which implies 
     $$|I_l|\leq c'\sqrt{\frac{\log n}{m}},~l=k+1,...,n.$$
     Therefore, with a suitably chosen constant $\alpha$ such that $\alpha\phi^2>c'$ (which can be guaranteed by $\alpha >c'/0.9^2$ according to Lemma \ref{lem1}), we arrive at $$|I_l|\leq \alpha \phi^2 \sqrt{\frac{\log n}{m}},~l=k+1,...,n,$$ as desired.
\end{proof}
\begin{lem}\label{lem10}
{\rm (The iteration sequence coincides w.h.p.)}
      Given the iteration number $T$, if we choose 
    $\beta = C_1\log n$ for a sufficiently large $C_1$,
    then with probability at least $  1-C_2Tn^{-8}$ we have $\bm{\hat{x}}_t=\bm{\tilde{x}}_t$ for $t=0,1,2,...,T$.
\end{lem}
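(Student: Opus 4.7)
The plan is to proceed by induction on $t$. The base case $\bm{\hat{x}}_0 = \bm{\tilde{x}}_0$ is exactly Lemma~\ref{lem7} (on a good event of probability at least $1 - 2n^{-9}$). For the inductive step, suppose $\bm{\hat{x}}_t = \bm{\tilde{x}}_t$, so that by Proposition~\ref{pro1} the current iterate has $\supp(\bm{\hat{x}}_t) \subset S = [k]$ and is independent of the entries $E' = \{a^{(i)}_{jl} : (j,l) \notin [k]\times[k]\}$. Since the values $\tau(\cdot)$ and $\nabla f(\cdot)_S$ depend only on $\bm{\hat{x}}_t$ and on entries inside $[k]\times[k]$, we automatically get $\tau(\bm{\hat{x}}_t) = \tau(\bm{\tilde{x}}_t)$ and $\nabla f(\bm{\hat{x}}_t)_S = \nabla f(\bm{\tilde{x}}_t)_S$. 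Comparing (\ref{defi:twfupdate}) and (\ref{defi:Supdate}), the only way the two updates can differ is through a nonzero off-support coordinate in $\bm{\hat{x}}_{t+1}$. Since $(\bm{\hat{x}}_t)_l = 0$ for $l \notin S$, the $l$-th entry inside $\mathcal{T}$ equals $-(\mu/\phi^2)\,[\nabla f(\bm{\hat{x}}_t)]_l$, so it suffices to establish
\begin{equation}\label{eq:plan-key}
\big|[\nabla f(\bm{\hat{x}}_t)]_l\big| \;\le\; \tau(\bm{\hat{x}}_t), \qquad \forall\, l \notin S.
\end{equation}

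To bound the left-hand side of \eqref{eq:plan-key}, write $r_i := \bm{\hat{x}}_t^\top \bm{A}_i \bm{\hat{x}}_t - y_i$, so that $[\nabla f(\bm{\hat{x}}_t)]_l = \frac{1}{m}\sum_{i=1}^m r_i [\bm{\widetilde{A}}_i \bm{\hat{x}}_t]_l$. Because $\bm{\hat{x}}_t$, $\bm{x}$ and the $r_i$'s depend only on $\{[\bm{A}_i]_{S,S}\}_{i=1}^m$, while for $l \notin S$ the quantity $[\bm{\widetilde{A}}_i \bm{\hat{x}}_t]_l = \frac{1}{2}\sum_{j\in S}(a^{(i)}_{lj} + a^{(i)}_{jl})(\bm{\hat{x}}_t)_j$ is a linear combination of entries in $E'$, we can condition on $(\bm{\hat{x}}_t, r_1, \dots, r_m)$ and conclude that the variables $\{[\bm{\widetilde{A}}_i \bm{\hat{x}}_t]_l\}_{i=1}^m$ are, conditionally, i.i.d.\ $\mathcal{N}(0, \tfrac12 \|\bm{\hat{x}}_t\|_2^2)$. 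Hence the sum $\sum_i r_i [\bm{\widetilde{A}}_i \bm{\hat{x}}_t]_l$ is conditionally Gaussian with variance $\tfrac{1}{2}\|\bm{\hat{x}}_t\|_2^2 \sum_i r_i^2$.

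With this conditional Gaussian representation in hand, the standard tail bound gives
\[
\mathbbm{P}\Big(\big|[\nabla f(\bm{\hat{x}}_t)]_l\big| \ge \tau(\bm{\hat{x}}_t) \,\Big|\, \bm{\hat{x}}_t, r_1,\dots,r_m \Big) \;\le\; 2\exp(-2\beta),
\]
because $\tau(\bm{\hat{x}}_t)^2 = \tfrac{\beta}{m^2}\|\bm{\hat{x}}_t\|_2^2 \sum_i r_i^2$ is exactly $2\beta$ times the conditional variance of $\tfrac{1}{m}\sum_i r_i[\bm{\widetilde{A}}_i \bm{\hat{x}}_t]_l$. Taking a union bound over $l \in [n]\setminus S$ and over iterations $t = 0,1,\dots,T-1$, and choosing $\beta = C_1 \log n$ with $C_1$ sufficiently large, yields that \eqref{eq:plan-key} holds simultaneously for all $(t,l)$ with probability at least $1 - 2Tn \cdot \exp(-2C_1 \log n) \ge 1 - C_2 T n^{-8}$. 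Combined with Lemma~\ref{lem7} and the induction closes the argument.

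The main technical point is recognizing that the appropriate filtration is ``everything determined by $[\bm{A}_i]_{S,S}$''; once we condition on that sigma-algebra, Proposition~\ref{pro1} turns the off-support coordinates of $\nabla f(\bm{\hat{x}}_t)$ into genuinely Gaussian random variables with an explicit variance matched to $\tau(\bm{\hat{x}}_t)^2$, and the whole argument reduces to a one-line Gaussian tail bound followed by a union bound. The factor $T$ in the failure probability arises inevitably from this union bound over iterations.
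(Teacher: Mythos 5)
Your argument mirrors the paper's proof almost exactly: base case from Lemma~\ref{lem7}, an inductive implication that the off-support gradient entries falling below $\tau(\cdot)$ forces $\varphi(\bm{\hat{x}}_t)=\eta(\bm{\tilde{x}}_t)$, conditional Gaussianity of $[\nabla f]_l$ for $l\notin S$ after conditioning on $\sigma(\{[\bm{A}_i]_{S,S}\})$, and a Gaussian tail plus union bound over $(t,l)$. Two small points worth tightening: the tail bound should read $2\exp(-\beta)$ rather than $2\exp(-2\beta)$ (for $Z\sim\mathcal{N}(0,\sigma^2)$ with $\tau^2/\sigma^2=2\beta$ one has $\mathbbm{P}(|Z|\ge\tau)\le 2\exp(-\tau^2/2\sigma^2)=2\exp(-\beta)$, absorbed into $C_1$ anyway), and the high-probability events should be stated in terms of $\bm{\tilde{x}}_t$ (which is genuinely $E'$-independent by Proposition~\ref{pro1}) rather than $\bm{\hat{x}}_t$ (which is not, e.g.\ through $\widehat S_0$)---the paper's formulation via events $E_{jt}$ defined with $\bm{\tilde{x}}_t$ sidesteps this, and on the inductive event the two coincide so your bound still applies.
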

\begin{proof}
    From Lemma \ref{lem7} we know $\bm{\hat{x}}_0=\bm{\tilde{x}}_0$ holds with probability at least $1-2n^{-9}$. Suppose $\bm{\hat{x}}_0=\bm{\tilde{x}}_0$ holds, then     the desired event $\{\bm{\hat{x}}_t=\bm{\tilde{x}}_t,t=0,1,...,T\}$ is equivalent to  $\{\varphi(\bm{\hat{x}}_t)=\eta(\bm{\tilde{x}}_t):t=0,1,...,T-1\}$, where $\varphi(\bm{\hat{x}}_{t})$ and $\eta(\bm{\tilde{x}}_t)$ are the update rules for TWF and TWF-S, as defined in (\ref{defi:twfupdate}) and (\ref{defi:Supdate}). For any $j\in [n]\setminus S$ and $t=0,1,...,T$ we define the event $$E_{jt}=\big\{\big|\nabla_j f(\bm{\tilde{x}}_t)\big|<\tau(\bm{\tilde{x}}_t)\big\},$$
    where $\nabla_j f(\bm{\hat{x}}_t)$ is the $j$-th entry of $\nabla f(\bm{\hat{x}}_t)$. Building on $\bm{\hat{x}}_0=\bm{\tilde{x}}_0$, we first claim that the event $$\bigcap_{j\in [n]\setminus S}\bigcap_{0\leq t\leq T-1}E_{jt}$$ can imply the desired $$\{\varphi(\bm{\hat{x}}_t)=\eta(\bm{\tilde{x}}_t),t=0,1,...,T-1\}.$$

    \noindent{\textbf{(Proving the claim)}}
    Specifically,  $\bigcap_{j\in [n]\setminus S}E_{j0}$ states that the last $n-k$ entries of $\nabla f(\bm{\tilde{x}}_0)$ are smaller than  $\tau(\bm{\tilde{x}}_0)$ in magnitude; combined with  $\mathrm{supp}(\bm{\tilde{x}}_0)\subset [k]$ we can use $\bm{\hat{x}}_0=\bm{\tilde{x}}_0$ and the property of the thresholding operator to obtain
    \begin{equation}
       \begin{aligned}
            \label{3.31}&\bm{\hat{x}}_1=\varphi(\bm{\hat{x}}_0)=\varphi(\bm{\tilde{x}}_0)=\mathcal{T}_{\frac{\mu}{\phi^2}\tau(\bm{\tilde{x}}_0)}\big(\bm{\tilde{x}}_0-\frac{\mu}{\phi^2}\nabla f(\bm{\tilde{x}}_0)\big)\\
            &=\mathcal{T}_{\frac{\mu}{\phi^2}\tau(\bm{\tilde{x}}_0)}\big(\bm{\tilde{x}}_0-\frac{\mu}{\phi^2}\nabla f(\bm{\tilde{x}}_0)_S\big)=\eta(\bm{\tilde{x}}_0)=\bm{\tilde{x}}_1.
       \end{aligned}
    \end{equation}
          Built on $\bm{\hat{x}}_1=\bm{\tilde{x}}_1$, $\bigcap_{j\in [n]\setminus S}E_{j1}$ allows us to similarly obtain $\bm{\hat{x}}_2=\bm{\tilde{x}}_2$ as in (\ref{3.31}). Repeating such arguments $T$ times justifies that on the high-probability event $\bm{\hat{x}}_0=\bm{\tilde{x}}_0$, $\bigcap_{j\in [n]\setminus S}\bigcap_{0\leq t\leq T-1}E_{jt}$ implies the desired event, as claimed. Therefore, it remains to show $\bigcap_{j\in[n]\setminus S}\bigcap_{0\leq t\leq T-1}E_{jt}$ holds with the promised probability. We begin with a union bound
         \begin{equation}
             \begin{aligned}
                 \label{3.32}
                 &\mathbbm{P}\Big(\bigcap_{j\in [n]\setminus S}\bigcap_{0\leq t\leq T-1}E_{jt}\Big)\\&\geq 1-\mathbbm{P}\Big(\bigcup_{j\in [n]\setminus S}\bigcup_{0\leq t\leq T-1}E_{jt}^c             \Big)\\&\geq 1-\sum_{j=k+1}^n\sum_{t=0}^{T-1}\mathbbm{P}(E_{jt}^c), 
             \end{aligned}
         \end{equation}
          and we want to further upper-bound $\mathbbm{P}(E_{jt}^c)$ where $E_{jt}^c=\big\{|\nabla_jf(\bm{\tilde{x}}_t)|\geq \tau(\bm{\tilde{x}}_t)\big\}$.

          \noindent{\textbf{(Bounding $\mathbbm{P}(E_{jt}^c)$)}}
          We use $[\bm{r}^{(i)}_j]^\top$ and $\bm{c}^{(i)}_j$ to denote the $j$-th row and $j$-th column of $\bm{{A}}_i$ respectively, then the $j$-th row of $\bm{\widetilde{A}}_i$ is $\frac{1}{2}\big([\bm{r}^{(i)}_j]^\top+[\bm{c}^{(i)}_j]^\top\big)$. Thus, from (\ref{defi:gradient}) we have \begin{equation}
              \label{3.33}\nabla_jf(\bm{\tilde{x}}_t)=\frac{1}{2m}\sum_{i=1}^m\big(\bm{\tilde{x}}_t^\top\bm{A}_i\bm{\tilde{x}}_t-y_i\big)\big(\bm{r}^{(i)}_j+\bm{c}^{(i)}_j\big)^\top\bm{\tilde{x}}_t.
          \end{equation}
          As stated in Proposition \ref{pro1} and analyzed in its proof, $\{y_i\}$ and $\{\bm{\tilde{x}}_t\}$ are independent of $\{\bm{r}^{(i)}_j,\bm{c}_j^{(i)}\}$. Besides, since $\mathrm{supp}(\bm{\tilde{x}}_t)\subset S$, $\bm{\tilde{x}}_t^\top\bm{A}_i\bm{\tilde{x}}_t=\bm{\tilde{x}}_t^\top[\bm{A}_i]_{S,S}\bm{\tilde{x}}_t$ is also independent of $\{\bm{r}_j^{(c)},\bm{c}_j^{(i)}\}$. Therefore, we can utilize the randomness of $\{\bm{r}_j^{(c)},\bm{c}_j^{(i)}\}$ while conditioning on everything else. More specifically, because the i.i.d. random vectors $\{\bm{r}^{(1)}_j,...,\bm{r}^{(m)}_j\}$ follow the standard Gaussian distribution $\mathcal{N}(0,\bm{I}_n)$, the randomness of $\{\bm{r}_j^{(i)}\}$ yields 
          \begin{align*}
              &\frac{1}{2m}\sum_{i=1}^m \big(\bm{\tilde{x}}_t^\top\bm{A}_i\bm{\tilde{x}}_t-y_i\big)\big[\bm{r}_j^{(i)}\big]^\top\bm{\tilde{x}}_t\\
              &\sim\mathcal{N}\big(0,\frac{\|\bm{\tilde{x}}_t\|_2^2}{4m^2}\sum_{i=1}^m\big(\bm{\tilde{x}}_t^\top\bm{A}_i\bm{\tilde{x}}_t-y_i\big)^2\big),
          \end{align*} 
          which remains valid if $\bm{r}_j^{(i)}$ is substituted with $\bm{c}_j^{(i)}$. Thus, combined with (\ref{3.33}), it holds with probability at least $1-2\exp(-u^2)$ for some constant $c_2$ and for any $u>0$ that
          \begin{equation}
              \begin{aligned}
                  \label{3.35} |\nabla_jf(\bm{\tilde{x}}_t)|\leq \frac{c_2\|\bm{\tilde{x}}_t\|_2u}{m}\Big[{\sum_{i=1}^m\big(\bm{\tilde{x}}_t^\top\bm{A}_i\bm{\tilde{x}}_t-y_i\big)^2}\Big]^{1/2} 
              \end{aligned}
          \end{equation}
          Recall that $\tau(\bm{\tilde{x}}_t)$ is given in (\ref{defi:gradthre}) with $\beta=C_1\log n$ under a sufficiently large $C_1$. Thus,   setting $u=\frac{\sqrt{C_1\log n}}{c_2}$ in (\ref{3.35}) yields $\mathbbm{P}(E_{jt}^c)\leq 2n^{-9}$.  Substituting this into (\ref{3.32}), the result immediately follows.
\end{proof}
Combining Lemmas \ref{lem8}-\ref{lem10}, we obtain the theoretical guarantee for the proposed TWF. 
\begin{theorem}
    \label{thm1}{\rm (Global linear convergence of TWF)} Suppose that Algorithm \ref{alg2} is provided with some suitably chosen $\alpha\asymp 1$ for (\ref{defi:hatS_0}), $\beta\asymp \log n$ for (\ref{235}),     iteration number $T$,  {step size $\mu$ obeying $0<\mu<c_0$ for some absolute constant $c_0$}, and we let $\bm{\hat{x}}_0,...,\bm{\hat{x}}_T$ be the sequence produced by Algorithm \ref{alg2}. If $m\geq C_0 k^2\log n$ for sufficiently large $C_0$, then  with probability at least $1-C_1Tn^{-8}-C_2\exp(-\Omega(m))$, it holds for any $t\in \{0,1,\ldots,T\}$ that
    \begin{equation}\nonumber
        \min_{i=0,1}\|\bm{\hat{x}}_t- (-1)^i\bm{x}\|_2 \leq \Big(1-\frac{\mu}{16}\Big)^t.
    \end{equation}
\end{theorem}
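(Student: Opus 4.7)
The plan is to assemble Theorem~\ref{thm1} by a simple union-bound combination of the two auxiliary lemmas that have been painstakingly proved above: Lemma~\ref{lem8} (linear convergence of the ideal TWF-S sequence $\{\bm{\tilde{x}}_t\}$) and Lemma~\ref{lem10} (coincidence $\bm{\hat{x}}_t=\bm{\tilde{x}}_t$ for all $t\leq T$ with high probability). All the substantive work has already been done; Theorem~\ref{thm1} is essentially bookkeeping built on these.

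First, I would verify that the tuning-parameter choices stated in Theorem~\ref{thm1} are simultaneously compatible with the hypotheses of both auxiliary results. Take $\alpha\asymp 1$ large enough so that, on the high-probability event $\phi\in[0.9,1.1]$ supplied by Lemma~\ref{lem1}, the threshold condition $\alpha\phi^2>c'$ in Lemma~\ref{lem7} is satisfied (which guarantees $\widehat{S}_0=\widetilde{S}_0$); take $\beta=C_1\log n$ with $C_1$ large enough for Lemma~\ref{lem10} and in the range demanded by Lemma~\ref{lem5}; and take $\mu>0$ sufficiently small as dictated by Lemma~\ref{lem5}. Each of these requirements is one-sided, so a single joint choice works.

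Second, introduce the two events
\begin{equation}
\nonumber
\calE_{\mathrm{coin}}=\big\{\bm{\hat{x}}_t=\bm{\tilde{x}}_t\text{ for all }t=0,1,\ldots,T\big\},\qquad
\calE_{\mathrm{conv}}=\Big\{\min_{i=0,1}\|\bm{\tilde{x}}_t-(-1)^i\bm{x}\|_2\leq\big(1-\tfrac{\mu}{16}\big)^t\text{ for all }t\geq 0\Big\}.
\end{equation}
Lemma~\ref{lem10} (which already absorbs the initialization match of Lemma~\ref{lem7} at $t=0$) gives $\mathbb{P}(\calE_{\mathrm{coin}})\geq 1-C_2Tn^{-8}$, and Lemma~\ref{lem8} gives $\mathbb{P}(\calE_{\mathrm{conv}})\geq 1-C_3\exp(-\Omega(k))-2n^{-9}$ under the sample-complexity hypothesis $m\geq C_0k^2\log n$. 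On the intersection $\calE_{\mathrm{coin}}\cap\calE_{\mathrm{conv}}$, I can substitute $\bm{\hat{x}}_t=\bm{\tilde{x}}_t$ for $t\in\{0,1,\ldots,T\}$ directly into the bound defining $\calE_{\mathrm{conv}}$ to read off exactly the conclusion of Theorem~\ref{thm1}.

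Third, a union bound yields a total failure probability of at most $C_2Tn^{-8}+C_3\exp(-\Omega(k))+2n^{-9}$. The polynomial term $2n^{-9}$ is absorbed into $C_1Tn^{-8}$ (with the constant renamed, which is valid for $T\geq 1$), leaving the form claimed in the theorem statement. There is no genuine obstacle in this last step: the whole argument is a clean two-event union bound, and the only care required is the verification of parameter compatibility noted above. (A minor notational comment: the residual exponential that emerges naturally from Lemma~\ref{lem8} is in $k$, while Theorem~\ref{thm1} reports it in $m$; under the paper's convention on generic constants and the hypothesis $m\gtrsim k^2\log n$, these are read interchangeably as the generic ``exponentially small'' term absorbed into the constants $C_1,C_2$.)
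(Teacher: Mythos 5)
Your approach is exactly the paper's: Theorem~\ref{thm1} is obtained by combining Lemmas~\ref{lem8}, \ref{lem7}, and \ref{lem10} through a union bound, and your event decomposition together with the parameter-compatibility check reproduces precisely this. One small correction to your closing parenthetical, though: $\exp(-\Omega(k))$ and $\exp(-\Omega(m))$ are \emph{not} interchangeable under $m\gtrsim k^2\log n$; that hypothesis makes $\exp(-\Omega(m))$ strictly smaller than $\exp(-\Omega(k))$, so replacing the $\exp(-\Omega(k))$ failure term delivered by Lemma~\ref{lem8} with $\exp(-\Omega(m))$ would \emph{overstate} the success probability, not leave it unchanged. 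The honest combination of Lemmas~\ref{lem8} and~\ref{lem10} gives $1-C_1Tn^{-8}-C_2\exp(-\Omega(k))$; the $\exp(-\Omega(m))$ in the theorem statement reads as a slip in the paper itself rather than something your argument should try to rationalize as an equivalent form.
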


\section{Solving Generative Quadratic System via Projected Gradient Descent}\label{sec:PGD}
In this section, we consider solving the generative quadratic system where $\bm{x}\in G(\mathbb{B}^k(r))$ for some $L$-Lipschitz generative model $G:\mathbb{B}^k(r)\to \mathbb{R}^n$. Analogously, we will propose a two-step algorithm called projected gradient descent (PGD). In PGD, the first step (see Algorithm~\ref{alg3}) provides a good initialization by the projected power method developed in \cite{liu2022generative}; the second step (see Algorithm~\ref{alg4}) resembles Algorithm \ref{alg2}, with the difference that the thresholding operation $\mathcal{T}_{\frac{\mu}{\phi^2}\tau(\bm{\hat{x}_t})}(\cdot)$ in (\ref{defi:twfupdate}) is replaced by the projection operator $\mathcal{P}_G(\cdot)$ defined as the projection onto $G(\mathbb{B}^k(r))$ under $\ell_2$ norm. This is a result of replacing the sparse prior $\bm{x}\in \Sigma^n_k$ with the generative one $\bm{x}\in G(\mathbb{B}^k(r))$. For technical convenience, we suppose that the generative prior is normalized, i.e., \begin{equation}\label{normalized}\nonumber
    G(\mathbb{B}^k(r))\subset \mathbb{S}^{n-1},
\end{equation} which is also adopted in the relevant parts of \cite{Liu:CPR:2021,liu2022generative}. Actually, our technical analysis does not essentially rely on this normalization assumption, and for a given unnormalized $G(\cdot)$ one can consider its normalized version that is also Lipschitz continuous, see the discussions in \cite[Remark 1]{liu2020sample} and \cite[Page 5]{Liu:CPR:2021}. Moreover, we suppose that we can find some point positively correlated to $\bm{x}$ so that we can exactly recover $\bm{x}$ without trivial ambiguity.

 \begin{algorithm} 
\caption{Projected Power Method}\label{alg3}
\begin{algorithmic}[1]
 \Statex \textbf{Input:} $(\bm{A}_i,y_i)_{i=1}^n$; initial vector $\bw_0$ with $\|\bw_0\|_2=1$

\item Construct a $n\times n$   matrix $ 
    \bm{S}_{in}=\frac{1}{m}\sum_{i=1}^m y_i\bm{A}_i$
and then symmetrize it to
$\bm{\widetilde{S}}_{in} = \frac{1}{2}\big(\bm{S}_{in}+\bm{S}_{in}^\top\big)=\frac{1}{m}\sum_{i=1}^m y_i\bm{\widetilde{A}}_i$. 
\item  Calculate $\hat{\bm{w}}$ as
    \begin{equation}
    \label{defi:ppm}
    \hat{\bm{w}}=\mathcal{P}_G(\bm{\widetilde{S}}_{in}\bm{w}_0).
\end{equation}

\Statex\textbf{Output:} $\hat{\bm{w}}$
\end{algorithmic}
\end{algorithm}

\begin{algorithm} 
\caption{Projected Gradient Descent}\label{alg4}
\begin{algorithmic}[1]
 \Statex \textbf{Input:}   $(\bm{A}_i,y_i)_{i=1}^n$; thresholding operator $\mathcal{T}$;  step size $\mu$;    number of iterations $T$.

 \item Run Algorithm \ref{alg3} to get $\bm{{x}}_0 = \hat{\bm{w}}$. 

 \item \textbf{for} $t=0,1,...,T-1$

Compute $\nabla f(\bm{{x}}_t)$ as in (\ref{defi:gradient}), then update $\bm{x}_t$ to $\bm{x}_{t+1}$ by projected gradient descent 
\begin{equation}    \label{defi:pgdupdate}\bm{{x}}_{t+1}=\mathcal{P}_G\big(\bm{{x}}_t-\mu\cdot\nabla f(\bm{{x}}_t)\big).
\end{equation}

\noindent
\textbf{end}

\Statex\textbf{Output:} $\bm{{x}}_T$.
\end{algorithmic}
\end{algorithm}

Though Algorithms \ref{alg3}-\ref{alg4} involve the projection operator $\mathcal{P}_G(\cdot)$ and thus do not straightforwardly lead to “practical” decoders, we note that such projection operator is commonly adopted in the area of signal reconstruction under generative priors. Similarly to~\cite{peng2020solving, shah2018solving, hyder2019alternating, liu2022generative, liu2022misspecified}, we implicitly assume the exact projection in our theoretical analysis. In practice, approximate methods such as gradient- and GAN-based projections~\cite{shah2018solving, raj2019gan} have been shown to work well.

We will present Theorems \ref{thm:init_gen}-\ref{thm:generefine} as theoretical guarantees for Algorithms \ref{alg3}-\ref{alg4}, respectively. 
Due to the space limit, we only provide a proof sketch here and place the complete proof in the supplementary material.
\subsection{Projected Power Method (Algorithm \ref{alg3})}

\begin{theorem}\label{thm:init_gen}{\rm (Closeness of the projected power method)}
     In Algorithm \ref{alg3}, suppose that   
     \begin{equation} \label{eq:initCond_ppower}
         \bx^\top \bm{w}_0>c_0
     \end{equation}
     for some $c_0>0$  that can be small enough. Then, if $m\geq C_1k\log(Lrn)$ for some sufficiently large $C_1$, with probability at least $1-\exp(-C_2k\log(Lrn))$ we have 
\begin{equation}\label{(30)}
    \|\bm{\hat{w}} -\bx\|_2 \le \frac{C}{c_0}\sqrt{\frac{k \log (nLr)}{m}}. 
\end{equation}
\end{theorem}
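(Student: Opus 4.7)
The plan is to combine a variational inequality coming from the projection step with uniform concentration of $\bm{\widetilde{S}}_{in} - \bx\bx^\top$ acting on $\bw_0$, where the main leverage is that both $\hat{\bw}$ and $\bx$ live in $G(\mathbb{B}^k(r)) \subset \mathbb{S}^{n-1}$ under the normalization assumption.

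\textbf{Step 1 (Projection / sphere inequality).} Since $\hat{\bw}$ is the $\ell_2$-projection of $\bm{\widetilde{S}}_{in}\bw_0$ onto $G(\mathbb{B}^k(r))$ and $\bx \in G(\mathbb{B}^k(r))$, the optimality inequality $\|\hat{\bw}-\bm{\widetilde{S}}_{in}\bw_0\|_2^2 \le \|\bx-\bm{\widetilde{S}}_{in}\bw_0\|_2^2$, together with $\|\hat{\bw}\|_2=\|\bx\|_2=1$, reduces to $\langle \hat{\bw}-\bx,\bm{\widetilde{S}}_{in}\bw_0\rangle \ge 0$. By Lemma \ref{lem2}, $\mathbb{E}[\bm{\widetilde{S}}_{in}]=\bx\bx^\top$, so I write $\bm{\widetilde{S}}_{in}\bw_0 = (\bx^\top\bw_0)\bx + \bm{e}$, where $\bm{e}:=(\bm{\widetilde{S}}_{in}-\bx\bx^\top)\bw_0$ is the centered noise. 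Using the sphere identity $\langle \hat{\bw}-\bx,\bx\rangle = -\tfrac12\|\hat{\bw}-\bx\|_2^2$ and $\bx^\top\bw_0>c_0$, I obtain the key inequality
\[
    \tfrac{c_0}{2}\,\|\hat{\bw}-\bx\|_2^2 \;\le\; \langle \hat{\bw}-\bx,\bm{e}\rangle .
\]

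\textbf{Step 2 (Localized concentration).} For any fixed $\bm{u}$, the random variable $\bm{u}^\top \bm{e} = \tfrac{1}{m}\sum_{i=1}^m \bigl(y_i\,\bm{u}^\top\bm{\widetilde{A}}_i\bw_0 - (\bm{u}^\top\bx)(\bx^\top\bw_0)\bigr)$ is a centered sum of sub-exponential terms whose $\psi_1$-norm is $O(\|\bm{u}\|_2)$: indeed $y_i\sim\mathcal N(0,1)$, and $\bm{u}^\top\bm{\widetilde{A}}_i\bw_0$ is Gaussian with variance at most $\|\bm{u}\|_2^2\|\bw_0\|_2^2 = \|\bm{u}\|_2^2$, so (\ref{2.3}) gives the bound. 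Bernstein's inequality (Lemma \ref{bernstein}) yields $|\bm{u}^\top\bm{e}|\lesssim \|\bm{u}\|_2\sqrt{t/m}$ with probability $1-2e^{-t}$ in the sub-Gaussian regime. Because $G$ is $L$-Lipschitz, $G(\mathbb{B}^k(r))$ admits an $\epsilon$-net $N_\epsilon$ with $|N_\epsilon|\le(4Lr/\epsilon)^k$; applying the pointwise bound to all differences $\bm{u}_1'-\bm{u}_2'$ with $\bm{u}_1',\bm{u}_2'\in N_\epsilon$ and choosing $t\asymp k\log(Lr/\epsilon)$, a union bound gives
\[
    |(\bm{u}_1'-\bm{u}_2')^\top\bm{e}| \;\lesssim\; \|\bm{u}_1'-\bm{u}_2'\|_2\,\sqrt{\tfrac{k\log(Lr/\epsilon)}{m}}
\]
uniformly over the net, with failure probability $\exp(-ck\log(Lr/\epsilon))$. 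For arbitrary $\bm{u}_1,\bm{u}_2\in G(\mathbb{B}^k(r))$, approximation by net elements together with a separate crude bound $\|\bm{e}\|_2\lesssim\sqrt{n/m}$ (from operator-norm concentration of $\bm{\widetilde{S}}_{in}-\bx\bx^\top$) yields a slack bound
\[
    |(\bm{u}_1-\bm{u}_2)^\top\bm{e}| \;\lesssim\; \bigl(\|\bm{u}_1-\bm{u}_2\|_2+2\epsilon\bigr)\sqrt{\tfrac{k\log(Lr/\epsilon)}{m}} + 2\epsilon\|\bm{e}\|_2 .
\]
Choosing $\epsilon$ polynomially small in $1/(nLr)$ makes $\log(1/\epsilon)\lesssim\log(nLr)$ and the discretization slack $\epsilon\|\bm{e}\|_2\lesssim k\log(nLr)/m$.

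\textbf{Step 3 (Conclusion).} Setting $\bm{u}_1=\hat{\bw}$, $\bm{u}_2=\bx$ in Step 2 and substituting into the Step 1 inequality gives, with $D:=\|\hat{\bw}-\bx\|_2$, a quadratic of the shape $\tfrac{c_0}{2}D^2 \le CD\sqrt{k\log(nLr)/m} + O(k\log(nLr)/m)$, which solves to $D\lesssim\sqrt{k\log(nLr)/m}/c_0$, yielding (\ref{(30)}).

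\textbf{Main obstacle.} The crux is obtaining the {\em localized} (i.e., linear in $\|\hat{\bw}-\bx\|_2$) bound on $\langle\hat{\bw}-\bx,\bm{e}\rangle$ in Step 2. A naive net bound on $\sup_{\bm{u}\in G(\mathbb{B}^k(r))}|\bm{u}^\top\bm{e}|$ in absolute value scales like $\sqrt{k\log(nLr)/m}$, and plugging this into Step 1 yields only the suboptimal rate $(k\log(nLr)/m)^{1/4}$; the sharper bound requires exploiting the fact that the Bernstein variance proxy for $(\bm{u}_1-\bm{u}_2)^\top\bm{e}$ scales with $\|\bm{u}_1-\bm{u}_2\|_2$. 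The scale-matched net argument above (with a polynomially fine $\epsilon$ absorbed by a crude $\|\bm{e}\|_2$-bound) is therefore the delicate technical step, and is also where the $\log n$ factor in the conclusion enters.
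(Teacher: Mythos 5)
Your proposal is correct, and it takes a genuinely different route from the paper. The paper's own proof is essentially one line: it verifies conditions \textbf{(C1)} and \textbf{(C2)} via Lemma 13 and then invokes Lemma 15 (a black-box result adapted from \cite[Lemma 2]{liu2022generative}) with $\bm{V}=\bm{\widetilde{S}}_{in}$, $\bm{E}=\bm{\widetilde{S}}_{in}-\bx\bx^\top$, followed by the choice $\delta=1/n$. You instead unpack the mechanics of that lemma: Step~1 (the projection optimality inequality on the sphere, yielding $\frac{c_0}{2}\|\hat{\bw}-\bx\|_2^2\le\langle\hat{\bw}-\bx,\bm{e}\rangle$) and Step~2 (the scale-matched $\epsilon$-net bound on $(\bm{u}_1-\bm{u}_2)^\top\bm{e}$ with the discretization error absorbed by a crude $\|\bm{e}\|_2\lesssim n/m$ bound, matching the paper's $\sqrt{n\delta/m}$ slack term) are exactly the two ingredients that the cited lemma encapsulates, and your choice $\epsilon\asymp 1/n$ mirrors the paper's $\delta=1/n$. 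Your diagnosis in the ``Main obstacle'' paragraph is also the right one: the naive uniform bound on $\sup_{\bm{u}}|\bm{u}^\top\bm{e}|$ only yields the $(k\log(nLr)/m)^{1/4}$ rate, and the linear-in-$\|\hat{\bw}-\bx\|_2$ scaling from Bernstein's variance proxy is what restores the $1/2$ power. The trade-off is: the paper's route is shorter and reuses a general tool; yours is self-contained and makes transparent where the $c_0$ dependence, the $\log(nLr)$ factor, and the sample-complexity threshold $m\gtrsim k\log(Lrn)$ each arise.
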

\begin{proof}[A Sketch of the proof] We derive the result from Lemma 15 (in the supplementary material) that is adapted from the general framework built in \cite[Lem. 2]{liu2022generative}. Specifically, we can invoke  Lemma 15 with the data matrix $\bm{V}=\bm{\widetilde{S}}_{in}$ that concentrates around its expectation $\bm{xx}^\top$ (Lemma \ref{lem2}) with error $\bm{E}= \bm{\widetilde{S}}_{in}-\bm{xx}^\top$, since the two conditions $\textbf{(C1)},\textbf{(C2)}$ (needed in Lemma 15)  
are verified in Lemma 13 (in the supplementary material).  Then, properly rescaling $\delta$ gives the desired result.  
\end{proof}
\begin{remark}\label{rem:genepp}
    Since a $d$-layer feedforward neural network generative model is typically $L$-Lipschitz with $L = n^{\Theta(d)}$~\cite{bora2017compressed}, we may set $r = n^{\Theta(d)}$ {so that the upper bound in~(\ref{(30)}) is of order $O(\sqrt{\frac{k \log L}{m}})=O(\sqrt{\frac{kd\log n}{m}})$.} 
\end{remark} 
\begin{remark}\label{rem:trivial}
In certain applications, the assumption that the elements of the underlying signal are all non-negative is a natural one, as is the case with image datasets. If this assumption is valid, then the requirement  $\bx^\top\bw_{0}> c_0$ (for some small positive constant $c_0$) might become relatively minor in practice, e.g., we can set $\bw_0=(\frac{1}{\sqrt{n}},\cdots,\frac{1}{\sqrt{n}})^\top$ if the non-negative $\bm{x}$ is smooth (or ``non-peaky'').  In addition, similar assumptions have been made in relevant works including~\cite{liu2022generative, liu2022misspecified}. In this situation, it is not appropriate to assume that $-\bx$ is also contained in $G(\mathbb{B}^k(r))$. In this work we follow similar convention --- we assume we can find a point $\bm{w}_0$ positively correlated to $\bm{x}$ (as in (\ref{eq:initCond_ppower})) to allow for
 an upper bound on $\|\hat{\bw} -\bx\|_2$ (instead of $\min \{\|\hat{\bw} -\bx\|_2, \|\hat{\bw} +\bx\|_2\}$ in the sparse case). 
\end{remark}

\subsection{Projected Gradient Descent (Algorithm \ref{alg4})}
\begin{theorem}\label{thm:generefine}
    {\rm(Linear convergence of PGD)} Let $\delta\in(0,1)$ be given and  sufficiently small. In Algorithm \ref{alg4}, suppose that $m=C_1k\log(\frac{Lrn}{\delta^2})$ with a sufficiently large $C_1$, and  that the step size $\mu\leq 1$ satisfies 
    \begin{equation}\label{eq:initCond_pgd_gen}
        2-(2-7\|\bm{x}_0-\bm{x}\|_2)\mu < 1-2\varepsilon,
    \end{equation}
    for some constant $\varepsilon \in \big(0, \frac{1}{2}\big]$. Then, with probability at least $1-\exp(-C_2k\log\frac{Lrn}{\delta^2})$, PGD produces a sequence that linearly converges to a point with $O(\delta)$ error, with the following inequality holds for any $t$:
    \begin{equation}\label{(33)}
        \|\bm{x}_{t}-\bm{x} \|_2\leq (1-\varepsilon)^t\cdot\|\bm{x}_0-\bm{x}\|_2+ C_3\delta,~\forall~t\geq 0.
    \end{equation}
\end{theorem}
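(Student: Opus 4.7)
The plan is to combine the non-expansiveness of $\mathcal{P}_G$ (with respect to $\bm{x}\in G(\mathbb{B}^k(r))$) with a uniform estimate on how far $\nabla f(\bm{x}+\bm{h})$ deviates from the ``clean'' descent direction $\bm{h}$. Writing $\bm{h}_t=\bm{x}_t-\bm{x}$ and using that $\bm{x}\in G(\mathbb{B}^k(r))$ together with the definition of the projection, one has
\begin{equation}\nonumber
\|\bm{x}_{t+1}-\bm{x}\|_2\leq \|\bm{x}_{t+1}-(\bm{x}_t-\mu\nabla f(\bm{x}_t))\|_2+\|\bm{x}-(\bm{x}_t-\mu\nabla f(\bm{x}_t))\|_2\leq 2\|\bm{h}_t-\mu\nabla f(\bm{x}+\bm{h}_t)\|_2,
\end{equation}
and then, since $\mu\leq 1$, a triangle inequality gives $\|\bm{h}_t-\mu\nabla f(\bm{x}+\bm{h}_t)\|_2\leq (1-\mu)\|\bm{h}_t\|_2+\mu\|\bm{h}_t-\nabla f(\bm{x}+\bm{h}_t)\|_2$. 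The entire proof thus reduces to obtaining a clean bound on $\|\bm{h}-\nabla f(\bm{x}+\bm{h})\|_2$ that is uniform over $\bm{x},\bm{x}+\bm{h}\in G(\mathbb{B}^k(r))$.

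The key technical estimate I would establish (this is what Lemma 14 in the supplement should supply) is that, with probability at least $1-\exp(-C_2 k\log(Lrn/\delta^2))$,
\begin{equation}\nonumber
\|\bm{h}-\nabla f(\bm{x}+\bm{h})\|_2 \leq \tfrac{7}{2}\|\bm{h}\|_2^2+ C\delta, \quad \forall\; \bm{x},\,\bm{x}+\bm{h}\in G(\mathbb{B}^k(r)).
\end{equation}
For the expectation part, an explicit Gaussian calculation using the identity $\mathbb{E}[(\bm{x}^\top\widetilde{\bm{A}}_i\bm{h})\widetilde{\bm{A}}_i\bm{u}]=\tfrac{1}{2}((\bm{h}^\top\bm{u})\bm{x}+(\bm{x}^\top\bm{u})\bm{h})$ and $\mathbb{E}[(\bm{h}^\top\bm{A}_i\bm{h})\widetilde{\bm{A}}_i\bm{u}]=(\bm{h}^\top\bm{u})\bm{h}$, combined with the normalization $\|\bm{x}\|_2=\|\bm{x}+\bm{h}\|_2=1$ (which forces $\bm{x}^\top\bm{h}=-\|\bm{h}\|_2^2/2$), yields $\mathbb{E}[\nabla f(\bm{x}+\bm{h})]=\bm{h}+\tfrac{\|\bm{h}\|_2^2}{2}\bm{x}$, giving a deterministic bias of $\tfrac{1}{2}\|\bm{h}\|_2^2$. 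The remaining slack up to $\tfrac{7}{2}$ absorbs the stochastic deviation, which I would control by decomposing $\nabla f(\bm{x}+\bm{h})$ into four Gaussian bilinear/quartic pieces and applying a covering-number argument for $G(\mathbb{B}^k(r))$: since $G$ is $L$-Lipschitz on $\mathbb{B}^k(r)$, the set admits an $\eta$-net of cardinality $(Lr/\eta)^k$, so Bernstein's inequality (Lemma \ref{bernstein}) together with a union bound and a standard discretization error step yields uniform concentration at scale $\sqrt{k\log(Lrn)/m}$, which is $O(\delta)$ under $m\gtrsim k\log(Lrn/\delta^2)$.

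Given the key estimate, the iteration bound reads
\begin{equation}\nonumber
\|\bm{x}_{t+1}-\bm{x}\|_2 \leq [2-(2-7\|\bm{h}_t\|_2)\mu]\,\|\bm{h}_t\|_2+2C\mu\delta.
\end{equation}
I would then run an induction: the hypothesis $\|\bm{h}_t\|_2\leq \|\bm{h}_0\|_2$ combined with the standing assumption (\ref{eq:initCond_pgd_gen}) gives $2-(2-7\|\bm{h}_t\|_2)\mu\leq 1-2\varepsilon\leq 1-\varepsilon$, so $\|\bm{h}_{t+1}\|_2\leq (1-\varepsilon)\|\bm{h}_t\|_2+C_3'\delta$; iterating the one-step recursion and summing the geometric series delivers (\ref{(33)}) with $C_3=C_3'/\varepsilon$, and the resulting bound immediately preserves the inductive hypothesis for $\delta$ small enough.

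The main obstacle, and where most of the work sits, is the uniform concentration step. Unlike the sparse case (Lemma \ref{lem5}), one cannot restrict $\bm{h}$ to a support and exploit independence; instead $\bm{h}$ ranges over the non-linear Minkowski difference $G(\mathbb{B}^k(r))-G(\mathbb{B}^k(r))$, and the four constituent terms of $\nabla f(\bm{x}+\bm{h})$ mix products of up to four Gaussian entries, so I expect to have to split each term into its ``expected'' part and a ``fluctuation'' part, handle each by Bernstein after suitable truncation, and chain the net error through the Lipschitz constant $L$. Matching all of these terms into the single clean constant $\tfrac{7}{2}$ requires careful bookkeeping but no conceptual novelty beyond that.
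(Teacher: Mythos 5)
The early reduction is fine as algebra: from the optimality of the projection, $\|\bm{x}_{t+1}-\tilde{\bm{x}}_{t+1}\|_2\le \|\bm{x}-\tilde{\bm{x}}_{t+1}\|_2$, and hence the triangle inequality gives $\|\bm{h}_{t+1}\|_2\le 2\|\bm{h}_t-\mu\nabla f(\bm{x}_t)\|_2$. Your expectation computation (giving $\mathbbm{E}[\nabla f(\bm{x}+\bm{h})]=\bm{h}+\tfrac{1}{2}\|\bm{h}\|_2^2\,\bm{x}$ under the normalization $\|\bm{x}\|_2=\|\bm{x}+\bm{h}\|_2=1$) is also correct. The problem is the ``key technical estimate'' you propose to prove, namely that with probability $1-\exp(-\Omega(k\log(Lrn/\delta^2)))$ one has, uniformly over $\bm{x},\bm{x}+\bm{h}\in G(\mathbb{B}^k(r))$,
\[
\|\bm{h}-\nabla f(\bm{x}+\bm{h})\|_2\leq \tfrac{7}{2}\|\bm{h}\|_2^2+C\delta .
\]
This bound does \emph{not} hold in the regime $m\asymp k\log(Lrn/\delta^2)\ll n$. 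The obstruction: the left-hand side is a full $\ell_2$ norm in $\mathbb{R}^n$, i.e., $\sup_{\bm{u}\in\mathbb{S}^{n-1}}\bm{u}^\top(\bm{h}-\nabla f(\bm{x}+\bm{h}))$, and the test direction $\bm{u}$ ranges over the entire $n$-dimensional sphere, which is \emph{not} a low-complexity set even though $\bm{h}$ is. Already the linear-in-$\bm{h}$ fluctuation $\frac{2}{m}\sum_i(\bm{h}^\top\tilde{\bm{A}}_i\bm{x})\tilde{\bm{A}}_i\bm{x}-(\bm{h}+(\bm{h}^\top\bm{x})\bm{x})$ has $\ell_2$ norm of order $\|\bm{h}\|_2\sqrt{n/m}$ (indeed $n/m$ when $n>m$, as in Lemma 13's operator-norm bound), which is not $O(\delta)+O(\|\bm{h}\|_2^2)$ when $m\ll n$. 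No covering of $G(\mathbb{B}^k(r))$ can save this, since the uncontrolled degree of freedom is $\bm{u}$, not $\bm{h}$.

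The paper avoids precisely this pitfall. Rather than taking norms of the descent step, it uses the projection optimality in the form $\|\tilde{\bm{x}}_{t+1}-\bm{x}_{t+1}\|_2^2\le\|\tilde{\bm{x}}_{t+1}-\bm{x}\|_2^2$, which (after expanding the squares) gives $\|\bm{h}_{t+1}\|_2^2\le 2\langle\bm{h}_t-\mu\nabla f(\bm{x}_t),\bm{h}_{t+1}\rangle$, and then \emph{keeps the inner product with $\bm{h}_{t+1}$ rather than applying Cauchy--Schwarz}. Every stochastic term that needs concentrating then has the form $\frac{1}{m}\sum_i(\cdots)(\bm{h}_{t+1}^\top\tilde{\bm{A}}_i\cdot)$, i.e., the ``test'' vector is $\bm{h}_{t+1}$, which lives in $G(\mathbb{B}^k(r))-G(\mathbb{B}^k(r))$, a set of metric entropy $O(k\log(Lr/\delta))$. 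Lemma 14 in the supplement is stated pointwise precisely because the covering is taken only over this low-complexity set (together with the net-approximation errors absorbed via $\|\tilde{\bm{A}}_i\|_{op}=O(\sqrt{n})$ and the rescaling $\delta\mapsto\delta^2/n$). Your step of passing to $\|\bm{h}_t-\nabla f(\bm{x}_t)\|_2$ discards this structure and injects an extra $n$-dimensional supremum that $m=O(k\log(Lrn/\delta^2))$ measurements cannot control; this is the fatal gap in the argument.
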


\begin{proof}[A Sketch of the proof] 
To prove (\ref{(33)}), it suffices to show \begin{equation}\nonumber
    \|\bm{x}_{t+1}-\bm{x}\|_2\leq (1-\varepsilon)\cdot\|\bm{x}_t-\bm{x}\|_2 +O(\delta),~\forall~t\geq 0. 
\end{equation}
Let $\bm{h}_t:=\bm{x}_t-\bm{x}$, and we need to bound $\|\bm{h}_{t+1}\|_2$ by $\|\bm{h}_t\|_2$. We decompose $\|\bm{h}_{t+1}\|_2^2$ into several terms in (37). To bound these terms, the main technique is to first bound it over a $\delta$-net $G(M)$ of $G(\mathbb{B}^k(r))$ {\it (to this end, we only need to invoke Lemma 14 (in the supplementary material) with a union bound over $G(M)$)}, and then control the approximation error of the net {\it (to this end, we find $\bm{u},\bm{v}$ such that $\|\bm{x}_t-u\|_2,\|\bm{x}_{t+1}-\bm{v}\|_2\leq \delta$, and then utilize the simple bound $\|\bm{\tilde{A}}_i\|_{op}=O(\sqrt{n})$, see (41) for instance)}. Finally, the statement can be obtained by properly rescaling $\delta$. We comment that unlike the condition $\mu\in (0,c_0)$ in Lemma \ref{lem5} for the sparse case, to ensure (\ref{eq:initCond_pgd_gen})
the step size $\mu$ in the generative case cannot be overly small. This stems from the differences in the analyses of the thresholding operator $\mathcal{T}_{.}(\cdot)$ and the projection operator $\mathcal{P}_G(\cdot)$, and we note that 
in the context of generative priors, projected gradient descent type algorithms often demand a lower bound on the step size (e.g., \cite{shah2018solving, liu2022misspecified}). 
\end{proof}
\begin{remark}\label{rem:gen_error_comparison}
Based on the discussion in Remark~\ref{rem:genepp}, we know that $\delta$ can be as small as $\frac{1}{n^q}$ for a large positive constant $q$ (for example, $q=100$) without affecting the order of the sample complexity. As a result, after a sufficient number of iterations, the projected gradient descent algorithm can achieve a reconstruction error significantly smaller than that of the projected power method.
\end{remark}
\begin{remark}
    Note that the initialization condition~\eqref{eq:initCond_pgd_gen} for the projected gradient descent algorithm is more stringent than the initialization condition~\eqref{eq:initCond_ppower} for the projected power method. More specifically, For~\eqref{eq:initCond_ppower}, since $c_0>0$ can be small enough, $\|\bx-\bw_0\|_2$ can be close to $\sqrt{2}$. However, for~\eqref{eq:initCond_ppower}, $\|\mathbf{x} -\bx_0\|_2$ must be smaller. For example, when $\|\mathbf{x} -\bx_0\|_2 \ge \frac{2}{7}$, the condition in~\eqref{eq:initCond_pgd_gen} cannot hold. This stricter condition can also be seen in the experiments, where, despite using the same initial vector, the reconstructed images of the projected gradient descent algorithm exhibit much worse results compared to the projected power method.
\end{remark}

\section{Experimental Results}\label{sec:exp}
Using the proposed approaches, we conduct numerical experiments to corroborate our theoretical results.
\begin{itemize}
\item \emph{Sparse prior.} The proposed approach comprises a spectral initialization step and a refinement step via the thresholded Wirtinger flow (\texttt{TWF}). We begin by examining how the distance between the spectral initializer and a global optimizer evolves under varying numbers of measurements. Subsequently, we compare \texttt{TWF} with Wirtinger flow (\texttt{WF}) \cite{Huang:Quadratic:2020} and sparse power factorization (\texttt{SPF}) \cite{Lee:SPF:2018}, and demonstrate that it outperforms the other two methods significantly.
\item \emph{Generative prior.} Our proposed method also consists of two steps. First, we use the projected power method, referred to as \texttt{PPower}, to initialize the procedure. It is shown that, under proper initialization, \texttt{PPower} can produce reasonable good reconstructions. Next, we employ the projected gradient descent algorithm, referred to as \texttt{PGD}, to refine the estimator obtained from~\texttt{PPower}. We also compare our approach with that of performing \texttt{PGD} while using the same initial vector as that of \texttt{PPower} to demonstrate the significance of initialization for \texttt{PGD}.
\end{itemize}

\subsection{Sparse Prior}
\subsubsection{Closeness of Spectral Initializer}
\label{subsec:exp_spec_init}
To verify the effectiveness of the proposed spectral initializer $\widehat{\vx}_0$ that has an estimated sparsity support (SI-S), we performed simulation experiments to compute the relative distances between $\widehat{\vx}_0$ and the global optimizer $\vx$ (up to a change of sign). We set the signal's dimensionality $n=500$ and the number of nonzero entries $k=5$. The nonzero entries were generated independently from the uniform distribution in $[-0.5,0.5]$. The random Gaussian measurement matrix $\mA_i$ was generated from the standard normal distribution $\mathcal{N}(0,1)$, and the quadratic measurement $y_i$ was computed via $y_i=\vx^\top\mA_i\vx$. We gradually increased the number of measurements $m$ from $50$ to $500$, and computed $\widehat{\vx}_0$ using Algorithm \ref{alg1} with the parameter $\alpha$ set to $0.5$. We also computed the standard spectral initialization (SI) $\widehat{\vx}_0^\prime$ proposed in \cite{Huang:Quadratic:2020}, which does not have an estimated sparsity support. The two spectral initializers were compared in terms of the relative distance to the global optimizer $\vx$. For example, the relative distance between $\widehat{\vx}_0$ and $\vx$ is calculated as 

\begin{align}
\label{eq:rel_distance}
\frac{\textnormal{dist}(\widehat{\vx}_0,\vx)}{\|\vx\|_2}=\min_{i\in(1,-1)}\frac{\|\widehat{\vx}_0-(-1)^i\cdot\vx\|_2}{\|\vx\|_2}\,.
\end{align}

\begin{figure}[tpb]
    \centering
    \includegraphics[width=0.6\columnwidth]{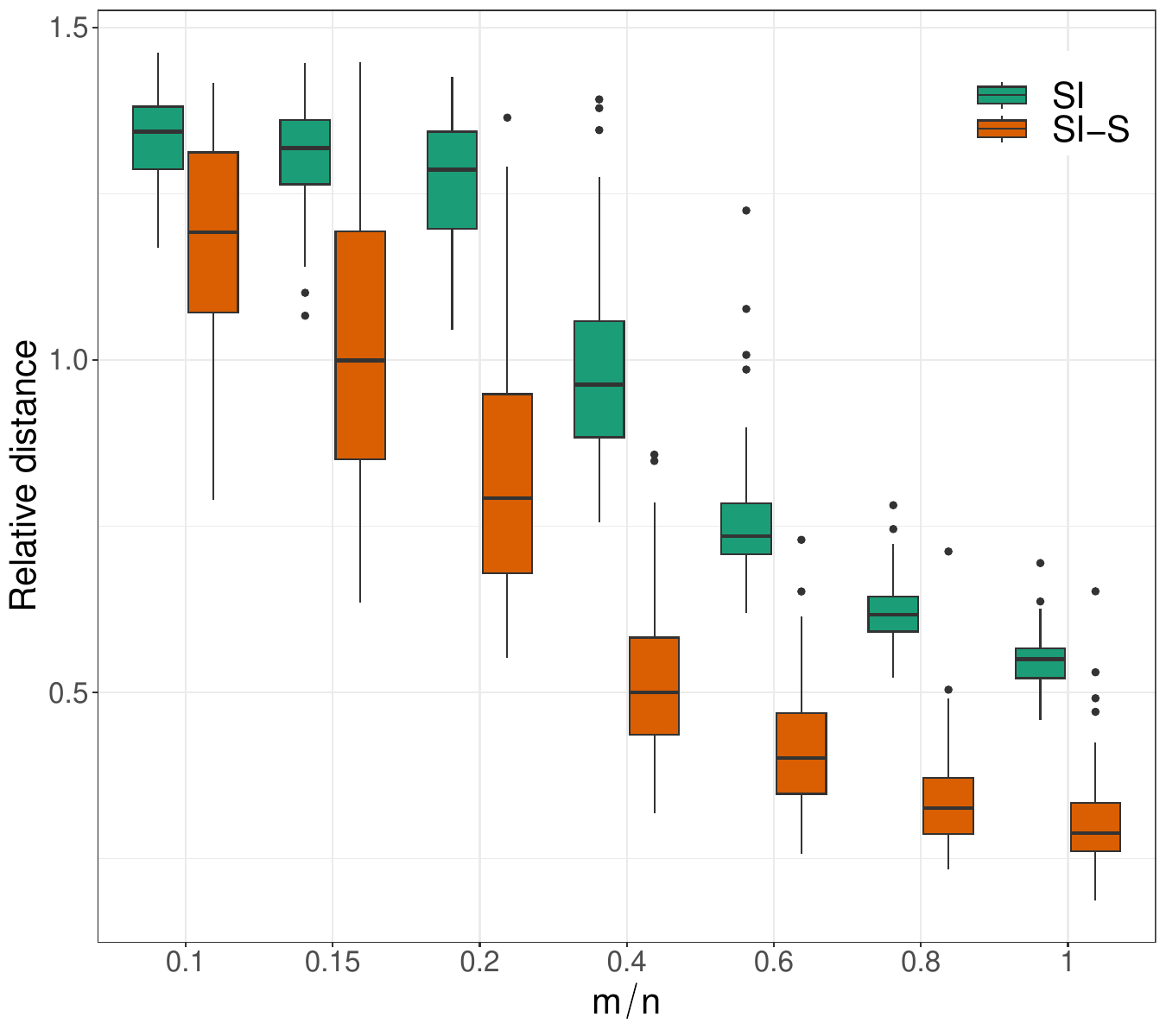}
    \vspace{-1em}
    \caption{Comparison of the standard spectral initialization (SI) and the proposed spectral initialization with estimated sparsity support (SI-S). The relative distances from the initializers to a global optimizer were calculated with $k=5$, $n=500$ and varying number of measurements $m$.}
    \vspace{-1em}
    \label{fig:spec_init_rel_distance}
\end{figure}

For each $m$, we performed $100$ random trials and showed the obtained relative distances using the box plot in Fig. \ref{fig:spec_init_rel_distance}. We can see that the proposed SI-S $\widehat{\vx}_0$ is generally closer to a global optimizer than the standard SI $\widehat{\vx}_0^\prime$. 

\subsubsection{Phase Transition Behavior}
\label{subsec:exp_pt}
We next compare the phase transition curves of \texttt{WF}, \texttt{SPF} and \texttt{TWF} on the reconstruction of sparse signals. We generated the sparse signal $\vx$ and quadratic measurements $\{y_i\}$ following the same process as in section \ref{subsec:exp_spec_init}. We fixed the signal dimensionality $n=100$, and varied the number of nonzero entries $k$ and the number of measurements $m$, where $k\in\{10, 20,\cdots,100\}$ and $m\in{25, 50, \cdots, 200}$. For each combination of $k$ and $m$, we performed $100$ random trials to calculate the success rate. We computed the relative distance between the reconstruction $\hat{\vx}_t$ and the ground-truth $\vx$ as defined in \eqref{eq:rel_distance}. If it was less than $1e^{-3}$, the reconstruction was considered to be a success.

Here we focused on comparing the refinement stage of the three approaches and thus initialized them with the same spectral initializer $\widehat{\vx}_0$. For the proposed \texttt{TWF} summarized in Algorithm \ref{alg2}, we set the threshold tuning parameter $\beta=0.5$, chose the soft-thresholding operation, and adopted an iterative ``damping'' strategy to tune the step size $\mu$. Starting with a relatively large $\mu=0.1$, we reduced $\mu$ by a factor of 2 every 1000 iterations until the maximum number of iteration $T=4000$ was reached. The \texttt{WF} method does not perform iterative thresholding on the solution and is equivalent to the proposed \texttt{TWF} when $\beta$ is simply set to $0$. \texttt{SPF} performs iterative hard thresholding pursuit (\texttt{HTP}) on the estimate and requires a pre-specified sparsity level $k$ for \texttt{HTP} to proceed. In the experiments, we assumed the true sparsity level $k$ was given for \texttt{SPF}.

\begin{figure*}[tbp]
\begin{center}
\vspace{-3em}
\subfloat[]{
\label{fig:wf_ptc}
\includegraphics[height=.29\textwidth]{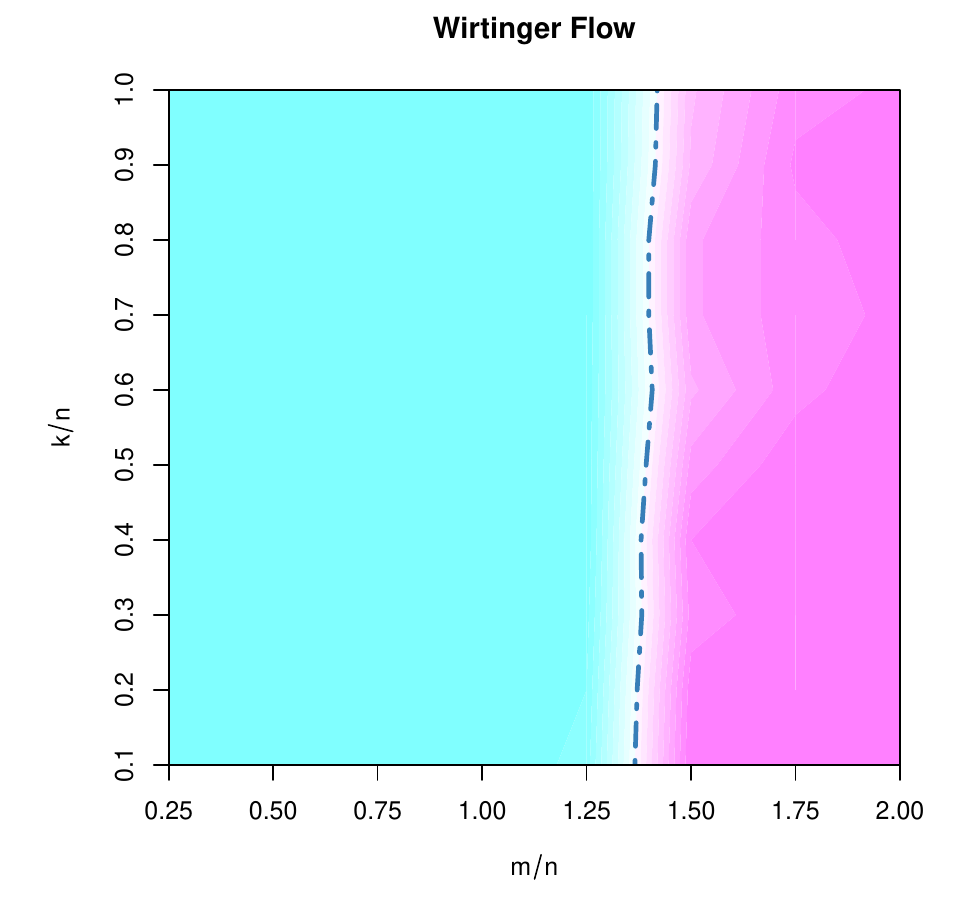}}
\subfloat[]{
\label{fig:spf_ptc}
\includegraphics[height=.29\textwidth]{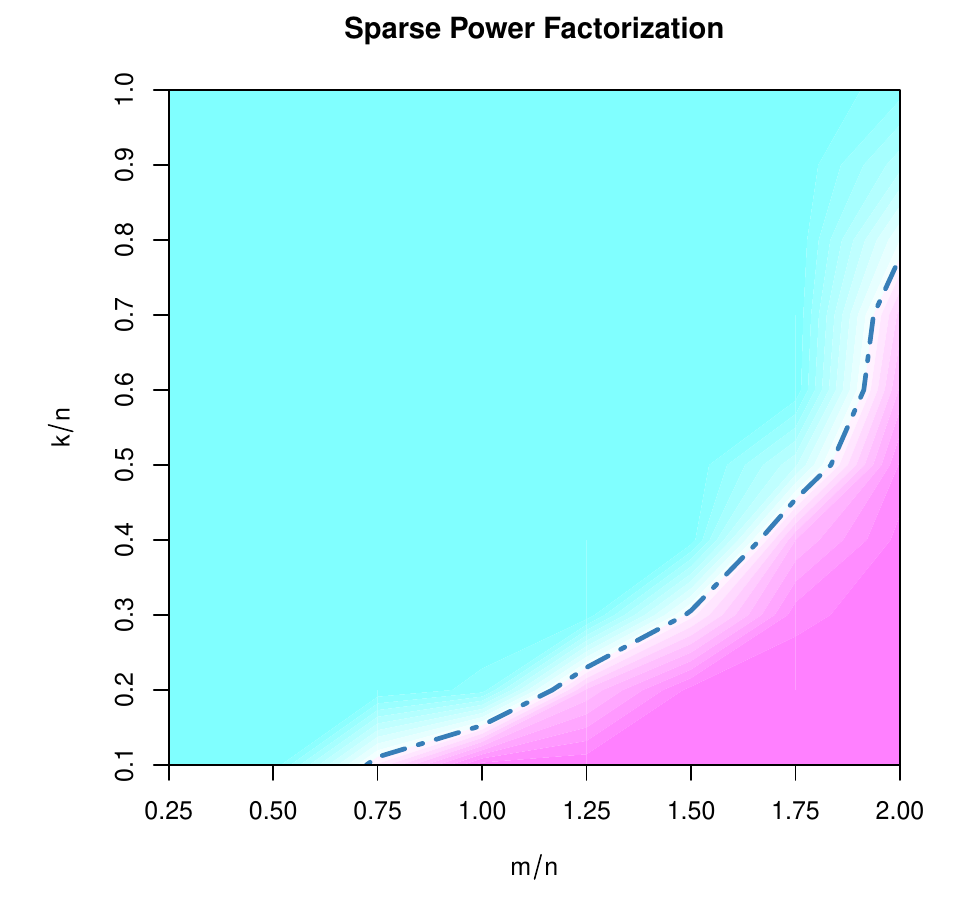}}
\subfloat[]{
\label{fig:twf_ptc}
\includegraphics[height=.29\textwidth]{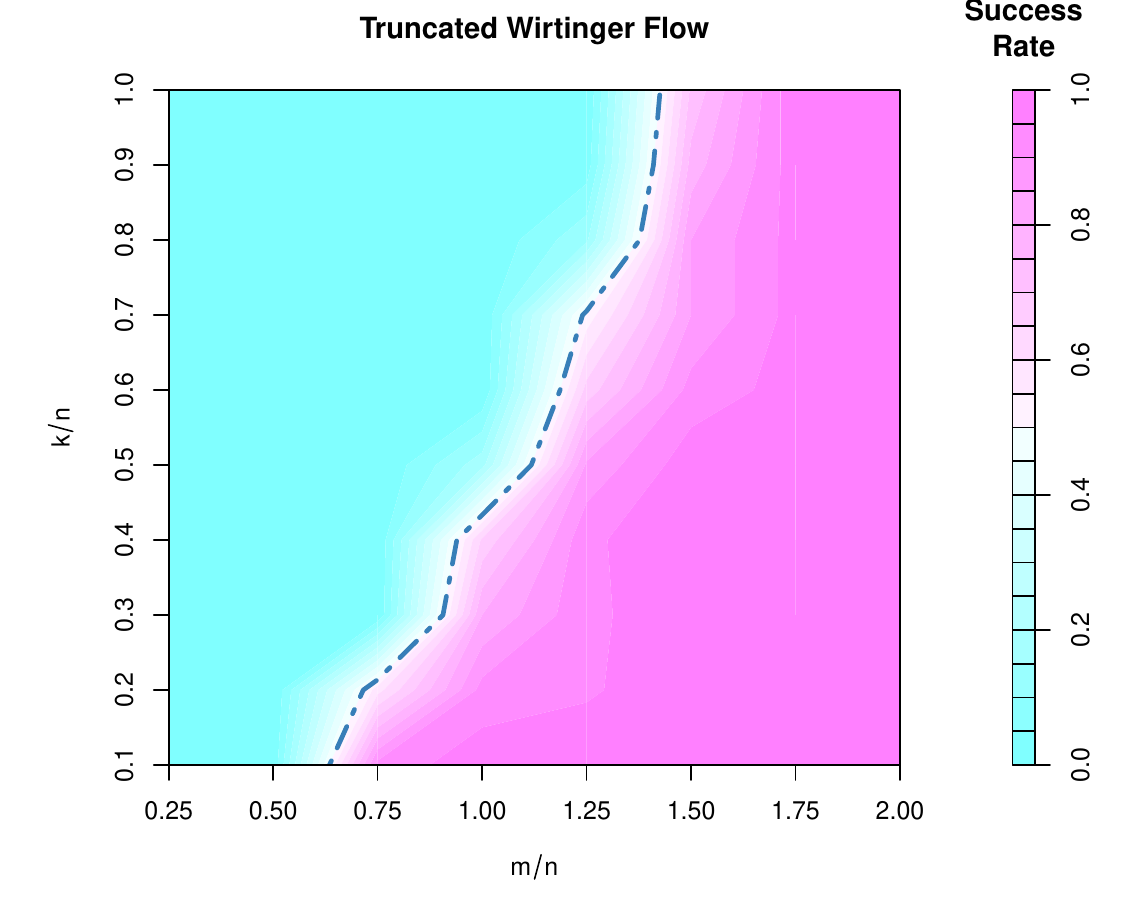}}
\end{center}
\vspace{-1em}
\caption{The success rates of WF, SPF and TWF obtained by varying the sparsity level $k$ and the number of measurements $m$.}
\label{fig:ptc}
\vspace{-1em}
\end{figure*}

The success rates are shown in Fig. \ref{fig:ptc}, with the phase transition curves dividing the plane into the success and failure phases. Fig. \ref{fig:wf_ptc} shows that the performance of \texttt{WF} only depends on the number of measurements $m$, since \texttt{WF} does not take the sparse prior into consideration. Whereas the performance of \texttt{TWF} depends on both $m$ and $k$ as shown in Fig. \ref{fig:twf_ptc}. \texttt{TWF} performs iterative thresholding on the gradient descent update to promote sparse solutions. Exploiting the sparse signal prior allows \texttt{TWF} to achieve better performance than \texttt{WF}. 

By comparing Fig. \ref{fig:spf_ptc} with Fig. \ref{fig:wf_ptc}, we can see that \texttt{SPF} was able to perform better than \texttt{WF} when the sparsity level was low ($k\leq 20$). This is consistent with the global recovery guarantee of \texttt{SPF} for fast-decaying signals in \cite{Lee:SPF:2018}. However, when the sparsity level was high ($k> 20$), \texttt{WF} performed better than \texttt{SPF}. Additionally, by comparing Fig. \ref{fig:spf_ptc} with Fig. \ref{fig:twf_ptc}, we can see that \texttt{SPF} performed much worse than the proposed \texttt{TWF}. Unlike the quadratic model considered in this paper where $y_i=\vx^\top\mA_i\vx$, \texttt{SPF} was proposed for a different model $\widetilde{y}_i=\vv^\top\mA_i\vu$ with two sparse vectors $\vu,\vv$. As a result, when \texttt{SPF} was used to recover $\vx$ from $\{y_i\}$, it is blind to the additional information of $\bm{u}=\bm{v}$ and hence is expected to be sub-optimal in this case. 

\subsubsection{Stable Recovery with Noisy Measurements} {We further tested the robustness of \texttt{TWF} to noise. Specifically, we consider the noisy measurements $y_ i = \bm{x}^\top \bm{A}_i\bm{x} + \zeta_i$, where $\zeta_i$ are independently generated from $\mathcal{N}(0,\sigma^2)$. We note that the norm estimator in Algorithms \ref{alg1}--\ref{alg2}, namely $\phi=(\frac{1}{m}\sum_{i=1}^m y_i^2)^{1/4}$, will no longer be accurate for the following reason: now the expectation of  $y_i^2$ becomes $\mathbbm{E}(y_i^2)=\mathbbm{E}((\bm{x}^\top\bm{A}_i\bm{x})^2)+\mathbbm{E}(\zeta_i^2) = \|\bm{x}\|_2^4+\sigma^2$ that contains the bias $\sigma^2$. Hence, without correcting the bias with the knowledge of $\sigma^2$ (which is unfortunately often  unavailable), our algorithms cannot accurately reconstruct signals with unknown norm.}

However, when the signal norm is known, our numerical results suggest that our algorithms achieve stable signal recovery. In particular, under the above-described noisy measurements, we simulated the reconstruction of $100$-dimensional $3$-sparse signals with unit $\ell_2$ norm, via Algorithms \ref{alg1}--\ref{alg2} with $\phi$ replaced by $1$. We set $\alpha=\beta = 0.5$, $\mu = 0.1$,   executed Algorithm \ref{alg2} with $1000$ iterations, and then averaged the reconstruction error over $50$ independent trials. The results are shown as log-log curves ($\log(\text{error})$ v.s. $\log(m)$) in Figure \ref{fig:noise_twf}, showing that Algorithms \ref{alg1}--\ref{alg2} achieve reasonably small error that decays with $m$ in a rate $O(\frac{1}{\sqrt{m}})$. Interestingly, such error decay rate is consistent with the TWF algorithm for sparse phase retrieval in \cite[Theorem 1]{cai2016optimal}, and we leave the theoretical analysis of this noisy setting to future work.
 
\begin{figure}
 \centering
     \includegraphics[height=0.3\textwidth]{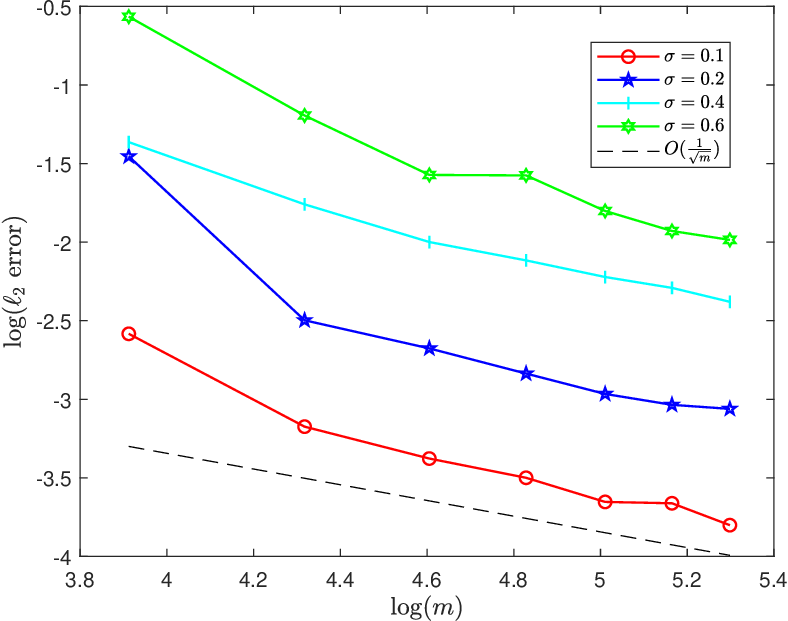}
     \caption{TWF achieves stable recovery of $\bm{x}\in\mathbb{S}^{n-1}$ under additive Gaussian noise.}
     \label{fig:noise_twf}
\end{figure}
\subsection{Generative Prior}

Since the measurement matrices $\bm{A}_i \in \mathbb{R}^{n \times n}$ will be prohibitively large when $n$ is large, we only present numerical results for the MNIST dataset~\cite{lecun1998gradient}, which contains $60,000$ images of handwritten digits, each measuring $28 \times 28$ pixels, resulting in an ambient dimension of $n = 784$. The generative model $G$ for the MNIST dataset is a pre-trained variational autoencoder (VAE) model with a latent dimension of
$k = 20$. The encoder and decoder are both fully connected neural networks with an architecture of $20-500-500-784$. The VAE is trained on the MNIST training set using the Adam optimizer with a mini-batch size of $100$ and a learning rate of $0.001$. To approximate the projection function $\mathcal{P}_{G}(\cdot)$, we follow prior work~\cite{shah2018solving, liu2020sample, peng2020solving, liu2022generative} and employ a gradient descent method utilizing the Adam optimizer with a step size of $100$ and a learning rate of $0.1$. The recovery task is evaluated on a random subset of $10$ images drawn from the testing set of the MNIST dataset.

\subsubsection{Comparing Several Algorithms}
We employ the vector $\bm{o}_1 := \big[\frac{1}{\sqrt{n}}, \frac{1}{\sqrt{n}},\ldots,\frac{1}{\sqrt{n}}\big]^\top \in \mathbb{R}^n$ as the initial vector for the projected power method. We compare the following three approaches: i) the projected power method (denoted by~\texttt{PPower}); ii) the projected gradient descent algorithm using $\bm{o}_1$ as the initial vector (denoted by~\texttt{PGD1}); and iii) the projected gradient descent algorithm using the output of ~\texttt{PPower} as the initial vector (denoted by~\texttt{PGD}). For~\texttt{PGD1} and~\texttt{PGD}, we perform $10$ iterations and fix the step size $\mu$ as $0.9$. To assess the effectiveness of different approaches, we utilize the scale-invariant Cosine Similarity metric, which is defined as $\mathrm{CosSim}(\bm{x}, \hat{\bm{x}}) = \bm{x}^\top\hat{\bm{x}}$, where $\bm{x}$ represents the underlying signal and $\hat{\bm{x}}$ refers to the normalized output vector of each approach. We employ $5$ random restarts to counter the effect of local minima and choose the best result from these restarts. The Cosine Similarity is then averaged over the $10$ test images and these $5$ restarts. All experiments involving generative priors are carried out using Python 3.10.6 and PyTorch 2.0.0 on an NVIDIA RTX 3060 Laptop 6GB GPU. 

We vary the number of measurements $m$ in $\{10, 40, 80, 120, 150, 200, 250, 300, 350, 400\}$ and present the experimental results in Figs.~\ref{fig:MNIST_m120},~\ref{fig:MNIST_m400}, and~\ref{fig:quant_varyM}. The following observations can be gleaned from these figures: i) With the initial vector $\bm{o}_1$,~\texttt{PPower} gives reasonably good reconstructions, but the reconstructed images for~\texttt{PGD1} are much worse. This is consistent with our theoretical findings that suggest the initialization condition~\eqref{eq:initCond_pgd_gen} for the PGD algorithm is more stringent than the initialization condition~\eqref{eq:initCond_ppower} for the projected power method. ii) When using the output of~\texttt{PPower} as the initial vector, the projected gradient descent algorithm~\texttt{PGD} yields the best reconstructions. This observation suggests that the projected gradient descent algorithm has the ability to refine the estimated vector provided by the projected power method, thus improving its reconstruction quality.

\begin{figure}
     \includegraphics[width=\columnwidth]{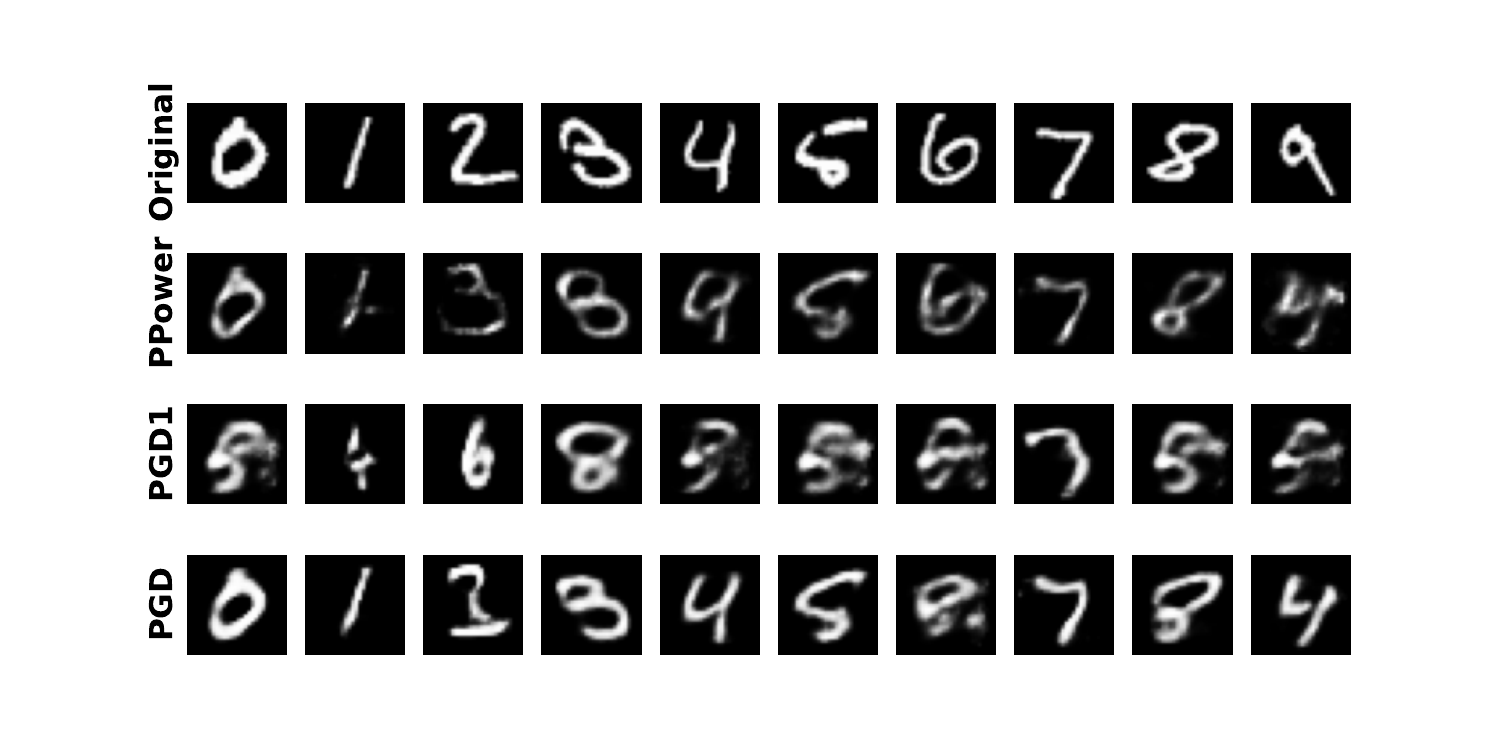}
     \caption{Noiseless reconstructions for MNIST images with $m =120$.}
     \label{fig:MNIST_m120}
\end{figure}

\begin{figure}
 \centering
     \includegraphics[width=\columnwidth]{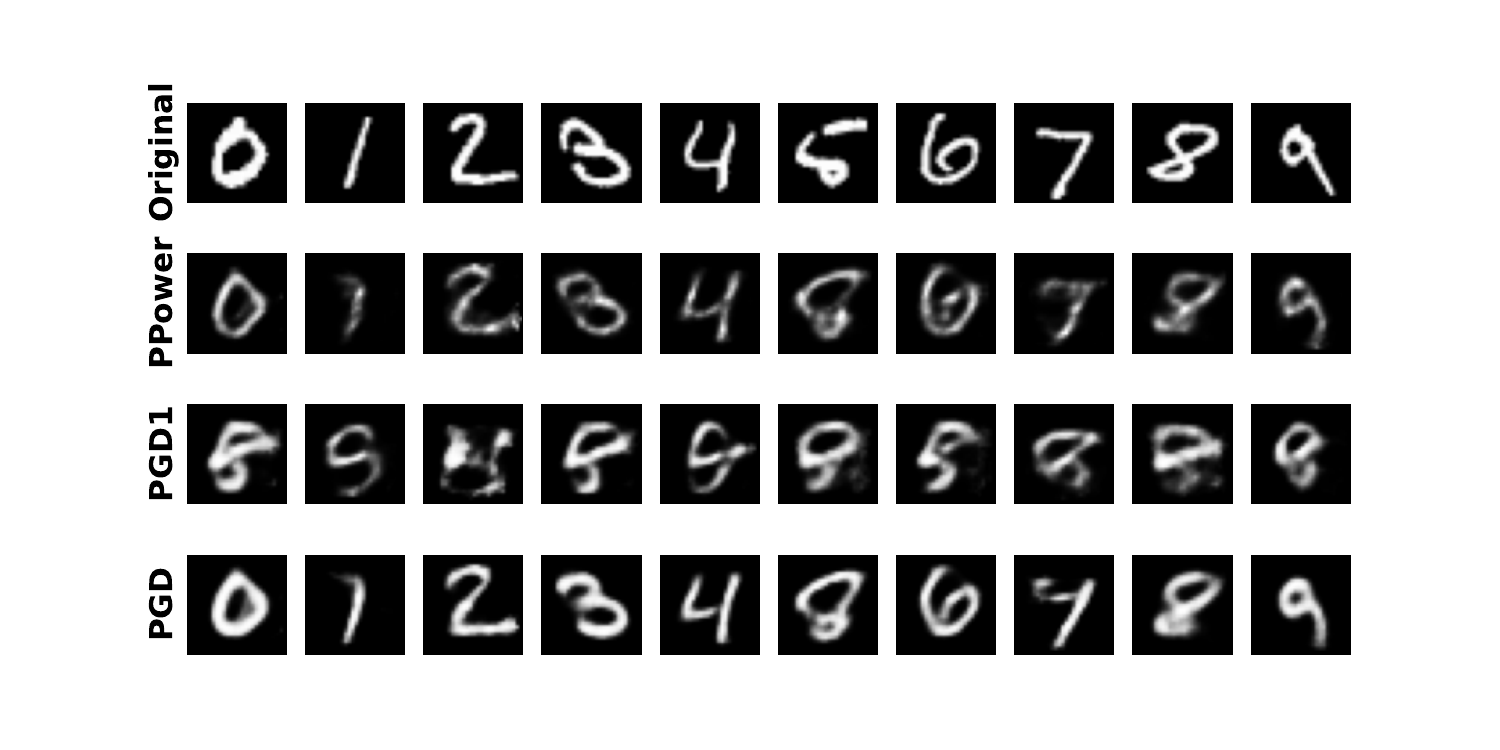}
     \caption{Noiseless reconstructions for MNIST images with $m =400$.}
     \vspace{-1em}
     \label{fig:MNIST_m400}
\end{figure}

 \begin{figure}
 \centering
     \includegraphics[height=0.3\textwidth]{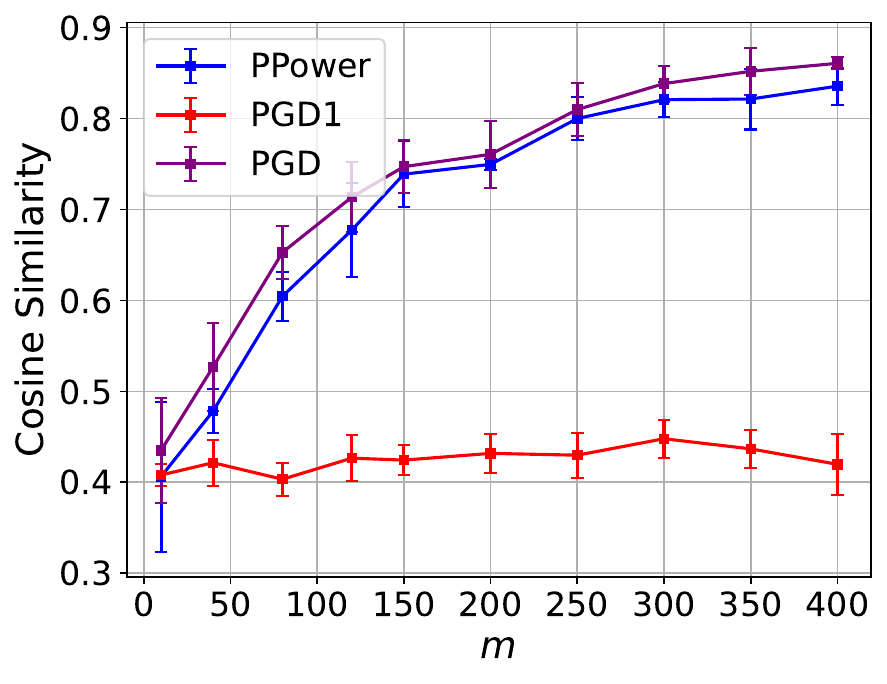}
     \caption{Quantitative results for MNIST with noiseless measurements and varying $m$.}
     \vspace{-1em}
     \label{fig:quant_varyM}
 \end{figure}

\subsubsection{Stable Recovery with Noisy Measurements}
 {We additionally carry out the experiments for the noisy scenario with $y_i = \bx^\top \mA_i \bx + \zeta_i$, where $\zeta_i$ are independently generated from $\mathcal{N}(0,\sigma^2)$ for some $\sigma > 0$. We vary $\sigma$ within the set $\{0.1,0.2,0.5,1,2\}$ and other settings remain the same as those for the noiseless case. The corresponding reconstructed images and quantitative comparison are presented in Figs.~\ref{fig:MNIST_m200_sigma0.2} and~\ref{fig:quant_m200_varySigma} respectively, from which we observe that both \texttt{PPower} and \texttt{PGD} also result in reasonably good reconstructions in the presence of small noise, and \texttt{PGD} still yields the best reconstructions.}

\begin{figure}
\hspace{-0.5cm}
 \centering
     \includegraphics[width=\columnwidth]{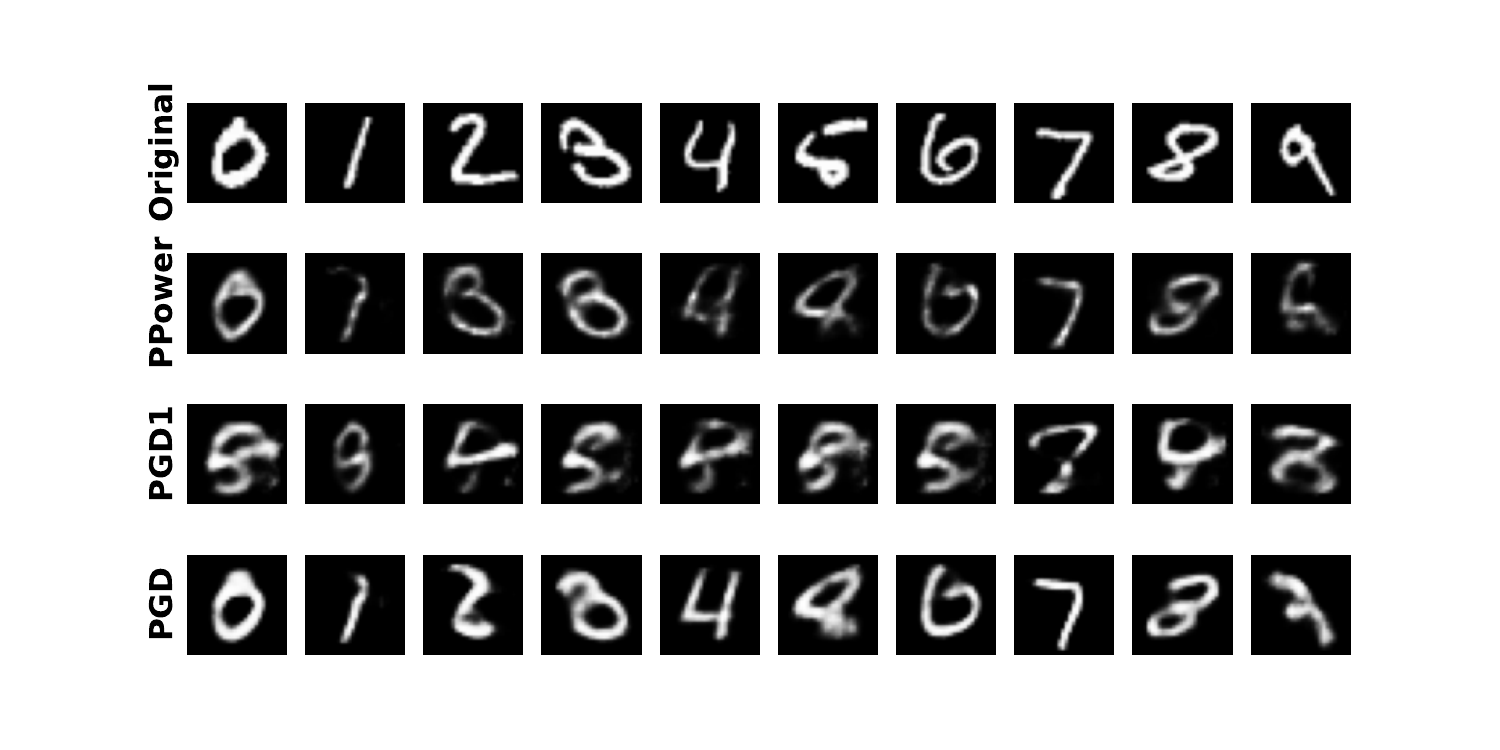}
     \caption{Reconstructed MNIST images for $m =200$ and $\sigma = 0.2$.}
     \label{fig:MNIST_m200_sigma0.2}
\end{figure}

 \begin{figure}
 \centering
     \includegraphics[height=0.3\textwidth]{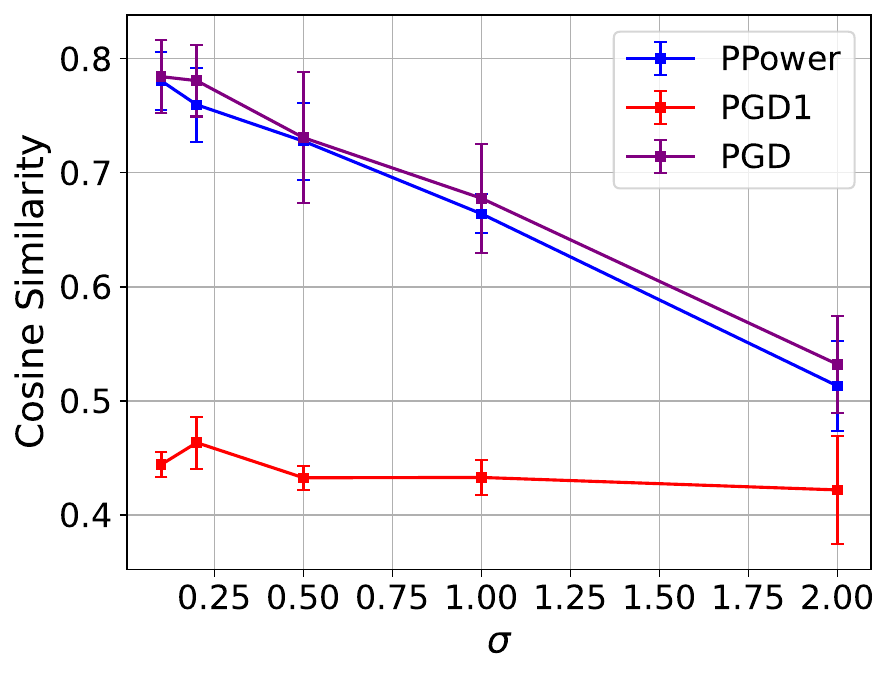}
     \caption{Quantitative results for MNIST with $m=200$ and varying $\sigma$.}
     \vspace{-1em}
     \label{fig:quant_m200_varySigma}
 \end{figure}

\section{Conclusion}\label{sec:conclude}

In this paper, we address the problem of solving a high-dimensional quadratic system $y_i=\vx^\top\mA_i\vx$ with i.i.d. Gaussian matrices $\mA_i$. Our work revolves around two distinct priors on $\bm{x}$ --- a sparse prior and a generative prior. To solve a sparse quadratic system where the signal is $k$-sparse, we propose the thresholded Wirtinger flow (TWF) comprising the two steps of spectral initialization and thresholded gradient descent. TWF requires $m= O(k^2\log n)$ measurements for accurate recovery, where the sub-optimal quadratic dependence on $k$ is incurred in the spectral initialization step.  On the other hand, to solve a generative quadratic system, where $\bm{x}$ lies in the range of a generative model with $k$-dimensional input,  we propose the projected gradient descent (PGD) method. PGD leverages a projected power method for initialization and projected gradient descent for subsequent refinement. Compared to the sparse case, despite that the projection step involved in PGD might not be entirely practical, it only requires a near optimal $m=O(k\log \frac{Lrn}{\delta^2})$ to find a solution $\bm{x}_T$ satisfying $\|\bm{x}_T-\bm{x}\|_2=O(\delta)$. For future research, a notable avenue involves exploring whether the sub-optimal $m=O(k^2\log n)$ in the initialization step of the sparse case can be improved by practical algorithms. Note that this gap is widely known to exist in the literature of sparse phase retrieval, and closing it remains an open question. Moreover, a promising direction is to extend the current theory to the noisy setting where we observe $y_i = \bm{x}^\top \bm{A}_i\bm{x} + \zeta_i$. If the signal norm is known, our experimental results in Figure \ref{fig:noise_twf} suggest that TWF (with $\phi = \|\bm{x}\|_2$) achieves stable reconstruction with error decay rate $O(\frac{1}{\sqrt{m}})$. It is of interest to theoretically justify this error rate and explore its optimality.

\textbf{Acknowledgement:} {The authors would like to thank Dr. Shuai Huang for his insightful discussion and contribution in the literature review and experiments.}

\bibliographystyle{IEEEbib}
\bibliography{libr}

\begin{appendix}
\subsection{Proof of Lemma 7}
\label{app:subsec:proof_lemma_7}
\begin{proof}
    By the definition of the thresholding operator $\mathcal{T}_{\mu  \tau(\bm{x}')/\phi^2}(\cdot)$ we know $\mathrm{supp}(\bm{x}'')\subset S$; and moreover, there exists $\bm{v}\in \mathbb{R}^n$ satisfying $\mathrm{supp}(\bm{v})\subset S$ and $\|\bm{v}\|_{\infty}\leq 1$, such that $$
        \bm{x}''=\bm{x}'-\frac{\mu}{\phi^2}\nabla f(\bm{x}')_S+ \frac{\mu}{\phi^2}\tau(\bm{x}')\bm{v}.$$
    With no loss of generality, we assume $$\min_{i=0,1}\|\bm{x}'-(-1)^i\bm{x}\|_2 =\|\bm{x}'-\bm{x}\|_2\leq \delta_0,$$ then by triangle inequality we obtain 
    \begin{equation}
        \begin{aligned}\nonumber
            \|\bm{x}''-\bm{x}\|_2 \leq \underbrace{\big\|\bm{x}'-\frac{\mu}{\phi^2}\nabla f(\bm{x}')_S -\bm{x}\big\|_2}_{:=I_1} +\underbrace{\big\|\frac{\mu}{\phi^2}\tau(\bm{x}')\bm{v}\big\|_2}_{:=I_2}.
        \end{aligned}
    \end{equation}
    We then proceed to bound the above two terms $I_1,I_2$. We let $\bm{h}=\bm{x}'-\bm{x}$, then by assumption $\mathrm{supp}(\bm{h})\subset S$ and $\|\bm{h}\|_2\leq \delta_0$. 
    
    \noindent{\textbf{(Bounding $I_1$)}}
    We first calculate $I_1^2$ as: \begin{equation}
        \begin{aligned}\nonumber
            I_1^2&=\|\bm{x}'-\bm{x}\|_2^2-\frac{2\mu}{\phi^2}\langle\bm{x}'-\bm{x},\nabla f(\bm{x}')_S\rangle +\frac{\mu^2}{\phi^4}\big\|\nabla f(\bm{x}')_S\big\|_2^2 \\
            &=\|\bm{h}\|_2^2 -\frac{2\mu}{\phi^2}\underbrace{\langle\bm{h},\nabla f(\bm{x}+\bm{h})_S\rangle}_{:=I_{12}}+\frac{\mu^2}{\phi^4}\underbrace{\big\|\nabla f(\bm{x}+\bm{h})_S\big\|_2^2}_{:=I_{13}}
        \end{aligned}
    \end{equation}
\noindent{\textbf{(Bounding $I_{12}$)}} To estimate $I_{12}$, we plug in (\ref{defi:gradient}) and proceed as in  (\ref{hhh}), which provides $I_{12}\geq 2I_{121}+3I_{122}$ if we let
\begin{gather*}
    I_{121}:= \frac{1}{m}\sum_{i=1}^m(\bm{x}^\top\bm{\widetilde{A}}_i\bm{h})^2,\\
    I_{122}= \frac{1}{m}\sum_{i=1}^m (\bm{x}^\top\bm{\widetilde{A}}_i\bm{h})(\bm{h}^\top\bm{\widetilde{A}}_i\bm{h}).
\end{gather*}
\begin{figure*}[!t]
\normalsize
  \begin{equation}
        \begin{aligned}\label{hhh}
            &I_{12}=\Big<\bm{h},\sum_{i=1}^m\frac{((\bm{x}+\bm{h})^\top\bm{A}_i(\bm{x}+\bm{h})-\bm{x}^\top\bm{A}_i\bm{x})\bm{\widetilde{A}}_i({\bm{x}+\bm{h}})}{m}\Big>= \sum_{i=1}^m\frac{(2\bm{x}^\top \bm{\widetilde{A}}_i\bm{h}+\bm{h}^\top\bm{A}_i\bm{h})(\bm{h}^\top\bm{\widetilde{A}}_i\bm{x}+\bm{h}^\top\bm{A}_i\bm{h})}{m}\\&=\frac{1}{m}\sum_{i=1}^m \Big(2\big(\bm{x}^\top\bm{\widetilde{A}}_i\bm{h}\big)^2+3\big(\bm{x}^\top\bm{\widetilde{A}}_i\bm{h}\big)\big(\bm{h}^\top\bm{A}_i\bm{h}\big)+\big(\bm{h}^\top\bm{A}_i\bm{h}\big)^2\Big)\geq\frac{2}{m}\sum_{i=1}^m\big(\bm{x}^\top\bm{\widetilde{A}}_i\bm{h}\big)^2+\frac{3}{m}\sum_{i=1}^m\big(\bm{x}^\top\bm{\widetilde{A}}_i\bm{h}\big)\big(\bm{h}^\top\bm{\widetilde{A}}_i\bm{h}\big)
        \end{aligned}
    \end{equation}
\hrulefill
\vspace*{4pt}
\end{figure*} 
By Lemma \ref{lem12},
       with probability at least $1-2\exp(-k)$ we can proceed on the event (\ref{4.11}), which allows us to lower-bound $I_{121}$ as in (\ref{hhhh}), \begin{figure*}[!t]
\normalsize
   \begin{equation}
        \begin{aligned}\label{hhhh}
            I_{121}= \|\bm{h}\|_2^2\Big(\frac{1}{m}\sum_{i=1}^m\frac{(\bm{x}^\top \bm{\widetilde{A}}_i\bm{h})^2}{\|\bm{h}\|_2^2}-\frac{1}{2}\Big({1+\frac{(\bm{x}^\top\bm{h})^2}{\|\bm{h}\|_2^2}} \Big) +\frac{1}{2}\Big({1+\frac{(\bm{x}^\top\bm{h})^2}{\|\bm{h}\|_2^2}} \Big)\Big)\geq \|\bm{h}\|_2^2\Big(\frac{1}{2}-O\Big(\sqrt{\frac{k}{m}}\Big)\Big)\geq \frac{1}{4}\|\bm{h}\|_2^2,
        \end{aligned}
    \end{equation}
    \begin{equation}
        \begin{aligned}\label{hhhhh}
            |I_{122}|= \|\bm{h}\|_2^3\Big|\frac{1}{m}\sum_{i=1}^m \Big(\bm{x}^\top\bm{\widetilde{A}}_i\frac{\bm{h}}{\|\bm{h}\|_2}\Big)\Big(\frac{\bm{h}^\top}{\|\bm{h}\|_2}\bm{\widetilde{A}}_i\frac{\bm{h}}{\|\bm{h}\|_2}\Big) \frac{\bm{x}^\top\bm{h}}{\|\bm{h}\|_2}+\frac{\bm{x}^\top\bm{h}}{\|\bm{h}\|_2}\Big|
            \leq \|\bm{h}\|_2^3\Big(1+O\Big(\sqrt{\frac{k}{m}}\Big)\Big) \leq 2\|\bm{h}\|_2^3,
        \end{aligned}
    \end{equation} 
    \hrulefill
\vspace*{4pt}
\end{figure*} 
     where the last inequality holds when $m\gtrsim k$. Similarly, we can upper bound $|I_{122}|$ as in (\ref{hhhhh}), 
    where the last inequality holds when $m\gtrsim k$. Combining  the above two bounds   we obtain \begin{equation}\label{3.22}
        I_{12} \geq 2I_{121}-3|I_{122}|\geq \frac{1}{2}\|\bm{h}\|_2^2-6\|\bm{h}\|_2^3 \geq \frac{1}{4}\|\bm{h}\|_2^2,
    \end{equation}
    where the last inequality holds as long as $\|\bm{h}\|_2\leq \delta_0\leq \frac{1}{24}$.

    \noindent{\textbf{(Bounding $I_{13}$)}} By defining the space $\mathcal{X}=  \{\bm{u}\in\mathbb{S}^{n-1}:\mathrm{supp}(\bm{u})\subset S \}$ as in (\ref{defi:calX}), we begin with (\ref{3.23}) with factor $\mathcal{F}$ given by $$ \mathcal{F}:=\sup_{\bm{p},\bm{q}\in \mathcal{X}}\sup_{\bm{u},\bm{v}\in \mathcal{X}}\big|\frac{1}{m}\sum_{i=1}^m \big(\bm{p}^\top\bm{\widetilde{A}}_i\bm{q}\big)\big(\bm{u}^\top\bm{\widetilde{A}}_i\bm{v}\big)\big|.$$ 
    Then, note that $\|\bm{x}+\bm{h}\|_2\leq \|\bm{x}\|_2+\|\bm{h}\|_2\leq 1+\delta_0$, and by the triangle inequality and (\ref{4.11}) (note that we assume we are on this event by removing some probability), it is not hard to see   $\mathcal{F}\leq 2$. Therefore, when $\delta_0$ is sufficiently small, we obtain $\sqrt{I_{13}}\leq 6\|\bm{h}\|_2$, i.e., $I_{13}\leq 36\|\bm{h}\|^2_2$. 
     \begin{figure*}[!t]
\normalsize
    \begin{equation}
        \begin{aligned}\label{3.23}
           \sqrt{I_{13}}&=\big\|\nabla f(\bm{x}+\bm{h}) _S\big\|_2 = \Big\|\frac{1}{m}\sum_{i=1}^m\Big((\bm{x}+\bm{h})^\top\bm{A}_i(\bm{x}+\bm{h}) - \bm{x}^\top\bm{A}_i\bm{x}\Big) [\bm{\widetilde{A}}_i]_{S,S}(\bm{x}+\bm{h})\Big\|_2\\
            &=\sup_{\bm{u}\in\mathcal{X}}\frac{1}{m}\sum_{i=1}^m\Big(2\bm{x}^\top\bm{\widetilde{A}}_i\bm{h}+\bm{h}^\top\bm{\widetilde{A}}_i\bm{h}\Big)\Big(\bm{u}^\top\bm{\widetilde{A}}_i(\bm{x}+\bm{h})\Big) \leq \Big(2\|\bm{h}\|_2\|\bm{x}+\bm{h}\|_2+ \|\bm{h}\|_2^2\|\bm{x}+\bm{h}\|_2\Big)\cdot\mathcal{F}.
        \end{aligned}
    \end{equation} 
\hrulefill
\vspace*{4pt}
\end{figure*} We combine this bound and (\ref{3.22}) to obtain 
   $$
            I_1^2\leq \big(1-\frac{\mu}{2\phi^2}+\frac{36\mu^2}{\phi^4}\big)\|\bm{h}\|_2^2\leq \big(1-\frac{\mu}{4}\big)\|\bm{h}\|_2^2,$$
    where the last inequality is due to $0<\mu<\mu_0$ for some sufficiently small absolute constant $c_0$, and we can suppose that $|\phi-1|$ is also small enough by Lemma \ref{lem1}. Therefore, we arrive at \begin{equation}\label{3.25}
        I_1\leq \sqrt{1-\frac{\mu}{4}}\cdot\|\bm{h}\|_2 \leq \Big(1-\frac{\mu}{8}\Big)\|\bm{h}\|_2. 
    \end{equation}

    \noindent{\textbf{(Bounding $I_2$)}} It remains to bound $I_2$, for which we begin with (\ref{3.26}),
       \begin{figure*}[!t]
\normalsize
 \begin{equation}\begin{aligned}\label{3.26}
        I_2&\leq 2\mu \big|\tau(\bm{x}')\big|\|\bm{v}\|_2\leq 2\mu\sqrt{\frac{k}{m}}\cdot\Big[{\frac{\beta}{m}\sum_{i=1}^m \Big(\big(\bm{x}+\bm{h}\big)^\top\bm{A}_i\big(\bm{x}+\bm{h}\big)-\bm{x}^\top\bm{A}_i\bm{x}\Big)^2}\Big]^{1/2}\\&\leq 2\mu\sqrt{\frac{\beta k}{m}}\Big[\frac{1}{m}\sum_{i=1}^m\Big(4\big(\bm{x}^\top\bm{\widetilde{A}}_i\bm{h}\big)^2+ \big(\bm{h}^\top\bm{\widetilde{A}}_i\bm{h}\big)^2+2\big(\bm{x}^\top\bm{\widetilde{A}}_i\bm{h}\big)\big(\bm{h}^\top\bm{\widetilde{A}}_i\bm{h}\big)\Big)\Big]^{1/2}\\
        &\leq 2\mu\sqrt{\frac{\beta k}{m}}\Big[{\big(4\|\bm{h}\|_2^2+\|\bm{h}\|_2^4+2\|\bm{h}\|_2^3\big)\Big(1+O\Big(\sqrt{\frac{k}{m}}\Big)\Big)}\Big]^{1/2}\leq \frac{{\mu}}{16}\|\bm{h}\|_2,
        \end{aligned}
    \end{equation}
  
\hrulefill
\vspace*{4pt}
\end{figure*} 
    where the last line follows from (\ref{4.11}) as before, and the last inequality can be ensured by $m\gtrsim k\log n$ (recall that $\beta \asymp\log n$). 
    
    \noindent
    {\textbf{(Conclusion)}}
    By substituting (\ref{3.25}) and (\ref{3.26}) into $\|\bm{x}''-\bm{x}\|_2\leq I_1+I_2$, we obtain $$
        \|\bm{x}''-\bm{x}\|_2\leq \big(1-\frac{\mu}{16}\big)\|\bm{h}\|_2=\big(1-\frac{\mu}{16}\big)\|\bm{x}'-\bm{x}\|_2,
        $$
        as desired. 
\end{proof}

    \subsection{Technical Lemmas for Sparse Case}
    Recall that in our analysis we assume $\supp(\bm{x})\subset S=[k]$ with no loss of generality, now we define
\begin{equation}\label{defi:calX}
        \mathcal{X}=\{\bm{u}\in\mathbb{S}^{n-1}:\mathrm{supp}(\bm{u})\subset S \}.
    \end{equation}
    The following two lemmas prove very useful in our analysis of TWF. Their proofs are standard covering arguments and relegated to the supplementary material due to the space limit.
    \begin{lem}\label{lem6}
    {\rm (Concentration of $[\bm{S}_{in}]_{S,S}$, $[\bm{\widetilde{S}}_{in}]_{S,S}$ under operator norm)} Recall that $[\bm{S}_{in}]_{S,S}=\frac{1}{m}\sum_{i=1}^m y_i[\bm{A}_i]_{S,S}$ and $[\bm{\widetilde{S}}_{in}]_{S,S}=\frac{1}{2}\big([\bm{S}_{in}]_{S,S}+[\bm{S}_{in}^\top]_{S,S}\big)$.  Then, with probability at least $1-2\exp(-k)$ we have 
  $$  
        \|[\bm{\widetilde{S}}_{in}]_{S,S}-\bm{xx}^\top\|_{op}\lesssim \sqrt{\frac{k}{m}}+\frac{k}{m}.
   $$  
    In particular, when $m\gtrsim k$, it holds with probability at least $1-2\exp(-k)$ that \begin{equation}\label{4.1}
        \big\|[\bm{\widetilde{S}}_{in}]_{S,S}-\bm{xx}^\top\big\|_{op}\lesssim \sqrt{\frac{k}{m}}.
    \end{equation} 
\end{lem}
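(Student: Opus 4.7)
The plan is to reduce everything to a standard covering argument on the unit sphere, exploiting the fact that $[\bm{\widetilde{S}}_{in}]_{S,S}$ is a symmetric matrix supported in the $k\times k$ principal block and that, under $\supp(\bm{x})\subset S$, the quadratic form $y_i=\bm{x}^\top\bm{A}_i\bm{x}=\bm{x}_S^\top[\bm{A}_i]_{S,S}\bm{x}_S$ depends only on $[\bm{A}_i]_{S,S}$. First I would observe that
\begin{equation*}
\bigl\|[\bm{\widetilde{S}}_{in}]_{S,S}-\bm{x}\bm{x}^\top\bigr\|_{op}\leq 2\sup_{\bm{u}\in\mathcal{N}}\bigl|\bm{u}^\top\bigl([\bm{\widetilde{S}}_{in}]_{S,S}-\bm{x}\bm{x}^\top\bigr)\bm{u}\bigr|,
\end{equation*}
where $\mathcal{N}$ is a $\tfrac{1}{4}$-net of the unit sphere in $\mathbb{R}^k$ (lifted to $\mathbb{R}^n$ by padding with zeros outside $S$), with $|\mathcal{N}|\leq 9^k$. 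The restriction of the net to vectors supported on $S$ is exactly what reduces the effective complexity from $9^n$ to $9^k$ and drives the $k$ in the final bound.

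Next I would carry out the pointwise concentration. For fixed $\bm{u}$ with $\|\bm{u}\|_2=1$ and $\supp(\bm{u})\subset S$, each summand of $\bm{u}^\top[\bm{\widetilde{S}}_{in}]_{S,S}\bm{u}=\frac{1}{m}\sum_i y_i\bigl(\bm{u}^\top[\bm{A}_i]_{S,S}\bm{u}\bigr)$ is the product of two Gaussian quadratic forms in the entries $\{a^{(i)}_{jl}:(j,l)\in S\times S\}$. Since each factor is itself a mean-zero Gaussian with variance $\|\bm{x}\|_2^4=1$ and $\|\bm{u}\|_2^4=1$ respectively (here I am implicitly re-indexing to reduce to linear combinations of independent standard normals), each factor has sub-Gaussian norm $O(1)$, and by~\eqref{2.3} their product is sub-exponential with $\psi_1$-norm $O(1)$. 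Computing $\mathbbm{E}\bigl[y_i\cdot\bm{u}^\top[\bm{A}_i]_{S,S}\bm{u}\bigr]$ via $\mathbbm{E}[a^{(i)}_{jl}a^{(i)}_{pq}]=\delta_{jp}\delta_{lq}$ yields $(\bm{x}^\top\bm{u})^2=\bm{u}^\top\bm{x}\bm{x}^\top\bm{u}$, so after subtraction the summands are centered. Bernstein's inequality (Lemma~\ref{bernstein}) then gives, for any $t>0$,
\begin{equation*}
\mathbbm{P}\Bigl(\bigl|\bm{u}^\top\bigl([\bm{\widetilde{S}}_{in}]_{S,S}-\bm{x}\bm{x}^\top\bigr)\bm{u}\bigr|\geq t\Bigr)\leq 2\exp\bigl(-cm\min\{t^2,t\}\bigr).
\end{equation*}

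Finally I would union-bound over $\mathcal{N}$, obtaining a failure probability at most $2\cdot 9^k\exp\bigl(-cm\min\{t^2,t\}\bigr)$, and then tune $t=C\bigl(\sqrt{k/m}+k/m\bigr)$ so that $cm\min\{t^2,t\}\geq (\log 9+1)k$; this yields the claimed $2\exp(-k)$ tail. The ``in particular'' refinement is immediate because the hypothesis $m\gtrsim k$ forces $k/m\leq\sqrt{k/m}$, absorbing the linear term into the square-root term. I do not anticipate a real obstacle here: the only mildly delicate point is keeping track of the two regimes in Bernstein's tail (which is precisely why the bound takes the shape $\sqrt{k/m}+k/m$), and verifying that the restriction to the $S\times S$ block makes $y_i$ measurable with respect to the variables appearing in $\bm{u}^\top[\bm{A}_i]_{S,S}\bm{u}$ without spoiling the sub-exponential product bound.
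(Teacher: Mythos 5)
Your proof is correct and follows essentially the same covering-number plus Bernstein strategy as the paper, with one cosmetic variation: you exploit the symmetry of $[\bm{\widetilde{S}}_{in}]_{S,S}$ directly via the single-vector quadratic-form net bound $\|A\|_{op}\le 2\sup_{\bm{u}\in\mathcal{N}_{1/4}}|\bm{u}^\top A\bm{u}|$ with $|\mathcal{N}_{1/4}|\le 9^k$, whereas the paper first passes to the non-symmetric $[\bm{S}_{in}]_{S,S}$ using $\|\tfrac{1}{2}(A+A^\top)-B\|_{op}\le\|A-B\|_{op}$ (for symmetric $B$) and then runs a two-vector bilinear-form net over a $\tfrac{1}{8}$-net of size $17^k$. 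Both give the same rate, and your handling of the correlated Gaussian product via $\|XY\|_{\psi_1}\le\|X\|_{\psi_2}\|Y\|_{\psi_2}$ (which needs no independence) is exactly right.
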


We note that $y_i=\bm{x}^\top\bm{A}_i\bm{x}$ for some fixed $\bm{x}\in \mathcal{X}$ and hence the proof of Lemma \ref{lem6} essentially bounds \begin{equation}\label{251}
    \sup_{\bm{u},\bm{v}\in \mathcal{X}}\Big|\frac{1}{m}\sum_{i=1}^m \big(\bm{x}^\top\bm{A}_i\bm{x}\big)\big(\bm{u}^\top \bm{A}_i\bm{v}\big)-(\bm{u}^\top\bm{x})(\bm{v}^\top\bm{x})\Big|.
\end{equation}
Now, we further strengthen Lemma \ref{lem6} to a bound on the quantity that replaces $\bm{x}$ in (\ref{251}) by  (uniformly) all vectors in $\mathcal{X}$ (see (\ref{4.10})), and a quantity that further allows for substituting $\bm{A}_i$ with $\bm{\widetilde{A}}_i$ (see (\ref{4.11})).

\begin{lem}\label{lem12}{\rm (Concentration of $\frac{1}{m}\sum_i (\bm{p}^\top\bm{B}_i\bm{q})(\bm{u}^\top\bm{B}_i\bm{v})$, $\bm{B}_i=\bm{A}_i/\bm{\widetilde{A}}_i$ over $(\bm{p},\bm{q},\bm{u},\bm{v})\in \mathcal{X}^4$)}
    Assume $\bm{A}_i\sim \mathcal{N}^{n\times n}(0,1)$, recall that $\bm{\widetilde{A}}_i = \frac{1}{2}(\bm{A}_i+\bm{A}_i^\top)$, $\mathcal{X}$ is defined  in (\ref{defi:calX}). If $m\gtrsim   k$, then with probability at least $1-2\exp(-k)$ we have (\ref{4.10}) and (\ref{4.11}). 
\begin{align}
    \label{4.10}
    \begin{split}
       & \sup_{\bm{p},\bm{q}\in \mathcal{X}}\sup_{\bm{u},\bm{v}\in \mathcal{X}} \Big|\frac{1}{m}\sum_{i=1}^m\big(\bm{p}^\top\bm{A}_i\bm{q}\big)\big(\bm{u}^\top\bm{A}_i\bm{v}\big)\\
        &~~~~~~~~~~~~~~~~~~\vphantom{\frac{1}{m}\sum_{i=1}^m} -(\bm{p}^\top\bm{u})(\bm{q}^\top\bm{v})\Big|\lesssim \sqrt{\frac{k}{m}};
    \end{split}\\ 
    \label{4.11}
    \begin{split}
    &\sup_{\bm{p},\bm{q}\in \mathcal{X}}\sup_{\bm{u},\bm{v}\in \mathcal{X}}  \Big|\frac{1}{m}\sum_{i=1}^m\big(\bm{p}^\top\bm{\widetilde{A}}_i\bm{q}\big)\big(\bm{u}^\top\bm{\widetilde{A}}_i\bm{v}\big)\\
    &\quad\vphantom{\frac{1}{m}\sum_{i=1}^m} -\frac{1}{2}\Big[(\bm{p}^\top\bm{u})(\bm{q}^\top\bm{v})+(\bm{p}^\top\bm{v})(\bm{q}^\top\bm{u})\Big]\Big|\lesssim\sqrt{\frac{k}{m}}. 
    \end{split}
\end{align}
\end{lem}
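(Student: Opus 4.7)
The plan is to establish both~(\ref{4.10}) and~(\ref{4.11}) via the standard combination of an $\epsilon$-net argument and Bernstein's inequality (Lemma~\ref{bernstein}), exploiting the multilinearity of the expressions inside the suprema. I would first verify the centering. For $\bm{A}_i$ with i.i.d.\ $\mathcal{N}(0,1)$ entries, $\mathbbm{E}[a^{(i)}_{jl}a^{(i)}_{j'l'}]=\delta_{jj'}\delta_{ll'}$ directly yields $\mathbbm{E}[(\bm{p}^\top\bm{A}_i\bm{q})(\bm{u}^\top\bm{A}_i\bm{v})]=(\bm{p}^\top\bm{u})(\bm{q}^\top\bm{v})$; for $\bm{\widetilde{A}}_i$, the identity $\mathbbm{E}[\tilde{a}^{(i)}_{jl}\tilde{a}^{(i)}_{j'l'}]=\frac{1}{2}(\delta_{jj'}\delta_{ll'}+\delta_{jl'}\delta_{lj'})$ produces the symmetric expression on the right-hand side of~(\ref{4.11}). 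Since every $\bm{u}\in\mathcal{X}$ is supported on $S=[k]$ with unit $\ell_2$ norm, $\mathcal{X}$ is isometric to $\mathbb{S}^{k-1}$, so a standard volumetric bound yields an $\epsilon$-net $\mathcal{N}_\epsilon\subset\mathcal{X}$ with $|\mathcal{N}_\epsilon|\leq(3/\epsilon)^k$; in particular $|\mathcal{N}_\epsilon^4|\leq(3/\epsilon)^{4k}$.

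For the pointwise step, fix $(\bm{p},\bm{q},\bm{u},\bm{v})\in\mathcal{N}_\epsilon^4$ and set $X_i=(\bm{p}^\top\bm{A}_i\bm{q})(\bm{u}^\top\bm{A}_i\bm{v})-(\bm{p}^\top\bm{u})(\bm{q}^\top\bm{v})$. Both $\bm{p}^\top\bm{A}_i\bm{q}$ and $\bm{u}^\top\bm{A}_i\bm{v}$ are standard Gaussians, so~(\ref{2.3}) gives $\|X_i\|_{\psi_1}=O(1)$. Bernstein's inequality then gives, for $t=C_0\sqrt{k/m}$ with $m\gtrsim k$ ensuring $t\leq 1$,
\begin{equation}\nonumber
    \mathbbm{P}\Big(\Big|\frac{1}{m}\sum_{i=1}^m X_i\Big|\geq t\Big)\leq 2\exp(-cmt^2)=2\exp(-cC_0^2 k).
\end{equation}
Choosing $\epsilon=1/8$ and union-bounding over $\mathcal{N}_\epsilon^4$ contributes a factor of at most $24^{4k}$, which is absorbed for sufficiently large $C_0$, so the deviation bound $\lesssim\sqrt{k/m}$ holds on the whole net with probability at least $1-2\exp(-k)$.

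The final step is discretisation. Define $F(\bm{p},\bm{q},\bm{u},\bm{v}):=\frac{1}{m}\sum_{i=1}^m(\bm{p}^\top\bm{A}_i\bm{q})(\bm{u}^\top\bm{A}_i\bm{v})-(\bm{p}^\top\bm{u})(\bm{q}^\top\bm{v})$, which is linear in each argument. Writing $F(\bm{p},\bm{q},\bm{u},\bm{v})-F(\bm{p}_0,\bm{q}_0,\bm{u}_0,\bm{v}_0)$ as a telescoping sum of four terms of the form $F(\bm{p}-\bm{p}_0,\bm{q},\bm{u},\bm{v})$ (and analogous for the remaining three arguments), and renormalising each difference to unit norm (noting that such a difference still has support in $S$), shows that the total discretisation error is at most $4\epsilon\cdot\sup_{\mathcal{X}^4}|F|$. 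Hence $\sup_{\mathcal{X}^4}|F|\leq(1-4\epsilon)^{-1}\sup_{\mathcal{N}_\epsilon^4}|F|$, which together with the previous paragraph proves~(\ref{4.10}). Inequality~(\ref{4.11}) follows by the same three-step template: $\bm{p}^\top\bm{\widetilde{A}}_i\bm{q}$ is still a mean-zero Gaussian of variance $O(1)$, so the summand in the symmetric case enjoys the same $O(1)$ sub-exponential norm, and the multilinear discretisation applies verbatim to the corresponding function $\widetilde{F}$.

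The main technical subtlety I anticipate is the discretisation: because the quantity being controlled is a \emph{quadrilinear} form, a naive Lipschitz estimate via an operator-norm bound on $\frac{1}{m}\sum_i\bm{A}_i\otimes\bm{A}_i$ would not immediately close up. However, the linearity in each argument together with the support/unit-norm constraints defining $\mathcal{X}$ enable the self-improving telescoping bound above, in the spirit of the classical $\|\bm{M}\|_{op}\leq 2\sup_{\bm{u}\in\mathcal{N}_{1/4}}|\bm{u}^\top\bm{M}\bm{u}|$ for symmetric $\bm{M}$. Beyond this point, every ingredient is a direct adaptation of the proof of Lemma~\ref{lem6}.
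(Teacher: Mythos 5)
Your proof is correct, and it uses the same basic ingredients as the paper (a covering argument plus Bernstein's inequality), but the two discretisations are organised differently. You build a single product net $\mathcal{N}_{1/8}^4$ of $\mathcal{X}^4$, apply Bernstein pointwise on each net tuple, union-bound over the $\leq 24^{4k}$ points, and close up the supremum with a self-improving telescoping estimate that exploits quadrilinearity in all four arguments simultaneously. The paper instead nests the discretisation: for each fixed pair $(\bm{p},\bm{q})\in\mathcal{X}^2$ it rewrites the inner supremum over $(\bm{u},\bm{v})\in\mathcal{X}^2$ as the operator norm of a $k\times k$ matrix $\bm{G}(\bm{p},\bm{q})=\frac{1}{m}\sum_i(\bm{p}^\top\bm{A}_i\bm{q})[\bm{A}_i]_{S,S}-\mathbbm{E}[\cdots]$, re-uses the proof of Lemma~\ref{lem6} verbatim (observing that it never used $y_i=\bm{x}^\top\bm{A}_i\bm{x}$, only that $y_i$ is a fixed mean-zero sub-Gaussian with $O(1)$ norm) to control $\|\bm{G}(\bm{p},\bm{q})\|_{op}$ by the inner net, and only then runs a second, outer, two-fold net over $(\bm{p},\bm{q})$ with a bilinear discretisation step. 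The paper also dispatches~(\ref{4.11}) algebraically from~(\ref{4.10}) by expanding $\bm{\widetilde{A}}_i=\frac{1}{2}(\bm{A}_i+\bm{A}_i^\top)$ and collecting the four resulting applications of~(\ref{4.10}), whereas you re-run the net argument directly for $\bm{\widetilde{A}}_i$. Both routes land at the same $2\exp(-k)$ failure probability; yours is slightly more self-contained, while the paper's more explicitly recycles the already-established machinery of Lemma~\ref{lem6}.
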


\end{appendix}

\clearpage
\newpage

\onecolumn
\begin{center}
{\Huge Supplemental to: Solving Quadratic System with\\ Full-Rank Matrices Using Sparse \\ or Generative Prior}
\end{center}
\vspace{0.5em}
\begin{center}
Junren Chen,~Shuai Huang,~Michael K. Ng,~\IEEEmembership{Senior Member},~Zhaoqiang Liu
\end{center}
\vspace{3em}

\setcounter{subsection}{0}

\begin{multicols*}{2}

\subsection{Technical Lemmas for Generative Case}
\setcounter{lemma}{12}
\setcounter{equation}{34}

\begin{lemma}
    \label{genelem1}
    {\rm (Concentration of $\bm{\widetilde{S}}_{in}$ over finite sets, and a bound on $\|\bm{\widetilde{S}}_{in}-\bm{x}\bm{x}^\top\|_{op}$)} Let $\mathcal{S}_1,\mathcal{S}_2\subset \mathbb{R}^n$ be finite sets, and recall that $\bm{S}_{in}=\frac{1}{m}\sum_{i=1}^m y_i\bm{A}_{i},\bm{\widetilde{S}}_{in}=\frac{1}{2}(\bm{S}_{in}+\bm{S}_{in}^\top)$. If $m\gtrsim \log(|\mathcal{S}_1|\cdot|\mathcal{S}_2|)$, then with probability at least $1-2\exp(-\Omega(\log |\mathcal{S}_1|+\log|\mathcal{S}_2|))$ we have \begin{equation}\begin{aligned}\label{lem131}
       | \bm{s}_1^\top &\bm{\widetilde{S}}_{in}\bm{s}_2-\bm{s}_1^\top\bm{x}\bm{x}^\top\bm{s}_2| \lesssim \sqrt{\frac{\log(|\mathcal{S}_1|\cdot|\mathcal{S}_2|)}{m}}\|\bm{s}_1\|_2\|\bm{s}_2\| _2,\\&~~~~~~~~~~~~~~~~~~~~~~~~~~~~~~~~\forall~\bm{s}_1\in \mathcal{S}_1,\bm{s}_2\in \mathcal{S}_2.
       \end{aligned}
    \end{equation}
    Moreover, if $m=O(n)$, then with probability at least $1-2\exp(-n)$ we have $\big\|\bm{\widetilde{S}}_{in}-\bm{x}\bm{x}^\top\big\|_{op}\lesssim \frac{n}{m}.$
\end{lemma}
\begin{lemma}
    \label{genelem2}
    {\rm (Concentration of $\frac{1}{m}\sum_{i}(\bm{p}^\top\bm{\widetilde{A}}_i\bm{q})(\bm{u}^\top\bm{\widetilde{A}}_i\bm{v})$ for fixed $(\bm{p},\bm{q},\bm{u},\bm{v})$)} Given any fixed $\bm{p},\bm{q},\bm{u},\bm{v}\in \mathbb{R}^n$ and $\epsilon\in(0,1)$, with probability at least $1-2\exp(-cm\epsilon^2)$ we have \begin{equation}\begin{aligned}\nonumber
     & \Big|  \frac{1}{m}\sum_{i=1}^m (\bm{p}^\top\bm{\widetilde{A}}_i\bm{q})(\bm{u}^\top\bm{\widetilde{A}}_i\bm{v})-\frac{1}{2} (\bm{p}^\top\bm{u})(\bm{q}^\top\bm{v})\\
      &~~~~~-\frac{1}{2}(\bm{p}^\top\bm{v})(\bm{q}^\top\bm{u})\Big|\leq C\epsilon\cdot  \|\bm{p}\|_2\|\bm{q}\|_2\|\bm{u}\|_2\|\bm{v}\|_2.
      \end{aligned}
    \end{equation}
\end{lemma}
\subsection{Proofs of the main results for generative case (Theorems 2-3)}

\subsubsection{\textbf{Proof of Theorem 2}} We pause to present a lemma that serves as  the most important ingredient in this proof.

\begin{lemma}
{\em \hspace{1sp}(Adapted from \cite[Lemma~2]{liu2022generative})}\label{lem:ppower_gpca}
    Suppose that $\bm{V} = \lambda \bm{v}\bm{v}^T + \bm{E} \in \mathbb{R}^{n \times n}$ with $\lambda >0$ being a positive constant, $\bm{v} \in G(\mathbb{B}^k(r))$ for some $L$-Lipschitz $G$ mapping from $\mathbb{B}^k(r)$ to $\mathbb{S}^{n-1}$, and $\bm{E}$ satisfying the following two  conditions: 
    \begin{itemize}
        \item \textbf{(C1)} For any given finite sets $S_1, S_2$, if $m = \Omega(\log(|S_1|\cdot |S_2|))$, we have for all $\bm{s}_1 \in S_1$ and $\bm{s}_2 \in S_2$, $|\bm{s}_1^\top\bm{E}\bm{s}_2| \le C\sqrt{\frac{\log(|S_1|\cdot |S_2|)}{m}} \cdot \|\bm{s}_1\|_2 \cdot \|\bm{s}_2\|_2$ {\color{black}with probability at least} $1-2\exp(-\Omega(\log |S_1|+\log|S_1|))$ 

        \vspace{2mm}
        \item \textbf{(C2)} $\|\bm{E}\|_{op}= O(n/m)$.
    \end{itemize}
      Then, for any $\delta >0$, {\color{black} with probability at least $1-2\exp(-\Omega(k\log\frac{Lr}{\delta}))$}, it holds for all $\bm{s} \in G(\mathbb{B}^k(r))$ satisfying $\bm{s}^T\bm{v}> 0$ that
    \begin{equation}\nonumber
        \left\|\mathcal{P}_G(\bm{V}\bm{s}) - \bm{v}\right\|_2 \leq \frac{C}{\bm{s}^\top\bm{v}}\Big(\sqrt{\frac{k \log \frac{Lr}{\delta}}{m}} + \sqrt{\frac{n\delta}{m}}\Big).
    \end{equation}
\end{lemma}

We use Lemma \ref{lem:ppower_gpca} with $\bm{V}=\bm{\widetilde{S}}_{in}$, $\lambda=1$, $\bm{v}=\bm{x}$, $\bm{E}=\bm{\widetilde{S}}_{in}-\bm{xx}^\top$. Then, the two conditions \textbf{C1},\textbf{C2} required in Lemma \ref{lem:ppower_gpca} have been verified in Lemma 13. Because we assume $\bm{x}^\top\bm{w}_0>c_0$, and note that $\bm{\hat{w}}=\mathcal{P}_G(\bm{\widetilde{S}}_{in}\bm{w}_0)$, Lemma \ref{lem:ppower_gpca} gives  
\begin{equation}\nonumber
    \|\bm{\hat{w}}-\bm{x}\|_2 \leq \frac{C}{c_0} \Big(\sqrt{\frac{k\log \frac{Lr}{\delta}}{m}}+\sqrt{\frac{n\delta}{m}}\Big)
\end{equation}
with probability at least $1-2\exp(-\Omega(k\log\frac{Lr}{\delta}))$. Setting $\delta=\frac{1}{n}$ completes the proof. \hfill $\square$

\subsubsection{\textbf{Proof of Theorem 3}} We first consider a given $\delta>0$ satisfying $\delta = O(n^{-1})$ and finally rescale it to the proper scaling as in the theorem.
Let  
    \begin{equation}\nonumber
        \tilde{\bx}_{t+1} = \bx_t - \frac{\mu}{m}\sum_{i=1}^m (\bx_t^\top\tilde{\bA}_i\bx_t - y_i)\tilde{\bA}_i\bx_t.
    \end{equation}
    Since $\bx_{t+1} = \calP_G\big(\tilde{\bx}_{t+1}\big)$ and $\bx \in \calR(G)$, we obtain
    \begin{equation}\nonumber
        \|\tilde{\bx}_{t+1} - \bx_{t+1}\|_2^2 \le \|\tilde{\bx}_{t+1} - \bx\|_2^2.
    \end{equation}
    Then, if letting $\bh_t = \bx_t - \bx$, we obtain
    \begin{align}\nonumber
        \begin{split}
        & \|\bh_{t+1}\|_2^2 = \|\bx_{t+1}-\bx\|_2^2 \le 2\langle\tilde{\bx}_{t+1} -\bx,\bx_{t+1}-\bx \rangle \\
        &  = 2\Big\langle \bh_t -\frac{\mu}{m}\sum_{i=1}^m (\bx_t^\top\tilde{\bA}_i\bx_t-y_i)\tilde{\bA}_i\bx_t, \bh_{t+1}\Big\rangle 
        \end{split}\\
        & = 2\Big\langle \bh_t -\frac{\mu}{m}\sum_{i=1}^m \left(\bh_t^\top\tilde{\bA}_i\bh_t + 2\bh_t^\top \tilde{\bA}_i\bx\right)\tilde{\bA}_i\bx_t, \bh_{t+1}\Big\rangle \label{eq:main_control_gen_0} \\
        \begin{split}
        & = 2\bh_t^\top\bh_{t+1} - \frac{2\mu}{m}\sum_{i=1}^m (\bh_t^\top\tilde{\bA}_i\bh_t) (\bh_{t+1}^\top \tilde{\bA}_i\bx_t) \\
        &- \frac{4\mu}{m}\sum_{i=1}^m (\bh_t^\top\tilde{\bA}_i\bx) (\bh_{t+1}^\top \tilde{\bA}_i\bh_t)  \nn -  \frac{4\mu}{m}\sum_{i=1}^m (\bh_t^\top\tilde{\bA}_i\bx) (\bh_{t+1}^\top \tilde{\bA}_i\bx) 
        \end{split}\\
        & \le 2\bh_t^\top\bh_{t+1} + 2\mu \cdot\Big|\frac{1}{m}\sum_{i=1}^m (\bh_t^\top\tilde{\bA}_i\bh_t) (\bh_{t+1}^\top \tilde{\bA}_i\bx_t)\Big| + \label{eq:main_control_gen}\\
        &4\mu \cdot \Big|\frac{1}{m}\sum_{i=1}^m (\bh_t^\top\tilde{\bA}_i\bx) (\bh_{t+1}^\top \tilde{\bA}_i\bh_t)\Big|  \nn   -  \frac{4\mu}{m}\sum_{i=1}^m (\bh_t^\top\tilde{\bA}_i\bx) (\bh_{t+1}^\top \tilde{\bA}_i\bx),
    \end{align}
    where~\eqref{eq:main_control_gen_0} follows from $\bx_t^\top\tilde{\bA}_i\bx_t-y_i = (\bx+\bh_t)^\top\tilde{\bA}_i(\bx+\bh_t)-\bx^\top\tilde{\bA}_i\bx = \bh_t^\top\tilde{\bA}_i\bh_t + 2\bh_t^\top \tilde{\bA}_i\bx$. For $\delta > 0$, from~\cite[Lemma~5.2]{vershynin2010introduction}, we know that there exists a $(\delta/L)$-net $M$ of $\mathbb{B}^k(r)$ such that 
    \begin{equation}\nonumber
        \log |M| \le k \log\frac{4Lr}{\delta}. 
    \end{equation}
    Since $G$ is $L$-Lipschitz, we obtain that $G(M)$ is a $\delta$-net of $\calR(G)=G(\mathbb{B}^k(r))$. Then there exist $\bu \in G(M)$ such that $\|\bx_t - \bu\|_2 \le \delta$ and $\bv \in G(M)$ such that $\|\bx_{t+1}-\bv\|_2 \le \delta$. Moreover, using \cite[Thm. 4.4.5]{vershynin2018high} and a union bound over $i\in [m]$, we obtain that with probability $1-m\exp(-\Omega(n))$ that 
    \begin{equation}\label{eq:event_A_i_bds}
     \|\tilde{\bA}_i\|_{op} \le C\sqrt{n},~~\forall~i\in[m]
    \end{equation}
    Throughout the following, we will condition on the event in~\eqref{eq:event_A_i_bds}. Then, we control the three terms (apart from the simple $\bm{h}_t^\top \bm{h}_{t+1}$) in~\eqref{eq:main_control_gen}.

    \noindent{\textbf{(Bounding the first two terms)}} We have for $i \in [m]$ that
    \begin{align}
        & (\bh_t^\top\tilde{\bA}_i\bh_t) (\bh_{t+1}^\top\tilde{\bA}_i \bx_t)  \nn \\
        &=  ((\bu-\bx)^\top\tilde{\bA}_i(\bu-\bx)+2(\bx_t - \bu)^\top \tilde{\bA}_i(\bu-\bx)\\& \quad +\nn (\bx_t-\bu)^\top\tilde{\bA}_i(\bx_t-\bu)) (\bh_{t+1}^\top\tilde{\bA}_i \bx_t).\label{eq:tedious_eq1_1}
    \end{align}
    We have 
    \begin{align}
        \nn&\left|2((\bx_t - \bu)^\top \tilde{\bA}_i(\bu-\bx))(\bh_{t+1}^\top\tilde{\bA}_i \bx_t)\right|\\ \le &2C\delta n \|\bu-\bx\|_2 \|\bh_{t+1}\|_2 \\
        \le &2C\delta n(\|\bh_t\|_2+\delta) \|\bh_{t+1}\|_2,\label{eq:tedious_eq1_2}
    \end{align}
    and 
    \begin{align}
        \Big|((\bx_t-\bu)^\top\tilde{\bA}_i(\bx_t-\bu))(\bh_{t+1}^\top\tilde{\bA}_i \bx_t)\Big| \le C\delta^2 n\|\bh_{t+1}\|_2. \label{eq:tedious_eq1_3}
    \end{align}
    In addition, we have
    \begin{align}
        & \nn((\bu-\bx)^\top\tilde{\bA}_i(\bu-\bx))(\bh_{t+1}^\top\tilde{\bA}_i \bx_t) \\&= (\bu-\bx)^\top\tilde{\bA}_i(\bu-\bx))((\bv -\bx)^\top\tilde{\bA}_i \bu) \label{eq:tedious_eq1_4} \\
        & + (\bu-\bx)^\top\tilde{\bA}_i(\bu-\bx))((\bx_{t+1} -\bv)^\top\tilde{\bA}_i \bu + \bh_{t+1}^\top\tilde{\bA}_i (\bx_t-\bu)),\nn
    \end{align}
    with 
    \begin{align}
        \nn&\left|(\bu-\bx)^\top\tilde{\bA}_i(\bu-\bx))((\bx_{t+1} -\bv)^\top\tilde{\bA}_i \bu)\right| \\&\quad\le Cn\delta (\|\bh_t\|_2 + \delta)^2 \label{eq:tedious_eq1_5}
    \end{align}
    and
    \begin{align}
       \nn & \left|(\bu-\bx)^\top\tilde{\bA}_i(\bu-\bx))(\bh_{t+1}^\top\tilde{\bA}_i (\bx_t-\bu))\right| \\&\quad\le Cn\delta (\|\bh_t\|_2 + \delta)^2 \|\bh_{t+1}\|_2. \label{eq:tedious_eq1_6}
    \end{align}
    Using Lemma 14 with taking a union bound over $(G(M)-\bx) \times (G(M)-\bx) \times (G(M)-\bx) \times G(M)$ and setting $\epsilon = C\sqrt{\frac{k \log \frac{4Lr}{\delta}}{m}}$, we obtain with probability $1-2\exp\big(-\Omega(k \log \frac{4Lr}{\delta})\big)$ that 
    \begin{align}
    \begin{split}
        & \Big|\frac{1}{m}\sum_{i=1}^m (\bu-\bx)^\top\tilde{\bA}_i(\bu-\bx))((\bv -\bx)^\top\tilde{\bA}_i \bu)\nn\\&\quad\quad - ((\bu-\bx)^\top(\bv -\bx))((\bu-\bx)^\top\bu)\Big| \nn \\
        & \le C\sqrt{\frac{k \log \frac{4Lr}{\delta}}{m}} \cdot\|\bu-\bx\|_2^2 \cdot\|\bv -\bx\|_2 \\
        & \le C\sqrt{\frac{k \log \frac{4Lr}{\delta}}{m}}\cdot (\|\bh_t\|_2+\delta)^2 \cdot(\|\bh_{t+1}\|_2+\delta).
    \end{split}
    \end{align}
    Then, when setting $m = \Omega\big(k \log \frac{Lr}{\delta}\big)$ with a large enough implied constant, we obtain the following where $c$ can be made sufficiently small:
    \begin{align}
        \nn &\Big|\frac{1}{m}\sum_{i=1}^m (\bu-\bx)^\top\tilde{\bA}_i(\bu-\bx))((\bv -\bx)^\top\tilde{\bA}_i \bu)\Big|\\&\quad\le (1+c)(\|\bh_t\|_2+\delta)^2 \cdot(\|\bh_{t+1}\|_2+\delta). \label{eq:tedious_eq1_7}
    \end{align}
Combining~\eqref{eq:tedious_eq1_1},~\eqref{eq:tedious_eq1_2},~\eqref{eq:tedious_eq1_3},~\eqref{eq:tedious_eq1_4},~\eqref{eq:tedious_eq1_5},~\eqref{eq:tedious_eq1_6},~\eqref{eq:tedious_eq1_7}, we obtain that when $m = \Omega\big(k \log \frac{Lr}{\delta}\big)$, with probability $1-\exp(-\Omega(k\log\frac{Lr}{\delta}))$, 
    \begin{align}
        \begin{split}\nonumber
       & \Big|\frac{1}{m}\sum_{i=1}^m (\bh_t^\top\tilde{\bA}_i\bh_t) (\bh_{t+1}^\top\tilde{\bA}_i \bx_t) \Big| \nn \\
       & \le \Big(\|\bh_t\|_2+\frac{3}{2}\delta\Big) \Big\{\Big[(1+c)(\|\bh_{t+1}\|_2+\delta) + {3Cn\delta}\Big]\\&\quad\quad\quad\quad\quad \nn \cdot(\|\bh_t\|_2+\delta) +  2Cn\delta\|\bh_{t+1}\|_2\Big\}
       \end{split}\\
        & \le \Big(\|\bh_t\|_2+\frac{3}{2}\delta\Big) \Big\{(1+c)\|\bh_t\|_2 \cdot \|\bh_{t+1}\|_2 \nn\\& \quad\quad\quad\quad\quad+  4Cn\delta (\|\bh_t\|_2+\|\bh_{t+1}\|_2+\delta)\Big\}. \label{eq:tedious_eq1_final}
    \end{align}
    In addition, running the mechanism for proving (\ref{eq:tedious_eq1_final}) again, we find that the same bound remains valid for the second term:
    \begin{align}
        &\Big|\frac{1}{m}\sum_{i=1}^m (\bh_t^\top \tilde{\bA}_i \bx)  (\bh_{t+1}^\top \tilde{\bA}_i \bh_t) \Big|   \le \Big(\|\bh_t\|_2+\frac{3}{2}\delta\Big)\label{eq:tedious_eq2_final}\\
        &\Big((1+c)\|\bh_t\|_2 \cdot \|\bh_{t+1}\|_2 + {4Cn\delta}(\|\bh_t\|_2+\|\bh_{t+1}\|_2+\delta)\Big). \nn
    \end{align}
    \textbf{(Bounding the third term} $\frac{1}{m}\sum_{i=1}^m (\bh_t^\top \tilde{\bA}_i \bx)  (\bh_{t+1}^\top \tilde{\bA}_i \bx)$) We have 
    \begin{align}
        & (\bh_t^\top \tilde{\bA}_i \bx)  (\bh_{t+1}^\top \tilde{\bA}_i \bx) = ((\bu-\bx)^\top \tilde{\bA}_i \bx)  ((\bv-\bx)^\top \tilde{\bA}_i \bx) \nn\\
        & \quad + ((\bx_t-\bu)^\top \tilde{\bA}_i \bx)  (\bh_{t+1}^\top \tilde{\bA}_i \bx)\label{eq:tedious_eq3_1} \\&\quad + ((\bu-\bx)^\top \tilde{\bA}_i \bx)  ((\bx_{t+1}-\bv)^\top \tilde{\bA}_i \bx) \nn 
    \end{align}
    with 
    \begin{equation}
        \big|((\bx_t-\bu)^\top \tilde{\bA}_i \bx)  (\bh_{t+1}^\top \tilde{\bA}_i \bx)\big| \le Cn\delta\|\bh_{t+1}\|_2\label{eq:tedious_eq3_2}
    \end{equation}
    and 
    \begin{equation}
    \big|((\bu-\bx)^\top \tilde{\bA}_i \bx)  ((\bx_{t+1}-\bv)^\top \tilde{\bA}_i \bx) \big| \le Cn\delta (\|\bh_t\|_2 + \delta). \label{eq:tedious_eq3_3}
    \end{equation}
    In addition, using Lemma~14 with taking a union bound over $(G(M)-\bx) \times \{\bx\} \times (G(M)-\bx) \times \{\bx\}$ and setting $\epsilon = C\sqrt{\frac{k \log (nLr)}{m}}$, we obtain that when $m = \Omega\big(k \log \frac{Lr}{\delta}\big)$, with probability $1-\exp(-\Omega(k\log\frac{Lr}{\delta}))$, 
    \begin{align}\nonumber
    \begin{split}
       &  \Big|\frac{1}{m}\sum_{i=1}^m ((\bu-\bx)^\top \tilde{\bA}_i \bx)  ((\bv-\bx)^\top \tilde{\bA}_i \bx)  \nn\\&\quad -\frac{1}{2}\left((\bu-\bx)^\top(\bv-\bx) + ((\bu-\bx)^\top\bx)((\bv-\bx)^\top\bx)\right)\Big|\nn\\
        & \le C\sqrt{\frac{k\log\frac{4Lr}{\delta}}{m}} \|\bu-\bx\|_2 \cdot\|\bv-\bx\|_2  \\
        & \le C\sqrt{\frac{k\log\frac{4Lr}{\delta}}{m}} (\|\bh_t\|_2+\delta) \cdot(\|\bh_{t+1}\|_2+\delta),
    \end{split}
    \end{align}
    which leads to
    \begin{align}
        \begin{split}\nonumber
        & \frac{1}{m}\sum_{i=1}^m ((\bu-\bx)^\top \tilde{\bA}_i \bx)  ((\bv-\bx)^\top \tilde{\bA}_i \bx) \nn \\
        & \nn \ge \frac{1}{2}(\bu-\bx)^\top(\bv-\bx) + \frac{1}{2}((\bu-\bx)^\top\bx)((\bv-\bx)^\top\bx) \\&\quad - C\sqrt{\frac{k\log\frac{4Lr}{\delta}}{m}} (\|\bh_t\|_2+\delta) \cdot(\|\bh_{t+1}\|_2+\delta) \\
        & \ge \frac{1}{2}(\bh_t^\top\bh_{t+1} - \delta(\|\bh_t\|_2 + \|\bh_{t+1}\|_2 + \delta))\nn 
        \end{split}\\
        &\quad- C\sqrt{\frac{k\log(nLr)}{m}} (\|\bh_t\|_2+\delta) \cdot(\|\bh_{t+1}\|_2+\delta),\label{eq:tedious_eq3_4}
    \end{align}
    where we use $(\bu-\bx)^\top(\bv-\bx) \ge \bh_t^\top\bh_{t+1} - \delta(\|\bh_t\|_2 + \|\bh_{t+1}\|_2 + \delta)$ and $(\bu-\bx)^\top\bx)((\bv-\bx)^\top\bx) \ge 0$ in~\eqref{eq:tedious_eq3_4}. Combining~\eqref{eq:tedious_eq3_1},~\eqref{eq:tedious_eq3_2},~\eqref{eq:tedious_eq3_3}, and~\eqref{eq:tedious_eq3_4}, and setting $m = \Omega\big(k \log \frac{Lr}{\delta}\big)$ with a large enough implied constant, we obtain the following where $c$ can be made sufficiently small
    \begin{align}
        \begin{split}\nonumber
        & -\frac{1}{m}\sum_{i=1}^m (\bh_t^\top \tilde{\bA}_i \bx)  (\bh_{t+1}^\top \tilde{\bA}_i \bx) \nn \\
        & \le -\frac{1}{2}\bh_t^\top\bh_{t+1} + \frac{\delta}{2}(\|\bh_t\|_2 + \|\bh_{t+1}\|_2 + \delta) \nn\\& \quad\nn+ c (\|\bh_t\|_2+\delta) \cdot(\|\bh_{t+1}\|_2+\delta) \nn \\&\quad\quad + {Cn\delta} (\|\bh_t\|_2 + \|\bh_{t+1}\|_2 + \delta)\nn 
        \end{split}\\
        & \le \Big(-\frac{1}{2}+c\Big) \|\bh_t\|_2 \cdot \|\bh_{t+1}\|_2 \nn\\&\quad \quad + {2Cn\delta} (\|\bh_t\|_2 + \|\bh_{t+1}\|_2 + \delta).  \label{eq:tedious_eq3_final}
    \end{align}
   {\textbf{(Combining everything)}} Note that $\delta < 1$ and $\|\bh_t\|_2 \le 2$. Combining~\eqref{eq:main_control_gen} and~\eqref{eq:tedious_eq1_final},~\eqref{eq:tedious_eq2_final},~\eqref{eq:tedious_eq3_final}, we obtain
    \begin{align}\nonumber
    \begin{split}
        &\|\bh_{t+1}\|_2^2  \le (2-2\mu+4c\mu)\|\bh_t\|_2\cdot\|\bh_{t+1}\|_2 \nn\\&\quad + 6\mu(1+c)\Big(\|\bh_t\|_2+\frac{3}{2}\delta\Big)\|\bh_t\|_2\cdot\|\bh_{t+1}\|_2 \nn \\
        & \quad + {24C\mu n \delta}  \Big(\frac{1}{3}+\|\bh_t\|_2 + \frac{3}{2}\delta\Big) (\|\bh_t\|_2 + \|\bh_{t+1}\|_2 + \delta) \nn\\
        & \le \Big[\big(2-2\mu+4c\mu+ 25C\mu n \delta \big)\|\bh_t\|_2\nn \\&\quad+7\mu\|\bh_t\|_2^2 +  48C\mu n \delta\Big] \|\bh_{t+1}\|_2 + {288C\mu n \delta} ,
    \end{split}
    \end{align}
    which gives
    \begin{align*}
        \|\bh_{t+1}\|_2 &\le \Big(2-2\mu+4c\mu +7\mu\|\bh_t\|_2 + 25C\mu n\delta \Big)\|\bh_t\|_2 \\&\quad+  48C\mu n \delta  +   \sqrt{288C\mu n \delta}. 
    \end{align*}
    Since $\delta = O(1/n)$, we obtain
    \begin{equation}\label{(36)}
         \|\bh_{t+1}\|_2 \le  (2-2\mu +7\mu\|\bh_t\|_2+c_1\mu)\|\bh_t\|_2 + C\sqrt{\mu n \delta} ,
    \end{equation}
    where $c_1 >0$ can be made sufficiently small. Therefore, if for some given $\varepsilon>0$ the initial vector $\bx_0$ satisfies the condition  
    $0 \le 2-2\mu +7\mu\|\bx_0 - \bx\|_2     < 1-2\varepsilon,$
    we can make $c_1$ in (\ref{(36)}) small enough so that $2-2\mu +7\mu\|\bm{h}_t\|_2 +c_1\mu<1-\varepsilon$, which leads to the following for any $t\geq 0$:$ \|\bm{h}_{t+1}\|_2 \leq (1-\varepsilon)\|\bm{h}_t\|_2+ C\sqrt{n\delta}.$ 
Moreover, by induction this yields $ \|\bm{h}_t\|_2 \leq (1-\varepsilon)^t\|\bm{h}_t\|_2 + O(\sqrt{n\delta})$ 
for any $t\geq 0$.

\noindent{\textbf{(Recaling $\delta$)}}
By setting $\delta=\frac{\delta_1^2}{n}$ and noting that $\delta = O(1/n)$ is equivalent to $\delta_1 = O(1)$, we obtain that under $m=\Omega(k\log \frac{Lrn}{\delta_1^2})$, with probability at least $1-C_1\exp(-C_2k\log \frac{Lrn}{\delta^2})$,  
     it holds for any $t\geq 0$ that $\|\bm{h}_t\|_2 \leq (1-\varepsilon)^t \|\bm{h}_t\|_2+O(\delta_1)$, as desired. \hfill $\square$

\subsection{Proofs of technical lemmas in sparse case (Lemmas 11-12)}
\subsubsection{\textbf{Proof of Lemma 11}}
We only need to prove the statement before ``In particular". Because $\mathrm{supp}(\bm{x})\subset S$,
    from Lemma 3 in the paper, we have $\mathbbm{E}[\bm{S}_{in}]_{S,S}=\mathbbm{E}[\bm{S}_{in}^\top]_{S,S}=\bm{xx}^\top$. Thus, by letting \begin{equation}\label{defi:calX}
        \mathcal{X}=\{\bm{u}\in \mathbb{R}^n:\mathrm{supp}(\bm{u})\subset [k],~\|\bm{u}\|_2=1\}
    \end{equation} there exists some $\bm{u}_0\in \mathcal{X}$ and $\bm{v}_0\in \mathcal{X}$, such that we have\begin{equation}
        \begin{aligned}\label{4.2}
            &\|[\bm{\widetilde{S}}_{in}]_{S,S}-\bm{xx}^\top\|_{op} \leq \|[\bm{S}_{in}]_{S,S}-\bm{xx}^\top\|_{op}\\&=\Big\|\frac{1}{m}\sum_{i=1}^m\Big(y_i[\bm{A}_i]_{S,S}-\mathbbm{E}\big(y_i[\bm{A}_i]_{S,S}\big)\Big)\Big\|_{op}\\
            &=\frac{1}{m}\sum_{i=1}^m\Big(y_i \bm{u}_0^\top\bm{A}_i\bm{v}_0-\mathbbm{E}\big[y_i \bm{u}_0^\top\bm{A}_i\bm{v}_0\big]\Big)\\&=\bm{u}_0^\top\big(\bm{S}_{in}-\bm{xx}^\top\big)\bm{v}_0.
        \end{aligned}
    \end{equation}
    We construct $\mathcal{N}_{1/8}$ as a $\frac{1}{8}$-net of $\mathcal{X}$, and we assume $|\mathcal{N}_{1/8}|\leq 17^k$. Note that for any fixed $(\bm{u},\bm{v})\in \mathbb{S}^{n-1}\times \mathbb{S}^{n-1}$, we have $\|y_i\bm{u}^\top\bm{A}_i\bm{v}\|_{\psi_1}\leq \|y_i\|_{\psi_2}\|\bm{u}^\top\bm{A}\bm{v}\|_{\psi_2}=O(1)$, so Bernstein's inequality yields for ant $t>0$ that, with probability at least $1-2\exp\big(-cm\min\{t,t^2\}\big)$, we have
    \begin{equation}\label{(3)}
         \frac{1}{m}\Big|\sum_{i=1}^m\Big(y_i\bm{u}^\top\bm{A}_i\bm{v}-\mathbbm{E}\big[y_i\bm{u}^\top\bm{A}_i\bm{v}\big]\Big)\Big|\leq t 
    \end{equation}
    Thus, we take a union bound over $(\bm{u},\bm{v})\in \mathcal{N}_{1/8}\times \mathcal{N}_{1/8}$ to obtain that with probability at least $1- 2\exp\big(2k\log 17-cm\min\{t,t^2\}\big)$,
    \begin{equation}\label{add1}
    \begin{aligned}
         \sup_{\bm{u}\in \mathcal{N}_{1/8}}\sup_{\bm{v}\in \mathcal{N}_{1/8}}\frac{1}{m}\Big|\sum_{i=1}^m\Big(y_i\bm{u}^\top\bm{A}_i\bm{v}-\mathbbm{E}\big[y_i\bm{u}^\top\bm{A}_i\bm{v}\big]\Big)\Big|\leq t 
        \end{aligned}
    \end{equation}
    Therefore,   setting $t=C_1\big(\sqrt{\frac{k}{m}}+\frac{k}{m}\big)$ with sufficiently large $C_1$, we obtain that the following bound with probability at least $1-2\exp(-k)$: 
    \begin{equation}\label{4.4}\begin{aligned}
        &\sup_{\bm{u}\in \mathcal{N}_{1/8}}\sup_{\bm{v}\in \mathcal{N}_{1/8}}\frac{1}{m}\Big|\sum_{i=1}^m\Big(y_i\bm{u}^\top\bm{A}_i\bm{v}-\mathbbm{E}\big[y_i\bm{u}^\top\bm{A}_i\bm{v}\big]\Big)\Big|\\&~~~~~~~~~~~~~~~~~~~~~~~~~~~~\leq C_1\Big(\sqrt{\frac{k}{m}}+\frac{k}{m}\Big).\end{aligned}
    \end{equation}
    Now, by construction of $\mathcal{N}_{1/8}$, we can pick $\bm{u}_1\in\mathcal{N}_{1/8}$ and $\bm{v}_1\in\mathcal{N}_{1/8}$ such that $\|\bm{u}_1-\bm{u}_0\|_2\leq \frac{1}{8}$ and $\|\bm{v}_1-\bm{v}_0\|_2\leq \frac{1}{8}$. Furthermore, we can bound $\bm{u}_0^\top\big(\bm{S}_{in}-\bm{xx}^\top\big)\bm{v}_0$ as follows:
    \begin{equation}
        \begin{aligned}\label{add2}
            &~~~~\bm{u}_0^\top\big(\bm{S}_{in}-\bm{xx}^\top\big)\bm{v}_0 \\&=(\bm{u}_0-\bm{u}_1)^\top\big(\bm{S}_{in}-\bm{xx}^\top\big)\bm{v}_0+\bm{u}_0^\top\big(\bm{S}_{in}-\bm{xx}^\top\big)\bm{v}_0\\&~~~~~~~~~~~+\bm{u}_1^\top\big(\bm{S}_{in}-\bm{xx}^\top\big)(\bm{v}_0-\bm{v}_1)\\
            &\leq \frac{1}{4}\big\|[\bm{S}_{in}]_{S,S}-\bm{xx}^\top\big\|_{op}\\&~~~~~~~~~~~~+\sup_{\bm{u}\in \mathcal{N}_{1/8}}\sup_{\bm{v}\in \mathcal{N}_{1/8}}\bm{u}^\top\big(\bm{S}_{in}-\bm{xx}^\top\big)\bm{v}\\
            &\leq \frac{1}{4}\bm{u}_0^\top\big(\bm{S}_{in}-\bm{xx}^\top\big)\bm{v}_0 + C_1\Big(\sqrt{\frac{k}{m}}+\frac{k}{m}\Big),
        \end{aligned}
    \end{equation}
    where the first inequality follows by observing $\bm{u}_0-\bm{u}_1=r\bm{u}_2$ for some $0\leq r\leq \frac{1}{8}$ and $\bm{u}_2\in \mathcal{X}$ (and similar relation for $\bm{v}_0-\bm{v}_1$), the second inequality is due to (\ref{4.2}) and (\ref{4.4}). Rearranging and using (\ref{4.2}) again completes the proof. 
\hfill $\square$
    \subsubsection{\textbf{Proof of Lemma 12}} 
It is not hard to see that $\mathbbm{E}\big[(\bm{p}^\top \bm{A}_i\bm{q})(\bm{u}^\top\bm{A}_i\bm{v})\big]=(\bm{p}^\top\bm{u})(\bm{q}^\top\bm{v})$, so   the left-hand side of (\textcolor{black}{33}) in the paper equals to\begin{equation}\begin{aligned}\nonumber
        &\sup_{\bm{p},\bm{q}\in \mathcal{X}}\Big\|\frac{1}{m}\sum_{i=1}^m\big(\bm{p}^\top\bm{A}_i\bm{q}\big)[\bm{A}_i]_{S,S}-\mathbbm{E}\Big[\big(\bm{p}^\top\bm{A}_i\bm{q}\big)[\bm{A}_i]_{S,S}\Big]\Big\|_{op}\\&~~~~~~~~~~~~~~~~~~~:=\sup_{\bm{p},\bm{q}\in\mathcal{X}}\big\| \bm{G}(\bm{p},\bm{q})\big\|_{op},\end{aligned}
    \end{equation}
    and we can further pick $\bm{p}_0\in \mathcal{X}~\text{and}~\bm{q}_0\in \mathcal{X}$ such that $ \sup_{\bm{p},\bm{q}\in \mathcal{X}}\|\bm{G}(\bm{p},\bm{q})\|_{op}=\|\bm{G}(\bm{p}_0,\bm{q}_0)\|_{op}.$

    Now let us inspect the proof of Lemma 11. We find that (\ref{4.2}) and the first three lines of (\ref{add2}) imply \begin{equation}\nonumber\begin{aligned}
        &\Big\|\frac{1}{m}\sum_{i=1}^m y_i[\bm{A}_i]_{S,S}-\mathbbm{E}\big[y_i[\bm{A}_i]_{S,S}\big]\Big\|_{op}\\&\leq 2\sup_{\bm{u}\in \mathcal{N}_{1/8}}\sup_{\bm{v}\in \mathcal{N}_{1/8}}\frac{1}{m}\Big|\sum_{i=1}^m\Big(y_i\bm{u}^\top\bm{A}_i\bm{v}-\mathbbm{E}\big[y_i\bm{u}^\top\bm{A}_i\bm{v}\big]\Big)\Big|.\end{aligned}
    \end{equation} 
    Then we substitute this into (\ref{add1}) to obtain \begin{equation}
    \begin{aligned}\label{4.12}
        \mathbbm{P}&\Big(\Big\|\frac{1}{m}\sum_{i=1}^my_i[\bm{A}_i]_{S,S}-\mathbbm{E}\big[y_i[\bm{A}_i]_{S,S}\big]\Big\|_{op}\geq 2t\Big)\\
        &\leq 2\exp\big(2k\log 17-cm\min\{t,t^2\}\big),~~\forall~t>0.
        \end{aligned}
    \end{equation}
    Moreover, the proof of Lemma 11 does not rely on $y_i=\bm{x}^\top\bm{A}_i\bm{x}$; rather, the proof still holds when $y_i$ is substituted with $\bm{p}^\top\bm{A}_i\bm{q}$ for any fixed pair of $(\bm{p},\bm{q})\in \mathcal{X}\times \mathcal{X}$. More specifically, substituting $y_i$ in (\ref{4.12}) with $\bm{p}^\top\bm{A}_i\bm{q}$ for some fixed $(\bm{p},\bm{q})\in \mathcal{X}\times \mathcal{X}$ we obtain the following for any $t>0,$
    \begin{equation}\nonumber
        \begin{aligned}
        &\mathbbm{P}\Big(\big\|\bm{G}(\bm{p},\bm{q})\big\|_{op}\geq 2t\Big) \\&~~~~~~~~~\leq  2\exp\big(2k\log 17-cm\min\{t,t^2\}\big).
        \end{aligned}
    \end{equation}
    Furthermore, let $\mathcal{N}_{1/8}$ be a $\frac{1}{8}$-net of $\mathcal{X}$ satisfying $|\mathcal{N}_{1/8}|\leq 17^k$, we take a union bound over $(\bm{p},\bm{q})\in \mathcal{N}_{1/8}\times \mathcal{N}_{1/8}$, which yields for any $t>0$ that \begin{equation}\nonumber
        \begin{aligned}\label{4.14}&\mathbbm{P}\Big(\sup_{\bm{p},\bm{q}\in\mathcal{N}_{1/8}}\big\|\bm{G}(\bm{p},\bm{q})\big\|_{op}\geq 2t\Big) \\&~~~~~~~~\leq 2\exp\big(4k\log 16-cm\min\{t,t^2\}\big).\end{aligned}
    \end{equation}
    With this in place, we are now ready to bound $\|\bm{G}(\bm{p}_0,\bm{q}_0)\|_{op}$. We pick $\bm{p}_1\in \mathcal{N}_{1/8}$ and $\bm{q}_1\in \mathcal{N}_{1/8}$ so that $\|\bm{p}_1-\bm{p}_0\|_{2}\leq \frac{1}{8}$ and $\|\bm{q}_1-\bm{q}_0\|_2\leq \frac{1}{8}$ hold. Then, by the bilinearity of $\bm{G}(\bm{p},\bm{q})$ on $(\bm{p},\bm{q})$ we have 
    \begin{equation}\nonumber
        \begin{aligned}
         &   \big\|\bm{G}(\bm{p}_0,\bm{q}_0)\big\|_{op}\\&=\big\|\bm{G}(\bm{p}_0-\bm{p}_1,\bm{q}_0)+\bm{G}(\bm{p}_1,\bm{q}_0-\bm{q}_1)+\bm{G}(\bm{p}_1,\bm{q}_1)\big\|_{op}\\
            &\leq \frac{1}{4}\big\|\bm{G}(\bm{p}_0,\bm{q}_0)\big\|_{op}+ \sup_{\bm{p},\bm{q}\in \mathcal{N}_{1/8}}\big\|\bm{G}(\bm{p},\bm{q})\big\|_{op},
        \end{aligned}
    \end{equation}
    which yields $$\|\bm{G}(\bm{p}_0,\bm{q}_0)\|_{op}\leq 2\sup_{\bm{p},\bm{q}\in \mathcal{N}_{1/8}}\|\bm{G}(\bm{p},\bm{q})\|_{op}.$$ Recall that   $\sup_{\bm{p},\bm{q}\in \mathcal{X}}\|\bm{G}({\bm{p},\bm{q}})\|_{op}=\|\bm{G}(\bm{p}_0,\bm{q}_0)\|_{op}$, combined with (\ref{4.14}) we obtain for any $t>0$ that, 
    \begin{equation}\nonumber
        \begin{aligned}
        &\mathbbm{P}\Big(\sup_{\bm{p},\bm{q}\in\mathcal{X}}\big\|\bm{G}(\bm{p},\bm{q})\big\|_{op}\geq 4t\Big)\\&~~~~~~~~~\leq 2\exp\big(4k\log 16-cm\min\{t,t^2\}\big).
        \end{aligned}
    \end{equation}
    Then we set $t=C\sqrt{\frac{k}{m}}$ with large enough $C$, and  note that $t^2\lesssim t$ due to $m\gtrsim k$, 
    the desired (\textcolor{black}{33}) (in the paper) follows.
    Some simple algebra finds that (\textcolor{black}{33}) (in the paper) implies (\textcolor{black}{34}) (in the paper). The proof is complete. \hfill $\square$ 

    \subsection{Proofs of technical lemmas in generative case (Lemmas 13-14)}

\subsubsection{\textbf{Proof of Lemma 13}} We first establish (\ref{lem131}).
    Note that (\ref{(3)}) in the proof of Lemma 11 implies that, for  fixed $\bm{s}_1,\bm{s}_2\in \mathbb{R}^n$ and any $t>0$, with probability at least $1-2\exp(-cm\min\{t,t^2\})$ we have 
    \begin{equation}
        \label{(13)}\big|\bm{s}_1^\top(\bm{\widetilde{S}}_{in}-\bm{xx}^\top)\bm{s}_2\big|\lesssim 2t \cdot\|\bm{s}_1\|_2\|\bm{s}_2\|_2.
    \end{equation}
    Then a union bound   gives that, (\ref{(13)}) holds for all $(\bm{s}_1,\bm{s}_2)\in\mathcal{S}_1\times \mathcal{S}_2$ with probability at least $1-2|\mathcal{S}_1||\mathcal{S}_2|\exp(-cm\min\{t,t^2\})$. Now we set $t= C\big(\frac{\log(|\mathcal{S}_1|\cdot|\mathcal{S}_2|)}{m}+\sqrt{\frac{\log(|\mathcal{S}_1|\cdot|\mathcal{S}_2|)}{m}}\big)$ with large enough $C$, and because $m\gtrsim \log(|\mathcal{S}_1|\cdot|\mathcal{S}_2|)$, we obtain that with probability at least $1-2\exp(-\Omega(\log|\mathcal{S}_1|+\log|\mathcal{S}_2|))$,  (\ref{(13)}) with $t\leq C_1\sqrt{\frac{\log(|\mathcal{S}_1|\cdot|\mathcal{S}_2|)}{m}}$ holds for all $(\bm{s}_1,\bm{s}_2)\in\mathcal{S}_1\times \mathcal{S}_2$, as desired.

    Then, we show ``$\|\bm{\widetilde{S}}_{in}-\bm{xx}^\top\|_{op}\lesssim \frac{n}{m}$'' with high probability. Fortunately, this can be implied by Lemma 11. Specifically, we set $k=n$ in Lemma 11, then $S=[n]$, and hence it gives that with probability at least $1-2\exp(-n)$ we have $\|\bm{\widetilde{S}}_{in}-\bm{xx}^\top\|_{op}\leq \sqrt{\frac{n}{m}}+\frac{n}{m}$. The result   follows since     $m<n$. \hfill $\square$ 
    
    \subsubsection{\textbf{Proof of Lemma 14}} This is a simple outcome of Bernstein's inequality, which we apply similarly to (\ref{(3)}) in the proof of Lemma 11. Suppose that $\bm{p},\bm{q},\bm{u},\bm{v}$ are non-zero with no loss of generality. Then, because $\|\bm{p}^\top\bm{\widetilde{A}}_i\bm{q}\|_{\psi_2}=O(\|\bm{p}\|_2\|\bm{q}\|_2)$, $\|\bm{u}^\top\bm{\widetilde{A}}_i\bm{v}\|_{\psi_2}=O(\|\bm{u}\|_2\|\bm{v}\|_2)$, (\ref{(3)}) remains valid if $y_i=\bm{x}^\top\bm{A}_i\bm{x}$ (therein) is replaced with $\bm{p}^\top\bm{\widetilde{A}}_i\bm{q}/(\|\bm{p}\|_2\|\bm{q}\|_2)$ in the current setting, $\bm{u}^\top\bm{A}_i\bm{v}$ (therein) is replaced with $\bm{u}^\top\bm{\widetilde{A}}_i\bm{v}/(\|\bm{u}\|_2\|\bm{v}\|_2)$ in the current setting, which yields  the desired claim with probability at least $1-2\exp(-cm\min\{t,t^2\})$. The proof is complete by noting the additional assumption of $t\in (0,1)$. \hfill $\square$

\end{multicols*}

\end{document}